\let\chapter\undefined
\pgfplotsset{compat=newest}
\def\bf{\normalfont\bfseries}
\newcommand{\OPT}{\textsc{Opt}}
\newcommand{\PF}{\textsc{Pf}}
\newcommand{\FindOpt}{\textsc{FindOpt}}
\newcommand{\FindBounds}{\textsc{FindBounds}}
\newcommand{\FindLower}{\textsc{FindLower}}
\newcommand{\deficit}{\delta}
\newcommand{\variable}[1]{{\mathtt{#1}}}
\newcommand{\NMO}{\text{$(N,M,\deficit)$}}
\definecolor{LightGray}{gray}{0.85}
\definecolor{MedGray}{gray}{0.75}
\newcommand{\desifct}{{d}}
\DeclareMathOperator*{\argmin}{\arg\!\min}
\newtheorem{claim}{Claim}
\newtheorem{lp}{Linear Program}
\theoremstyle{plain}
\newtheorem{theorem}{Theorem}
\newtheorem{lemma}{Lemma}
\newtheorem{proposition}{Proposition}
\newtheorem{corollary}{Corollary}
\newtheorem{fact}{Fact}
\theoremstyle{definition}
\newtheorem{definition}{Definition}
\newtheorem{example}{Example}
\theoremstyle{remark}
\newtheorem{remark}{Remark}
\newcommand{\ThmASPEquivLinConStatement}{Given a setting $(N,M)$, a bound $\varepsilon \in [0,1]$, and a mechanism $\varphi$, the following are equivalent:
\begin{enumerate}
	\setlength{\itemsep}{0pt}
	\item \label{item:ThmASPEquivLinCon:eASP} $\varphi$ is $\varepsilon$-approximately strategyproof in $(N,M)$.
	\item \label{item:ThmASPEquivLinCon:fin_lin_con} For any agent $i\in N$, any preference profile $(P_i,P_{-i}) \in \mathcal{P}^N$, any misreport $P_i' \in \mathcal{P}$, and any rank $r \in \{1,\ldots,m\}$, we have
\begin{equation}
	\sum_{j \in M : \text{rank}_{P_i}(j) \leq r} \varphi_{j}(P_i',P_{-i}) - \varphi_{j}(P_i,P_{-i}) \leq \varepsilon.
\label{eq:finite_linear_incentive_constraint}
\end{equation}
\end{enumerate}}
\newcommand{\ThmASPEquivLinConProof}{%
Fix an agent $i$, a preference profile $(P_i,P_{-i})$, and a misreport $P_i'$.
The admissible set of utility functions for agent $i$ is $U_{P_i}$, i.e., all the utilities $u_i : M \rightarrow [0,1]$ for which $u_i(j) \geq u_i(j')$ whenever $P_i: j \succeq j'$.
Let $B^{\{0,1\}}(P_i)$ denote the set of \emph{binary utilities associated with $P_i$}, i.e.,
\begin{equation}
	B^{\{0,1\}}(P_i) = \left\{ u : M \rightarrow \{0,1\}~\left|~ u(j) \geq u(j') \text{ whenever } P_i: j \succeq j' \right. \right\}.
\end{equation}

We first show the direction ``$\Leftarrow$,'' i.e., the condition \ref{item:ThmASPEquivLinCon:fin_lin_con} implies $\varepsilon$-approximate strategyproofness (\ref{item:ThmASPEquivLinCon:eASP}).
Let $u \in B^{\{0,1\}}(P_i)$, then the incentive constraint (\ref{EQ:INCENTIVE_CONSTRAINTS_APPROXIMATE_STRATEGYPROOFNESS}) for this particular utility function has the form
\begin{eqnarray}
		\varepsilon\left(u,\left(P_i,P_{-i}\right),P_i',\varphi\right) & = & \sum_{j \in M} u(j) \cdot \left( \varphi_j(P_i',P_{-i}) - \varphi_j(P_i,P_{-i}) \right) \\
			& = & \sum_{j \in M : u(j) = 1} \varphi_j(P_i',P_{-i}) - \varphi_j(P_i,P_{-i}) \\
			& = & \sum_{j \in M : \text{rank}_{P_i}(j) \leq r} \varphi_j(P_i',P_{-i}) - \varphi_j(P_i,P_{-i})
\end{eqnarray}
for some $k \in \{1,\ldots,m\}$.
By (\ref{eq:finite_linear_incentive_constraint}) from the condition \ref{item:ThmASPEquivLinCon:fin_lin_con}, this term is always upper bounded by $\varepsilon$.
By Lemma \ref{lem:type_in_conv_hull_of_binary_utilities}, $U_{P_i} \subseteq \textit{Conv}\left(B^{\{0,1\}}(P_i)\right)$, which means that any $u_i \in U_{P_i}$ that represents the preference order $P_i$ can be written as a convex combination of utility functions in $B^{\{0,1\}}(P_i)$, i.e.,
\begin{equation}
	u_i = \sum_{l = 1}^L \alpha_l u^l
\end{equation}
for $u^l \in B^{\{0,1\}}(P_i)$ and $\alpha_l \geq 0$ for all $l\in \{1,\ldots,L\}$ and $\sum_{l = 1}^L \alpha_l = 1$.
By linearity of the incentive constraint (\ref{EQ:INCENTIVE_CONSTRAINTS_APPROXIMATE_STRATEGYPROOFNESS}) we get that
\begin{eqnarray}
	\varepsilon\left(u_i,\left(P_i,P_{-i}\right),P_i',\varphi\right) & = & \sum_{l=1}^L \alpha_l \varepsilon\left(u^l,\left(P_i,P_{-i}\right),P_i',\varphi\right) \\
		&\leq& \sum_{l=1}^L \alpha_l \varepsilon = \varepsilon.
\end{eqnarray}
This proves the direction ``$\Leftarrow$.''

Next, we prove the direction ``$\Rightarrow$''.
Towards contradiction, assume that the constraint (\ref{eq:finite_linear_incentive_constraint}) is violated for some $k \in \{1,\ldots,m\}$, i.e.,
\begin{equation}
	\sum_{j \in M : \text{rank}_{P_i}(j) \leq k} \varphi_{j}(P_i',P_{-i}) - \varphi_{j}(P_i,P_{-i}) = \varepsilon + \delta
\end{equation}
with $\delta > 0$.
Let $u \in B^{\{0,1\}}(P_i)$ be the binary utility function with
\begin{equation}
	u(j) = \left\{ \begin{array}{ll} 1, & \text{ if } \text{rank}_{P_i}(j) \leq k, \\ 0, & \text{ else}. \end{array} \right.
\end{equation}
Then
\begin{equation}
	\varepsilon\left(u,\left(P_i,P_{-i}\right),P_i',\varphi\right) = \varepsilon + \delta.
\end{equation}
Choose any utility function $u' \in U_{P_i}$ and let $\beta = \frac{\delta/2}{\varepsilon+\delta+1}$.
The utility function constructed by $\tilde{u} = (1-\beta)u + \beta u'$ represents $P_i$ and we have
\begin{eqnarray}
	\varepsilon \left(\tilde{u},\left(P_i,P_{-i}\right),P_i',\varphi\right) & = & (1-\beta)\varepsilon\left(u,\left(P_i,P_{-i}\right),P_i',\varphi\right) + \beta \varepsilon\left(u',\left(P_i,P_{-i}\right),P_i',\varphi\right) \\
		& \geq & (1-\beta)(\varepsilon + \delta) - \beta \\
		& = & -\beta(\varepsilon+\delta+1) + (\varepsilon+\delta) = \varepsilon + \delta/2,
\end{eqnarray}
since the change in utility from manipulation is lower bounded by $-1$.
Thus, the $\varepsilon$-approximate strategyproofness constraint is violated (for the utility function $\tilde{u}$, a contradiction.
\begin{lemma}
\label{lem:type_in_conv_hull_of_binary_utilities}
For any preference order $P \in \mathcal{P}$ define the set of \emph{binary utilities associated with $P$} by
\begin{equation}
	B^{\{0,1\}}(P) = \left\{ u : M \rightarrow \{0,1\}~\left|~ u(j) \geq u(j') \text{ whenever } P:j \succeq j' \right. \right\}.
\end{equation}
Then $U_P \subseteq \textit{Conv}\left(B^{\{0,1\}}(P)\right)$, where $\textit{Conv}(\cdot)$ denotes the convex hull of a set.
\end{lemma}
\begin{proof}[Proof of Lemma \ref{lem:type_in_conv_hull_of_binary_utilities}]
First, suppose that the preference order $P$ is strict, i.e., $P: j \succ j'$ or $P: j' \succ j$ for all $j \neq j'$, and without loss of generality,
\begin{equation}
	P: j_1 \succ j_2 \succ \ldots \succ j_m.
\end{equation}
In this case, $B^{\{0,1\}}(P)$ consists of all the functions $u^r:M\rightarrow \{0,1\}$ with
\begin{equation}
	u^k(j_r) = \left\{ \begin{array}{ll} 1, & \text{ if }r \leq k, \\ 0, & \text{ else}.\end{array} \right.
\end{equation}
With $k \in {0,\ldots,m}$.
Let $\Delta u(0) = 1-u(j_1)$, $\Delta u(k) = u(j_k) - u(j_{k+1})$ for all $k \in \{1,\ldots,m-1\}$, and $\Delta u(m) = u(j_m)$. Then represent $u$ by
\begin{equation}
	u(j_r) = \sum_{k = r}^{m} \Delta u(k).
\end{equation}
Now we construct the utility function
\begin{equation}
	\tilde{u} = \sum_{k = 0}^m \Delta u(k) \cdot u^k.
\end{equation}
First note that $\sum_{k=0}^m \Delta u(k) = 1$ and $\Delta u(k) \geq 0 $ for all $k$, so that $\tilde{u}$ is a convex combination of elements of $B^{\{0,1\}}(P)$.
Furthermore, for any $j_r \in M$, we have that
\begin{eqnarray}
	\tilde{u}(j_r) & = & \sum_{k = 0}^m \Delta u(k) \cdot u^k(j_r) \\
		& = & \sum_{k = r}^m \Delta u(k) = u (j_r), \\
\end{eqnarray}
i.e., $u = \tilde{u}$.
This establishes the Lemma for strict preference orders.
 
For arbitrary preference orders (i.e., with indifferences) the proof can be easily extended by combining all alternatives about which an agent with preference order $P$ is indifferent into a single virtual alternative.
Then we apply the proof for the strict case.
The utility functions in $B^{\{0,1\}}(P)$ will be exactly those that put equal value on the groups of alternatives between which an agent with preference order $P$ is indifferent.
This concludes the proof of the Lemma.
\end{proof}
This concludes the proof of Theorem \ref{THM:ASP_EQUIV_LINEAR_CONSTRAINTS}.}
\newcommand{\PropOptimalMechanismsNotEmptyStatement}{
Given a problem $(N,M,\deficit)$ and a manipulability bound $\varepsilon \in [0,1]$, there exists at least one mechanism that is optimal at $\varepsilon$.%
}
\newcommand{\PropOptimalMechanismsNotEmptyProof}{
A strategyproof mechanism always exists (e.g., the constant mechanism), and any strategyproof mechanism is also $\varepsilon$-approximately strategyproof. 
Thus, the set of candidates for $\OPT(\varepsilon)$ is never empty. 
Since the deficit of any mechanism is upper bounded by $1$, we get $\deficit(\varepsilon) \leq 1$ for all $\varepsilon \in [0,1]$.
Next, for some $\varepsilon \in [0,1]$ let
\begin{equation}
\deficit(\varepsilon) = \inf \left\{ \deficit(\varphi)~|~\varphi\ \varepsilon\text{-approximately strategyproof} \right\}.
\end{equation}
By definition (of the infimum), there exists a sequence of mechanisms $(\varphi_k)_{k \geq 1}$ such that $\deficit(\varphi^k) \rightarrow \deficit(\varepsilon)$ as $k \rightarrow \infty$.
Since all $\varphi_k$ are mechanisms, they are uniformly bounded functions from the finite set $\mathds{P}^N$ of preference profiles to the compact set $\Delta(M)$.
Therefore, some sub-sequence of $(\varphi_k)_{k \geq 1}$ converges uniformly to a limit $\tilde{\varphi}$, which must itself be a mechanism. 

By Theorem \ref{THM:ASP_EQUIV_LINEAR_CONSTRAINTS}, $\varepsilon$-approximate strategyproofness is equivalent to a finite set of weak, linear inequalities. 
Since these are all satisfied by the elements of the sequence $(\varphi_k)_{k \geq 1}$, they are also satisfied by $\tilde{\varphi}$, i.e., $\tilde{\varphi}$ is an $\varepsilon$-approximately strategyproof mechanism. 
The deficit of a mechanism $\varphi$ is either a maximum (worst-case deficit) or a weighted average (ex-ante deficit) of a finite set of values, but in any case, it is a uniformly continuous projection of $\varphi$ onto the compact interval $[0,1]$. 
Thus, by uniform convergence, the deficit of $\deficit(\tilde{\varphi})$ must be the limit of the sequence of deficits $(\deficit(\varphi_k))_{k \geq 1}$, i.e., $\deficit(\varepsilon)$, which yields $\tilde{\varphi} \in \OPT(\varepsilon)$.}
\newcommand{\ThmHybridGuaranteesStatement}{
Given a problem $(N,M,\deficit)$, mechanisms $\varphi,\psi$, and $\beta \in [0,1]$, we have
\begin{eqnarray}
	\varepsilon(h_{\beta}) & \leq &  (1-\beta) \varepsilon(\varphi) + \beta \varepsilon(\psi), \label{eq:guarantees_manip_bound} \\
	\deficit(h_{\beta}) & \leq &  (1-\beta) \deficit(\varphi) + \beta \deficit(\psi). \label{eq:guarantees_deficit_bound}
\end{eqnarray}
}
\newcommand{\ThmHybridGuaranteesProof}{
The manipulability of a mechanism is determined by the maximum manipulability across all agents $i\in N$, preference profiles $(P_i,P_{-i})\in \mathcal{P}^N$, and misreports $P_i'\in \mathcal{P}$.
Using the triangle inequality for this $\max$-operator, we get that
\begin{eqnarray}
	\varepsilon(h_{\beta}) & = & \max \varepsilon\left(u,\left(P_i,P_{-i}\right),P_i',h_{\beta}\right) \\
		& = & \max \sum_{j \in M} u_i(j) \cdot \left( (h_{\beta})_j(P_i',P_{-i}) - (h_{\beta})_j(P_i,P_{-i}) \right) \\
		& \leq & (1-\beta) \max \sum_{j \in M} u_i(j) \cdot \left( \varphi_j(P_i',P_{-i}) - \varphi_j(P_i,P_{-i}) \right) \\
		& & + \beta \max \sum_{j \in M} u_i(j) \cdot \left( \psi_j(P_i';P_{-i}) - \psi_j(P_i,P_{-i}) \right) \\
		& = & (1-\beta) \varepsilon(\varphi) + \beta \varepsilon(\psi).
\end{eqnarray}
For worst-case deficit, observe that the extended desideratum function for random outcomes is linear in the first argument, i.e., $\desifct((1-\beta) x + \beta y, \bm P) = (1-\beta) \desifct(x,\bm P) + \beta \desifct(y,\bm P)$ for any outcomes $x,y$ preference profile $\bm P$. 
Thus, 
\begin{equation}
	\deficit(h_{\beta}) = \max \deficit\left(h_{\beta}(\bm P), \bm P\right) \leq (1-\beta) \varepsilon(\varphi) + \beta \varepsilon(\psi)
\end{equation}
follows analogously. 
Since the ex-ante deficit is simply a single linear function of the mechanism, we get exact equality, i.e., $	\deficit(h_{\beta}) = (1-\beta) \varepsilon(\varphi) + \beta \varepsilon(\psi)$.
}
\newcommand{\CorMonotonicConvexStatement}{%
Given a problem \NMO, the mapping $\varepsilon \mapsto \deficit(\varepsilon)$ 
is monotonically decreasing and convex. }
\newcommand{\CorMonotonicConvexProof}{%
Monotonicity follows (almost) by definition: 
observe a mechanism $\varphi \in \OPT(\varepsilon)$ is also a candidate for $\OPT(\varepsilon')$ for any larger manipulability bound $\varepsilon' \geq \varepsilon$. 
The only reason for $\varphi$ to not be optimal at $\varepsilon'$ is that some other $\varepsilon'$-approximately strategyproof mechanism has strictly lower deficit. 
Thus, the mapping $\varepsilon\mapsto\deficit(\varepsilon)$ is weakly monotonic and decreasing. 

To see convexity, assume towards contradiction that the mapping is not convex. 
Then there must exist bounds $\varepsilon, \varepsilon' \in [0,1]$ and $\beta \in [0,1]$, such that for $\varepsilon_{\beta} = (1-\beta) \varepsilon + \beta \varepsilon'$, we have 
\begin{equation}
	\deficit(\varepsilon_{\beta}) > (1-\beta) \deficit(\varepsilon) + \beta \deficit(\varepsilon'). 
\end{equation}
Let $\varphi\in\OPT(\varepsilon)$ and $\varphi'\in\OPT(\varepsilon')$ and consider the hybrid $h_{\beta} = (1 - \beta) \varphi + \beta \varphi'$. 
By Theorem \ref{THM:HYBRID_GUARANTEES}, this hybrid has a manipulability of at most $\varepsilon_{\beta}$ and a deficit of at most $(1-\beta) \deficit(\varepsilon) + \beta \deficit(\varepsilon')$. 
Thus, 
\begin{equation}
	\deficit(\varepsilon_{\beta}) \leq \deficit(h_{\beta}) \leq (1-\beta) \deficit(\varepsilon) + \beta \deficit(\varepsilon') < \deficit(\varepsilon_{\beta}), 
\end{equation}
a contradiction.
}
\newcommand{\ThmFrontierStructureStatement}{%
Given a problem $(N,M,\deficit)$, there exists a finite set of \emph{supporting manipulability bounds}
\begin{equation}
	\varepsilon_0 = 0 < \varepsilon_1 < \ldots < \varepsilon_{K-1} < \varepsilon_K = \bar{\varepsilon},
\end{equation}
such that for any $k \in \{1,\ldots,K\}$ and any $\varepsilon \in [\varepsilon_{k-1},\varepsilon_{k}]$ with $\beta = \frac{\varepsilon - \varepsilon_{k-1}}{\varepsilon_k - \varepsilon_{k-1}}$ we have that 
\begin{eqnarray}
	&\OPT(\varepsilon) = 
	(1-\beta) \OPT(\varepsilon_{k-1}) + \beta \OPT(\varepsilon_k), &\\
	&\deficit(\varepsilon) = 
	(1-\beta) \deficit(\varepsilon_{k-1}) + \beta \deficit(\varepsilon_k).& 
\end{eqnarray}}
\newcommand{\ThmFrontierStructureProof}{%
From Section \ref{SEC:OPTIMAL} we know that for each $\varepsilon \in [0,\bar{\varepsilon}]$ we can write $\OPT(\varepsilon)$ as the set of solutions to a linear program, i.e.,
\begin{eqnarray}
	\OPT(\varepsilon) = & &  \argmin_{x} \left\langle v, x \right\rangle \\
		\text{s.t.} & & D x \leq d, \\
			& & A x \leq \mathbf{\varepsilon},
\end{eqnarray}
where $D$ and $A$ are matrices, $v$ and $d$ are vectors, $\mathbf{\varepsilon}$ is a vector with all entries equal to $\varepsilon$, and $x$ is a vector of variables of dimension $L$.
Observe that $\varepsilon$ enters the constraints only as the upper bound in a number of linear inequalities.
The proof utilizes this characterization of $\OPT(\varepsilon)$.

Before we proceed with the proof of Theorem \ref{THM:FRONTIER_STRUCTURE}, we require a number of definitions.
We denote by $F_{\varepsilon}$ the set of \emph{feasible points at $\varepsilon$}, i.e.,
\begin{equation}
	F_{\varepsilon} = \left\{ x~|~Dx \leq d, Ax \leq \varepsilon \right\},
\end{equation}
and we denote by $S_{\varepsilon}$ the set of \emph{solutions at $\varepsilon$}, i.e.,
\begin{equation}
	S_{\varepsilon} = \argmin_{x \in F_{\varepsilon}} \left\langle v , x \right\rangle.
\end{equation}
A constraint is a row $C_l$ of either the matrix $A$ or the matrix $D$ with the corresponding bound $c_l$ equal to respective entry of $d$ or $\mathbf{\varepsilon}$.
A feasible point $x \in F_{\varepsilon}$ is an \emph{extreme point of $F_{\varepsilon}$} if there exist $L$ independent constraints $C_1, \ldots, C_L$ such that
\begin{equation}
	C_l x = c_l
\end{equation}
for all $l \in \{ 1,\ldots,L\}$, i.e., the constraints are satisfied with equality at $x$.
$x$ is then said to be an extreme point of $F_{\varepsilon}$ \emph{with respect to $(C_1,\ldots,C_L)$}.
We say that the set of constraints $(C_1, \ldots, C_L)$ is \emph{restrictive at $\varepsilon$} if they are independent and there exists an extreme point in $F_{\varepsilon}$ with respect to these constraints.
Let
\begin{equation}
	\mathcal{R}(\varepsilon) = \left\{ (C_1,\ldots,C_L)~|~(C_1, \ldots, C_L) \text{ is restrictive at } \varepsilon\right\}
\end{equation}
be the set of all sets of constraints that are restrictive at $\varepsilon$.
A set of restrictive constraints $\bm C \in \mathcal{R}(\varepsilon)$ is \emph{binding at $\varepsilon$} if the extreme point $x \in F_{\varepsilon}$ where the constraints of $\bm C$ are satisfied with equality is a solution, i.e., $x \in S_{\varepsilon}$.
Let
\begin{equation}
	\mathcal{B}(\varepsilon) = \left\{ \bm C \in \mathcal{R}(\varepsilon)~|~\bm C \text{ is binding at } \varepsilon\right\}
\end{equation}
be the set of all sets of constraints that are binding at $\varepsilon$.
We denote by $E(F_{\varepsilon})$ and $E(S_{\varepsilon})$ the extreme points of $F_{\varepsilon}$ and $S_{\varepsilon}$, respectively.

Observe that since $F_{\varepsilon}$ and $S_{\varepsilon}$ are polytopes and bounded by finitely many hyperplanes, the extreme points $E(F_{\varepsilon})$ and $E(S_{\varepsilon})$ form minimal $\mathcal{V}$-representations of $F_{\varepsilon}$ and $S_{\varepsilon}$ (see, e.g., p.51ff in \citep{Gruenbaum2003ConvexPolytopesTEXTBOOK}).
Thus, $\textit{Conv}(E(F_{\varepsilon})) = F_{\varepsilon}$ and $\textit{Conv}(E(S_{\varepsilon})) = S_{\varepsilon}$.
Furthermore, since $S_{\varepsilon} \subseteq F_{\varepsilon}$, any extreme point of $S_{\varepsilon}$ is also an extreme point of $F_{\varepsilon}$, i.e., $E(S_{\varepsilon}) \subseteq E(F_{\varepsilon})$.
Finally, each extreme point is uniquely determined by the set of constraints with respect to which it is extreme, i.e., if there exists an extreme point with respect to a set of constraints $\bm C \in \mathcal{R}(\varepsilon)$, then this point is unique.

\begin{claim}
\label{claim:path_efficient_frontier.1}
For $\varepsilon_0, \varepsilon_1 \in [0,\bar{\varepsilon}]$ with $\varepsilon_0 < \varepsilon_1$, if $x_0 \in F_{\varepsilon_0}$ and $x_1 \in F_{\varepsilon_1}$, then for any $\gamma \in [0,1]$ and $\varepsilon = (1-\gamma) \varepsilon_0 + \gamma\varepsilon_1$ we have that
\begin{equation}
	x = (1-\gamma) x_0 + \gamma x_1 \in F_{\varepsilon}.
\end{equation}
\end{claim}
\begin{proof} By assumption, $D_k x_0 \leq d_k$ and $D_k x_1 \leq d_k$ for all $k$.
Thus,
\begin{equation}
	D_k((1-\gamma) x_0 + \gamma x_1) = (1-\gamma) D_k x_0 + \gamma D_k x_1 \leq d_k.
\end{equation}
Furthermore, $A_k x_0 \leq \varepsilon_0$ and $A_k x_1 \leq \varepsilon_1$ for all $k$, which implies
\begin{equation}
	A_k((1-\gamma) x_0 + \gamma x_1) = (1-\gamma) A_k x_0 + \gamma A_k x_1 \leq (1-\gamma) \varepsilon_0 + \gamma \varepsilon_1  = \varepsilon.
\end{equation}
\end{proof}

\begin{claim}
\label{claim:path_efficient_frontier.2}
For $\varepsilon_0, \varepsilon_1 \in [0,\bar{\varepsilon}]$ with $\varepsilon_0 < \varepsilon_1$, if $\bm C \in \mathcal{R}(\varepsilon_0)$ and $\bm C\in \mathcal{R}(\varepsilon_1)$, then for any $\gamma\in [0,1]$ and $\varepsilon = (1-\gamma) \varepsilon_0 + \gamma \varepsilon_1$ we have that $\bm C \in \mathcal{R}\left(\varepsilon\right)$, and the $\gamma$-convex combination of the extreme points at $\varepsilon_0$ and $\varepsilon_1$ with respect to $\bm C$ is the unique extreme point at $\varepsilon$ with respect to $\bm C$.
\end{claim}
\begin{proof} Since $\bm C \in \mathcal{R}(\varepsilon_0)$ and $\bm C\in \mathcal{R}(\varepsilon_1)$, there exist unique extreme points $x_0 \in E(F_{\varepsilon_0})$ and $x_1 \in E(F_{\varepsilon_0})$ with respect to $\bm C$.
By Claim \ref{claim:path_efficient_frontier.1}, the point $x = (1-\gamma) x_0 + \gamma x_1$ is feasible at $\varepsilon$.
For any $l \in \{1,\ldots,L\}$ if $C_l = D_k$ for some $k$ we have that
\begin{equation}
	C_l x = (1-\gamma) D_k x_0 + \gamma D_k x_1 = (1-\gamma) d_k + \gamma d_k = d_k = c_l,
\end{equation}
i.e., $x$ satisfies the constraint $C_l$ with equality.
If $C_l = A_k$, then the constraint is also satisfied with equality, since
\begin{equation}
	C_l x = (1-\gamma) A_k x_0 + \gamma A_k x_1 = (1-\gamma) \varepsilon_0 + \gamma \varepsilon_1 = \varepsilon.
\end{equation}
Consequently, $x$ is the unique extreme point at $\varepsilon$ with respect to $\bm C$.
This in turn implies that $\bm C$ is restrictive at $\varepsilon$.
\end{proof}

\begin{claim}
\label{claim:path_efficient_frontier.3}
There exists a finite decomposition
\begin{equation}
	0 = \varepsilon_0 < \varepsilon_1 < \ldots < \varepsilon_K = \bar{\varepsilon}
\end{equation}
of the interval $[0,\bar{\varepsilon}]$, such that on each interval $[\varepsilon_{k-1},\varepsilon_{k}]$ we have that $\mathcal{R}(\varepsilon) = \mathcal{R}(\varepsilon')$ for all $\varepsilon,\varepsilon' \in [\varepsilon_{k-1},\varepsilon_{k}]$.
\end{claim}
\begin{proof}
By Claim \ref{claim:path_efficient_frontier.2}, if some set of $L$ independent constraints $\bm C$ is restrictive at some $\varepsilon\in [0,\bar{\varepsilon}]$, then the set of $\varepsilon' \in [0,\bar{\varepsilon}]$ where $\bm C$ is also restrictive must be compact interval $[\varepsilon^-_{\bm C},\varepsilon^+_{\bm C}] \subseteq [0,\bar{\varepsilon}]$ with $\varepsilon \in [\varepsilon^-_{\bm C},\varepsilon^+_{\bm C}]$.
Since there is a finite number of constraints, there is also a finite number of constraint sets $\bm C$.
Consider the set
\begin{equation}
	\{\varepsilon_0, \ldots,\varepsilon_K\} = \bigcup_{\bm C\text{ set of }L\text{ indep. constraints}}\{ \varepsilon_{\bm C}^-,\varepsilon_{\bm C}^+ \}.
\end{equation}
Observe that by construction, a set of $L$ independent constraints ${\bm C}$ becomes restrictive or un-restrictive only at one of the finitely many $\varepsilon_{k-1}$.
This proves the claim.
\end{proof}

\begin{claim}
\label{claim:path_efficient_frontier.4}
On each interval $[\varepsilon_{k-1},\varepsilon_{k}]$ from Claim \ref{claim:path_efficient_frontier.3} and for any $\varepsilon \in [\varepsilon_{k-1},\varepsilon_{k}]$, if ${\bm C} \in \mathcal{R}(\varepsilon)$, then ${\bm C}\in \mathcal{R}(\varepsilon_{k-1}) \cap \mathcal{R}(\varepsilon_{k})$.
\end{claim}
\begin{proof} Assume towards contradiction that ${\bm C} \in \mathcal{R}(\varepsilon)$, but ${\bm C} \notin \mathcal{R}(\varepsilon_{k-1})$.
Then there exists an $\varepsilon' \in (\varepsilon_{k-1},\varepsilon) \subseteq (\varepsilon_{k-1},\varepsilon_{k}) $, where ${\bm C}$ become restrictive for the first time.
Then $\varepsilon' \in \{\varepsilon_0,\varepsilon_{k-1}\}$, and therefore $[\varepsilon_{k-1}, \varepsilon_{k}]$ cannot be one of the intervals in the decomposition.
Instead, it would be split by $\varepsilon'$, a contradiction.
\end{proof}

\begin{claim}
\label{claim:path_efficient_frontier.5}
On each interval $[\varepsilon_{k-1},\varepsilon_{k}]$ from Claim \ref{claim:path_efficient_frontier.3} and for any $\gamma \in [0,1]$ with $\varepsilon = (1-\gamma) \varepsilon_{k-1} + \gamma \varepsilon_{k}$ we have that
\begin{equation}
	F_{\varepsilon} = (1-\gamma) F_{\varepsilon_{k-1}} + \gamma F_{\varepsilon_{k}},
\end{equation}
i.e., the set of feasible points at $\varepsilon$ is equal to the $\gamma$-convex combination of feasible points at $\varepsilon_{k-1}$ and $\varepsilon_{k}$.
\end{claim}
\begin{proof}
By Claim \ref{claim:path_efficient_frontier.1} we have
\begin{equation}
	F_{\varepsilon} \supseteq (1-\gamma) F_{\varepsilon_{k-1}} + \gamma F_{\varepsilon_{k}}.
\end{equation}
By Claim \ref{claim:path_efficient_frontier.4} the extreme points of $F_{\varepsilon}$ are the $\gamma$-convex combinations of extreme points of $F_{\varepsilon_{k-1}}$ and $F_{\varepsilon_{k}}$.
Since $F_{\varepsilon} = \textit{Conv}(E(F_{\varepsilon}))$, this implies
\begin{equation}
	F_{\varepsilon} \subseteq (1-\gamma) F_{\varepsilon_{k-1}} + \gamma F_{\varepsilon_{k}},
\end{equation}
which concludes the proof of the claim.
\end{proof}

\begin{claim}
\label{claim:path_efficient_frontier.6}
On each interval $[\varepsilon_{k-1},\varepsilon_{k}]$ from Claim \ref{claim:path_efficient_frontier.3}, if ${\bm C} \in \mathcal{B}(\varepsilon)$ for some $\varepsilon \in (\varepsilon_{k-1},\varepsilon_{k})$, then ${\bm C} \in \mathcal{B}(\varepsilon_{k-1}) \cap \mathcal{B}(\varepsilon_{k})$.
Furthermore, the extreme point of $S_{\varepsilon}$ with respect to ${\bm C}$ is the $\gamma$-convex combination of the extreme points of $S_{\varepsilon_{k-1}}$ and $S_{\varepsilon_{k}}$ with respect to ${\bm C}$ with $\gamma = \frac{\varepsilon - \varepsilon_{k-1}}{\varepsilon_{k} - \varepsilon_{k-1}}$.
\end{claim}
\begin{proof} Since ${\bm C} \in \mathcal{B}(\varepsilon)$, there exists a unique extreme point $x \in E(S_{\varepsilon})$ such that $x$ is extreme at $\varepsilon$ with respect to ${\bm C}$.
	By Claim \ref{claim:path_efficient_frontier.5}, we can represent $x = (1-\gamma) x_0 + \gamma x_1$ with $x_0 \in F_{\varepsilon_{k-1}}$, $x_1 \in F_{\varepsilon_{k}}$.
	By Claim \ref{claim:path_efficient_frontier.2}, $x_0$ and $x_1$ are extreme points of $F_{\varepsilon_{k-1}}$ and $F_{\varepsilon_{k}}$, respectively.
	
	Suppose towards contradiction that $x_0 \notin S_{\varepsilon_{k-1}}$.
	Then there exists $x_0' \in S_{\varepsilon_{k-1}}$ such that
	\begin{equation}
		\left\langle v, x_0' \right\rangle < \left\langle v, x_0 \right\rangle.
	\end{equation}
	By Claim \ref{claim:path_efficient_frontier.1}, $x' = (1-\gamma) x_0' + \gamma x_1$ is in $F_{\varepsilon}$ and
	\begin{eqnarray}
		\left\langle v, x' \right\rangle & = & (1-\gamma) \left\langle v, x_0' \right\rangle + \gamma \left\langle v, x_1 \right\rangle \\
			& < & (1-\gamma) \left\langle v, x_0 \right\rangle + \gamma \left\langle v, x_1 \right\rangle \\
			& = & \left\langle v, x \right\rangle,
	\end{eqnarray}
	i.e., $x'$ is feasible at $\varepsilon$ and has lower objective than $x$.
	This contradicts the assumption that $x \in S_{\varepsilon}$.
	A similar argument yields $x_1 \in S_{\varepsilon_{k}}$, which concludes the proof of the claim.
\end{proof}

\begin{claim}
\label{claim:path_efficient_frontier.7}
On each interval $[\varepsilon_{k-1},\varepsilon_{k}]$ from Claim \ref{claim:path_efficient_frontier.3} and any $\gamma \in [0,1]$ with $\varepsilon = (1-\gamma) \varepsilon_{k-1} + \gamma \varepsilon_{k}$ we have that
\begin{equation}
	S_{\varepsilon} = (1-\gamma) S_{\varepsilon_{k-1}} + \gamma S_{\varepsilon_{k}},
\end{equation}
i.e., the set of solutions at $\varepsilon$ is equal to the $\gamma$-convex combination of solutions at $\varepsilon_{k-1}$ and $\varepsilon_{k}$.
\end{claim}
\begin{proof} By Claim \ref{claim:path_efficient_frontier.1} we have
\begin{equation}
	S_{\varepsilon} \supseteq (1-\gamma) S_{\varepsilon_{k-1}} + \gamma S_{\varepsilon_{k}} .
\end{equation}
By Claim \ref{claim:path_efficient_frontier.6} the extreme points of $S_{\varepsilon}$ are the $\gamma$-convex combinations of extreme points of $S_{\varepsilon_{k-1}}$ and $S_{\varepsilon_{k}}$.
Since $S_{\varepsilon} = \textit{Conv}(E(S_{\varepsilon}))$, this implies
\begin{equation}
	S_{\varepsilon} \subseteq (1-\gamma) S_{\varepsilon_{k-1}} + \gamma S_{\varepsilon_{k}},
\end{equation}
which concludes the proof of the claim.
\end{proof}
Claim \ref{claim:path_efficient_frontier.7} is the main step in the proof of Theorem \ref{THM:FRONTIER_STRUCTURE}: every solution $x \in S_{\varepsilon}$ corresponds to some optimal mechanism $\varphi \in \OPT(\varepsilon)$.
Furthermore, by the nature of the representation of mechanisms in the linear program, the convex combination of two solutions corresponds to the hybrid mechanism of the two mechanisms. 
Thus, 
\begin{equation}
  S_{\varepsilon} = (1-\gamma) S_{\varepsilon_{k-1}} + \gamma S_{\varepsilon_{k}},
\end{equation}
implies
\begin{equation}
  \OPT(\varepsilon) = (1-\gamma) \OPT(\varepsilon_{k-1}) + \gamma \OPT(\varepsilon_{k}).
\end{equation}
Since the objective value $\variable{d}$ is a variable in the solution, we get 
\begin{equation}
  \delta(\varepsilon) = (1-\gamma) \delta(\varepsilon_{k-1}) + \gamma \delta(\varepsilon_{k}).
\end{equation}}
\newcommand{\PropComputableStatement}{%
Given a problem $(N,M,\deficit)$, the algorithms \FindLower\ and \FindBounds\ require at most $4K + \log_2\left(1/\varepsilon_{1}\right) - 1$ executions of the linear program \FindOpt\ to determine all supporting manipulability bounds of the Pareto frontier, where $K$ is the number of supporting manipulability bounds and $\varepsilon_1$ is the smallest non-trivial supporting manipulability bound.}
\newcommand{\PropComputableProof}{
We defined $\textit{signature}(\varepsilon)$ (or $\textit{sign}(\varepsilon)$ for short) as a function that uses the linear program \FindOpt\ to determine the signature of some mechanism that is optimal at $\varepsilon$, i.e., $(\varepsilon,\deficit(\varepsilon))$. 
Algorithm \FindLower\ initially calls to the function $\textit{sign}$ 3 times, for $0$, $1$, and $1/2$. 
Now suppose that $\varepsilon_1 \in \left[\frac{1}{2^{n}},\frac{1}{2^{n-1}}\right)$ for some $n \geq 1$. 
Then $\textit{sign}$ will be called for $\frac{1}{4},\frac{1}{8},\ldots,\frac{1}{2^{n+1}}$ until $\underline{\varepsilon}=\frac{1}{2^{n+1}}$ is returned. 
Note that while $\underline{\varepsilon}=\frac{1}{2^{n}}$ would have been sufficiently small, the algorithm needs to try $\underline{\varepsilon}=\frac{1}{2^{n+1}}$ to verify this. 
Thus, it takes $n = \log_2(1/\varepsilon_1)$ calls to $\textit{sign}$. 

The remainder of the proof is concerned with the algorithm \FindBounds.
Any \emph{segment} consists of two points $\left(\textit{sign}(\varepsilon),\textit{sign}(\varepsilon')\right)$. 
Initially, there are two \emph{outer} segments $\left(\textit{sign}(0),\textit{sign}(\underline{\varepsilon})\right)$ and $\left(\textit{sign}(1),\textit{sign}(2)\right)$, which are needed to help the algorithm get started.
The algorithm maintains a decomposition of the interval $[\underline{\varepsilon},1]$, which initially consists of a single \emph{unverified} segment.
In every execution of the \texttt{while}-loop, \FindBounds\ selects an unverified segment $\left(\textit{sign}(\varepsilon^-),\textit{sign}(\varepsilon^+)\right)$.
Then it uses the segments to the left and right $\left(\textit{sign}(\varepsilon^{--}),\textit{sign}(\varepsilon^-)\right)$ and $\left(\textit{sign}(\varepsilon^+),\textit{sign}(\varepsilon^{++})\right)$ to ``guess'' the position of a new supporting manipulability bound between $\varepsilon^-$ and $\varepsilon^+$.
This guess $\mathtt{e}$ is the $\varepsilon$-value of the point of intersection of the affine hulls of the two segments, i.e.,
\begin{equation}
	\mathtt{e} = \left(\textit{affine.hull}\left(\left\{\textit{sign}(\varepsilon^{--}),\textit{sign}(\varepsilon^-)\right\}\right) \cap \textit{affine.hull}\left(\left\{\textit{sign}(\varepsilon^{+}),\textit{sign}(\varepsilon^{++})\right\}\right) \right)_{\varepsilon},
\end{equation}
where $\textit{affine.hull}$ denotes the affine hull.
This value is unique and lies inside the open interval $(\varepsilon^-,\varepsilon^+)$ (by Claim \ref{claim:find_hinge.1}).
Now $\mathtt{P} = \textit{sign}(\mathtt{e})$ is computed using the linear program \FindOpt, and one of three cases can occur:
\begin{enumerate}
	\item \label{PropFindBounds.case_1} $\mathtt{P}$ may be equal to the unique point of intersection
		\begin{equation}
			\textit{affine.hull}\left(\left\{\textit{sign}(\varepsilon^{--}),\textit{sign}(\varepsilon^-)\right\}\right) \cap \textit{affine.hull}\left(\left\{\textit{sign}(\varepsilon^{+}),\textit{sign}(\varepsilon^{++})\right\}\right).
		\end{equation}
		In this case, $\mathtt{e}$ is a supporting manipulability bound. 
		Furthermore, the segments 
		$\left(\textit{sign}(\varepsilon^{--}),\textit{sign}(\varepsilon^-)\right)$,
		$\left(\textit{sign}(\varepsilon^{-}),\mathtt{P}\right)$,
		$\left(\mathtt{P},\textit{sign}(\varepsilon^{+})\right)$, and 
		$\left(\textit{sign}(\varepsilon^{+}),\textit{sign}(\varepsilon^{++})\right)$.
		are all part of the signature of the Pareto frontier (by Claim \ref{claim:find_hinge.3}), and there are no other supporting manipulability bound in the interval $(\varepsilon^{--},\varepsilon^{++})$.
		\FindBounds\ marks the four segments as \emph{verified} and includes $\mathtt{e}$ in the collection of supporting manipulability bounds.
	\item \label{PropFindBounds.case_2}$\mathtt{P}$ lies in the affine hull of the segment $\left(\textit{sign}(\varepsilon^{-}),\textit{sign}(\varepsilon^+)\right)$.
		Then by Claim \ref{claim:find_hinge.2},
		\begin{equation}
			\textit{Conv}\left(\left\{\textit{sign}(\varepsilon^-),\mathtt{P}\right\}\right) \cup \textit{Conv}\left(\left\{\mathtt{P},\textit{sign}(\varepsilon^+)\right\}\right)
		\end{equation}
		is part of the signature of the Pareto frontier, and there are no supporting manipulability bounds inside the interval $\left(\varepsilon^-,\varepsilon^+\right)$.
		\FindBounds\ marks the segment $\left(\textit{sign}(\varepsilon^-),\textit{sign}(\varepsilon^+)\right)$ as verified.
	\item \label{PropFindBounds.case_3} In any other case, \FindBounds\ splits the segment $\left(\textit{sign}(\varepsilon^-),\textit{sign}(\varepsilon^+)\right)$ by creating two new unverified segments
		\begin{equation}
			\left(\textit{sign}(\varepsilon^-),\mathtt{P}\right)\text{ and } \left(\mathtt{P},\textit{sign}(\varepsilon^+)\right).
		\end{equation}
\end{enumerate}

We first show correctness of \FindBounds, then completeness:
\begin{description}
	\item[Correctness] \FindBounds\ stops running when there are no more unverified segments.
		Assume towards contradiction that there exists a supporting manipulability bound $\varepsilon \in (\underline{\varepsilon},1)$ that has not been identified.
		Then this supporting manipulability bound lies in some segment $[\varepsilon^-,\varepsilon^+]$ that was verified.
		
		If the verification happened in a case \ref{PropFindBounds.case_1}, Claim \ref{claim:find_hinge.3} ensures that there is no other supporting manipulability bound, i.e., the supporting manipulability bound would have been added to the collection during the analysis of the interval $(\varepsilon^{--},\varepsilon^{++})$.
		
		If the verification happened in a case \ref{PropFindBounds.case_2}, Claim \ref{claim:find_hinge.2} ensures that $\varepsilon \notin (\varepsilon^-,\varepsilon^+)$, so that $\varepsilon = \varepsilon^-$ (without loss of generality we can assume $\varepsilon=\varepsilon^-$, the case $\varepsilon= \varepsilon^+$ is analogous).
		The segment $\left(\textit{sign}(\varepsilon^{--}),\textit{sign}(\varepsilon^{-})\right)$ was not a verified segment at this time, otherwise this would be a case \ref{PropFindBounds.case_1}.
		Thus, at some future step some segment $\left(\textit{sign}(\tilde{\varepsilon}),\textit{sign}(\varepsilon^{-})\right)$ with a right end-point in $\textit{sign}(\varepsilon^{-})$ was verified.
		But at this step $\textit{sign}(\varepsilon^{-})$ was on the intersection of the affine hulls of the segments $\left(\textit{sign}(\tilde{\tilde{\varepsilon}}),\textit{sign}(\tilde{\varepsilon})\right)$ and $\left(\textit{sign}(\varepsilon^-),\textit{sign}(\varepsilon^{+})\right)$.
		This creates a case \ref{PropFindBounds.case_1}, and thus $\varepsilon$ was identified as a supporting manipulability bound in this step. 		
	\item[Completeness] It remains to be shown that \FindBounds\ stops at some point.
		By Claim \ref{claim:find_hinge.4}, for ever two adjacent supporting manipulability bounds $\varepsilon_k,\varepsilon_{k+1}$, \FindBounds\ computes at most three points $\textit{sign}(\varepsilon')$, $\textit{sign}(\varepsilon'')$, $\textit{sign}(\varepsilon''')$ with $\varepsilon',\varepsilon'',\varepsilon''' \in (\varepsilon_k,\varepsilon_{k+1})$.
		Since there is a finite number of supporting manipulability bounds, \FindBounds\ loops at most $3+1=4$ times per interval, which establishes completeness.
		The run-time bound follows by observing that there exist $K-2$ intervals between the smallest non-trivial bound $\varepsilon_1$ and the largest bound $\bar{\varepsilon}$. 
		However, we may need to check the interval $(\hat{\varepsilon},1)$ if $\hat{\varepsilon} <1$. 
		Thus, using $\underline{\varepsilon}$ from \FindLower\, we require at most $4K-4$ executions of \FindOpt\ to run \FindBounds. 
\end{description}

\begin{claim}
\label{claim:find_hinge.1}
$\mathtt{e} \in (\varepsilon^-,\varepsilon^+)$
\end{claim}
\begin{proof}
By convexity of $\varepsilon \mapsto \deficit(\varepsilon)$, we get that $\mathtt{e} \in [\varepsilon^-,\varepsilon^+]$.

Now suppose that $\mathtt{e} = \varepsilon^-$.
Then $\varepsilon^- \in \textit{affine.hull}\left(\left\{\textit{sign}(\varepsilon^{+}),\textit{sign}(\varepsilon^{++})\right\}\right)$.
Since $\varepsilon^- < \varepsilon^{+}$, the segments $\left(\textit{sign}(\varepsilon^-),\textit{sign}(\varepsilon^{+})\right)$ and $\left(\textit{sign}(\varepsilon^+),\textit{sign}(\varepsilon^{++})\right)$ would have been verified in a previous step.
But this is a contradiction to the assumption that $\left(\textit{sign}(\varepsilon^-),\textit{sign}(\varepsilon^{+})\right)$ was an unverified segment.
\end{proof}

\begin{claim}
\label{claim:find_hinge.2}
For $0 \leq \varepsilon_{-1} < \varepsilon_{0} < \varepsilon_1 \leq 1$, if
\begin{equation}
	\textit{sign}(\varepsilon_0) \in \textit{Conv}\left(\left\{ \textit{sign}(\varepsilon_{-1}),\textit{sign}(\varepsilon_1)  \right\}\right),
\end{equation}
then
\begin{equation}
	\textit{Conv}\left(\left\{ \textit{sign}(\varepsilon_{-1}),\textit{sign}(\varepsilon_1)  \right\}\right)
\end{equation}
is part of the signature of the Pareto frontier.
\end{claim}
\begin{proof}
Assume towards contradiction that $\textit{Conv}\left(\left\{\textit{sign}(\varepsilon_{-1}),\textit{sign}(\varepsilon_{1})\right\}\right)$ is not part of the signature of the Pareto frontier.
Then by convexity of $\varepsilon \mapsto \deficit(\varepsilon)$ there exists $\gamma \in \left(0, 1\right)$ with
\begin{equation}
\deficit(\varphi_{(1-\gamma) \varepsilon_{-1} + \gamma \varepsilon_1 }) < (1-\gamma) \deficit(\varphi_{\varepsilon_{-1}}) + \gamma \deficit(\varphi_{\varepsilon_1}).
\end{equation}
If $\varepsilon' = (1-\gamma) \varepsilon_{-1} + \gamma \varepsilon_1 > \varepsilon_0$, then for $\gamma' = \frac{\varepsilon_0 - \varepsilon_{-1}}{\varepsilon' - \varepsilon_{-1}}$ we get
\begin{equation}
(1-\gamma') \deficit(\varphi_{\varepsilon_{-1}}) + \gamma'\deficit(\varphi_{\varepsilon'}) 
	< \deficit(\varphi_{\varepsilon_0}) 
	= \deficit(\varphi_{(1-\gamma') \varepsilon_{-1} + \gamma' \varepsilon'}),
\end{equation}
a contradiction to convexity of $\varepsilon \mapsto \deficit(\varepsilon)$.
The argument for $\varepsilon' < \varepsilon_0$ is analogous. \end{proof}

\begin{claim}
\label{claim:find_hinge.3}
For $0 \leq \varepsilon_{-2} < \varepsilon_{-1} < \varepsilon_0 < \varepsilon_1 < \varepsilon_2 \leq 1$, if
\begin{equation}
\left\{\textit{sign}(\varepsilon)\right\} =
	\text{affine.hull}\left(\left\{\textit{sign}(\varepsilon_{-2}),\textit{sign}(\varepsilon_{-1})\right\}\right) \cap
	\text{affine.hull}\left(\left\{\textit{sign}(\varepsilon_1),\textit{sign}(\varepsilon_2)\right\}\right),
\end{equation}
then
\begin{equation}
	\textit{Conv}\left(\left\{\textit{sign}(\varepsilon_{0}),\textit{sign}(\varepsilon_{\pm 2})\right\}\right)
\end{equation}
is part of the signature of the Pareto frontier.
 
\end{claim}
\begin{proof}
The claim follows by applying Claim \ref{claim:find_hinge.2} twice.
\end{proof}

\begin{claim}
\label{claim:find_hinge.4}
For any two adjacent supporting manipulability bounds $\varepsilon_k,\varepsilon_{k+1}$, \FindBounds\ computes at most three points $\textit{sign}(\varepsilon')$, $\textit{sign}(\varepsilon'')$, $\textit{sign}(\varepsilon''')$ with $\varepsilon',\varepsilon'',\varepsilon''' \in (\varepsilon_k,\varepsilon_{k+1})$.
\end{claim}
\begin{proof} Suppose that $\textit{sign}(\varepsilon')$, $\textit{sign}(\varepsilon'')$, and $\textit{sign}(\varepsilon''')$ are computed in this order.
If $\varepsilon''' < \min(\varepsilon',\varepsilon'')$, then $\varepsilon'$ must be a supporting manipulability bound by convexity of $\varepsilon \mapsto \deficit(\varepsilon)$, which is a contradiction.
The same holds if $\varepsilon''' > \max(\varepsilon',\varepsilon'')$.
If $\varepsilon''' \in (\min(\varepsilon',\varepsilon''),\max(\varepsilon',\varepsilon''))$, the segment
\begin{equation}
	\left(\textit{sign}(\min(\varepsilon',\varepsilon'')),\textit{sign}(\max(\varepsilon',\varepsilon''))\right)
\end{equation}
is verified (case (\ref{PropFindBounds.case_2})).
Another guess $\varepsilon''''$ that lies within $[\varepsilon_k,\varepsilon_{k+1}]$ involve the segment $\left(\textit{sign}(\varepsilon'),\textit{sign}(\varepsilon'')\right)$.
Thus, by convexity of $\varepsilon \mapsto \deficit(\varepsilon)$, $\varepsilon''''$ is a supporting manipulability bound.
\end{proof}
This concludes the proof of Proposition \ref{PROP:RUNTIME}
}
\newcommand{\PropPluralityParetoFrontierStatement}{%
In a problem $(N,M,\deficit^{\textsc{Plu}})$ with $n = 3$ agents, $m=3$ alternatives, strict preferences, and where $\deficit^{\textsc{Plu}}$ is the worst-case $\desifct^{\textsc{Plu}}$-deficit, the following hold:
\begin{enumerate}
	\setlength{\itemsep}{0pt}
	\item The Pareto frontier has two supporting manipulability bounds $0$ and $1/3$. 
	\item Random Dictatorship is a representative of $\OPT(0)$. 
	\item Uniform Plurality is a representative of $\OPT(1/3)$.
\end{enumerate}
}
\newcommand{\PropPluralityParetoFrontierProof}{%
We first prove that Random Dictatorship is optimal at $\varepsilon_0 = 0$. 
Since Random Dictatorship is by construction a lottery of unilateral, strategyproof mechanisms, it is obviously strategyproof.
At any preference profile where all agents have the same first choice, Random Dictatorship will select this alternative and achieve zero deficit.
At any preference profile where all agents have different first choices, all alternatives have the same $\desifct^{\textsc{Plu}}$-value and therefore, any outcome has zero deficit.
Finally, consider the case where two agents agree on a first choice, $a$ say, but the third agent has a different first choice, $b$ say.
In this case, selecting $a$ would yield a maximal $\desifct^{\textsc{Plu}}$-value of $\frac{2}{3}$.
However, Random Dictatorship will only select alternative $a$ with probability $\frac{2}{3}$ and $b$ otherwise.
This leads to an outcome with $\desifct^{\textsc{Plu}}$-value of $\frac{2}{3} \cdot \frac{2}{3} + \frac{1}{3} \cdot \frac{1}{3} = \frac{5}{9}$, and therefore, the deficit of Random Dictatorship is $\frac{1}{9}$.

It remains to be shown that any strategyproof mechanism will have a deficit of at least $\frac{1}{9}$.
Following the discussion of the second takeaway from Theorem \ref{THM:HYBRID_GUARANTEES} in Section \ref{SEC:HYBRIDS}, we can restrict our attention to mechanisms $\varphi$ that are strategyproof, anonymous, and neutral. 
By Theorem \ref{THM:SYMMETRIC_DECOMP}, $\varphi$ has a symmetric decomposition, i.e., it can be represented as a lottery over neutral, strategyproof unilaterals and anonymous, strategyproof duples. 
Suppose, it contains a some anonymous, strategyproof duple $\text{dup}_{a,b}$.
By the characterization of strategyproofness via swap monotonicity, upper invariance, and lower invariance (Theorem 1 in \citep{MennleSeuken2017PSP_WP}), it follows that the outcome of $\text{dup}_{a,b}$ can only depend on the relative rankings of $a$ and $b$, so that $\text{dup}_{a,b}$ has the form
\begin{equation}
	\text{dup}_{a,b}(\bm P) = \left\{
		\begin{array}{ll}
			\left( p_3,1-p_3,0 \right), & \text{ if }P_i: a \succ b\text{ for all agents } i, \\
			\left( p_2,1-p_2,0 \right), & \text{ if }P_i: a \succ b\text{ for two agents } i, \\
			\left( p_1,1-p_1,0 \right), & \text{ if }P_i: a \succ b\text{ for one agents } i, \\
			\left( p_0,1-p_0,0 \right), & \text{ if }P_i: a \succ b\text{ for one agents } i,
	\end{array}  \right.
\end{equation}
where the vector $(p,1-p,0) = (\varphi_a(\bm P),\varphi_b(\bm P),\varphi_c(\bm P))$ denotes the outcome and $p_3 \geq p_2 \geq \frac{1}{2}$ and $p_0 \leq p_1 \leq \frac{1}{2}$.
Again by symmetry of the symmetric decomposition, it must also contain the anonymous duple $\text{dup}_{a,b}^{\varpi}$ for any permutation of the alternatives $\varpi: M \rightarrow M$.
Consider the preference profile
\begin{eqnarray}
	P_1: & & a \succ b \succ c, \\
	P_2: & & a \succ c \succ b, \\
	P_3: & & b \succ c \succ a.
\end{eqnarray}
The following table shows what outcomes of the different duples $\{\text{dup}^{\varpi}_{a,b}~|~\varpi:M \rightarrow M \text{ permutation}\}$ at this preference profile.
\begin{center}
\begin{tabular}{|c|c|c|c|}
	\hline
	\textbf{Duple} & $\bm a$ & $\bm b$ & $\bm c$ \\
	\hline \hline
	$\text{dup}_{a,b}$ & $p_2$ & $1-p_2$ & $0$  \\
	\hline
	$\text{dup}_{a,c}$ & $p_2$ & $0$ & $1-p_2$  \\
	\hline
	$\text{dup}_{b,c}$ & $0$ & $p_2$ & $1-p_2$  \\
	\hline
	$\text{dup}_{b,a}$ & $1-p_1$ & $p_1$ & $0$  \\
	\hline
	$\text{dup}_{c,a}$ & $1-p_1$ & $0$ & $p_1$  \\
	\hline
	$\text{dup}_{c,b}$ & $0$ & $1-p_1$ & $p_1$  \\
	\hline
\end{tabular}
\end{center}
A uniform lottery over these duples assigns probability $2\cdot\frac{1}{6} = \frac{1}{3}$ to alternative $b$. 
Since $p_1 \geq p_2$, this mechanism selects an outcome with $\desifct^{\textsc{Plu}}$-value at most $ \frac{2}{3} \cdot \frac{2}{3} + \frac{1}{3} \cdot \frac{1}{3} = \frac{5}{9}$. 
Since the best alternative $a$ has $\desifct^{\textsc{Plu}}$-value $\frac{2}{3}$, the mechanism must have a deficit of at least $\frac{1}{9}$ at this particular preference profile.
This is the same deficit that Random Dictatorship has at this profile, which means that including any strategyproof duples in the symmetric decomposition will not improve the deficit of $\varphi(\bm P)$.

Suppose now that the symmetric decomposition of $\varphi$ contains a neutral, strategyproof unilateral $\text{uni}_i$.
As before, it follows that $u_i$ must pick an outcome $(p_1,p_2,1-p_1-p_2)$ where $p_1 \geq p_2 \geq 1- p1-p_2$.
Again by symmetry of the symmetric decomposition, it must also contain the neutral unilateral $\text{uni}_i^{\pi}$ for any permutation of the agents $\pi: N \rightarrow N$.
Consider the same preference profile as before, with
\begin{eqnarray}
	P_1: & & a \succ b \succ c, \\
	P_2: & & a \succ c \succ b, \\
	P_3: & & b \succ c \succ a.
\end{eqnarray}
The following table shows what outcomes the different unilaterals $\{\text{uni}_i^{\pi}~|~\pi:N\rightarrow N\text{ permutation}\}$ will select.
\begin{center}
\begin{tabular}{|c|c|c|c|}
	\hline
	\textbf{Unilateral} & $\bm a$ & $\bm b$ & $\bm c$ \\
	\hline \hline
	$\text{uni}_1$ & $p_1$ & $p_2$ & $1-p_1-p_2$  \\
	\hline
	$\text{uni}_2$ & $p_1$ & $1-p_1-p_2$ & $p_2$  \\
	\hline
	$\text{uni}_3$ & $1-p_1-p_2$ & $p_1$ & $p_2$  \\
	\hline
\end{tabular}
\end{center}
A uniform lottery over these unilaterals assigns probability $\frac{1}{3}$ to alternative $b$, and consequently, this mechanism has a deficit of at least $\frac{1}{9}$ at this profile.
This is the same deficit that Random Dictatorship has at this profile, which means that including any strategyproof unilaterals in the symmetric decomposition will not improve the deficit of $\varphi(\bm P)$.

Since for all mechanisms the worst-case deficit was attained at the same preference profile, there exists no strategyproof, anonymous, neutral mechanism that has a lower deficit at $\bm P$ than $\frac{1}{9}$, and therefore, Random Dictatorship has minimal deficit among all strategyproof mechanisms.

\medskip
Next, we show that the manipulability of Uniform Plurality is minimal among all $\desifct^{\textsc{Plu}}$-maximizing mechanisms.  
Without loss of generality, we restrict attention to anonymous, neutral, and $\desifct^{\textsc{Plu}}$-maximizing mechanisms. 
At any preference profile where an alternative is ranked first by two agents or more, this alternative is implemented with certainty.
Thus, if two agents have the same first choice, the third agent has no opportunity to manipulate, because it cannot change the outcome.
The two agents with the same first choice are already receiving their favorite outcome, which makes manipulation useless for them as well.

Thus, any manipulability will arise at a preference profile where all agents have different first choices. 
Consider the preference profile $\bm P$ with
\begin{eqnarray}
	P_1 : & & a \succ b \succ c,\\
	P_2 : & & b \succ c \succ a,\\
	P_3 : & & c \succ a \succ b.
\end{eqnarray}
Renaming the alternatives is equivalent to renaming the agents.
Thus, an anonymous and neutral mechanisms has to treat all alternatives equally and must therefore select each alternative with probability $\frac{1}{3}$.
Now suppose that agent 1 is almost indifferent between $a$ and $b$, but strongly dislikes $c$, i.e., 
its utility function is ``close to'' the binary utility $u(a) = u(b) = 1, u(c) = 0$.
Then, by swapping $a$ and $b$, agent 1 can enforce the implementation $b$ with certainty. 
Its gain from this manipulation is close to
\begin{equation}
	1 \cdot u(b) - \frac{1}{3}\left(u(a) + u(b) + u(c)\right) = \frac{1}{3}.
\end{equation}
Thus, any $\desifct^{\textsc{Plu}}$-maximizing mechanism will have manipulability $\varepsilon(\varphi) \geq \frac{1}{3}$.

Now consider the Uniform Plurality mechanism.
At the above preference profile agents cannot change the outcome, unless they change their first choice.
By anonymity and neutrality, it suffices to show that agent 1 cannot do any better than $\frac{1}{3}$ by manipulating.
However, the only other possible misreport that has any effect on the outcome is to bring $c$ forward and enforce it as the outcome, which would yield no benefit for agent 1. 
Consequently, Uniform Plurality has minimal manipulability of $\frac{1}{3}$. 

\medskip
Finally, we show that there are no additional supporting manipulability bounds besides $0$ and $1/3$. 
So far we have that Random Dictatorship is in $\OPT(0)$ with a deficit of $\deficit(0) = \frac{1}{9}$.
Furthermore, Uniform Plurality is in $\OPT(1/3)$ and no $\desifct^{\textsc{Plu}}$-maximizing mechanism has strictly lower manipulability.
To complete the proof, we will show that for $\varepsilon=\frac{1}{6}$, all optimal mechanisms have deficit $\frac{1}{18}$.
By convexity of the mapping $\varepsilon\mapsto\deficit(\varepsilon)$, the signature of the Pareto frontier must therefore be a straight line between $(0,1/9)$ and $(1/3,0)$. 
Considering the performance guarantees for hybrid mechanisms (Theorem \ref{THM:HYBRID_GUARANTEES}), this implies optimality of the hybrids of Random Dictatorship and Uniform Plurality. 

Suppose that $\varphi$ is $\frac{1}{6}$-approximately strategyproof, anonymous, and neutral.
It follows from anonymity and neutrality that at the preference profile
\begin{eqnarray}
	P_1: & & a \succ b \succ c, \\
	P_2: & & b \succ c \succ a, \\
	P_3: & & c \succ a \succ b,
\end{eqnarray}
the outcome must be $\left(\frac{1}{3},\frac{1}{3},\frac{1}{3}\right)$ for $a$, $b$, $c$, respectively.
If agent $1$ changes its report to
\begin{eqnarray}
	P_1': & & b \succ a \succ c,
\end{eqnarray}
the outcome changes to $\left(\frac{1}{3}-\underline{\varepsilon}_a,\frac{1}{3}+\underline{\varepsilon}_a+\underline{\varepsilon}_c,\frac{1}{3}-\underline{\varepsilon}_c\right)$ for some values $\underline{\varepsilon}_a \leq \frac{1}{3},\underline{\varepsilon}_c \leq \frac{1}{3}$. 
Suppose now that agent 1 has a utility close to indifference between $a$ and $b$, i.e., close to $u(a) = u(b) = 1, u(c) = 0$. 
Then the utility gain for agent 1 from misreporting $P_1'$ is (arbitrarily close to) 
\begin{equation}
	1 \cdot (-\underline{\varepsilon}_a) + 1 \cdot (\underline{\varepsilon}_a + \underline{\varepsilon}_c) + 0 \cdot (-\underline{\varepsilon}_c) \approx \underline{\varepsilon}_c. 
\end{equation}
If the mechanism is $\frac{1}{6}$-approximately strategyproof, then $\underline{\varepsilon}_c \leq \frac{1}{6}$. 
The $\desifct^{\textsc{Plu}}$-value of $\varphi(\bm P')$ at $\bm P' = (P_1',P_2,P_3)$ is
\begin{eqnarray}
	\left\langle \varphi(\bm P'), \desifct^{\textsc{Plu}}(\cdot,\bm P') \right\rangle 
		& = &  \varphi_a(\bm P') \cdot 0 + \varphi_b(\bm P') \cdot \frac{2}{3} + \varphi_c(\bm P') \cdot \frac{1}{3} \\
		& = & \frac{2}{3} \left( \frac{1}{3} + \underline{\varepsilon}_a + \underline{\varepsilon}_c \right) + \frac{1}{3} \left( \frac{1}{3} - \underline{\varepsilon}_c\right) \\
		& = & \frac{6}{18} + \frac{4}{6}\underline{\varepsilon}_a + \frac{1}{3} \underline{\varepsilon}_c \\
		& \leq & \frac{6}{18} + \frac{4}{18} + \frac{1}{18} = \frac{11}{18}. 		
\end{eqnarray}
However, the only $\desifct^{\textsc{Plu}}$-maximizing alternative at $\bm P'$ is $b$ with $\desifct^{\textsc{Plu}}(b,\bm P') = \frac{2}{3} = \frac{12}{18}$. 
Thus, any $\frac{1}{6}$-approximately strategyproof mechanism must incur a deficit of at least $\frac{1}{18}$.%
}
\newcommand{\PropVetoParetoFrontierStatement}{%
In a problem $(N,M,\deficit^{\textsc{Veto}})$ with $n = 3$ agents, $m=3$ alternatives, strict preferences, and where $\deficit^{\textsc{Veto}}$ is the worst-case $\desifct^{\text{Veto}}$-deficit, the following hold:
\begin{enumerate}
	\setlength{\itemsep}{0pt}
	\item The Pareto frontier has four supporting manipulability bounds $0$, $1/21$, $1/12$, $1/2$. 
	\item Random Duple is a representative of $\OPT(0)$. 
	\item Uniform Veto is a representative of $\OPT(1/2)$.
	\item Hybrids of Random Duple and Uniform Veto not optimal for $\beta \in (0,1/2)$. 
\end{enumerate}
}
\newcommand{\PropVetoParetoFrontierProof}{
First, we prove that Random Duple is optimal at $\varepsilon_0 = 0$. 
Since it is a lottery over strategyproof duple mechanisms, it is obviously strategyproof.

At any preference profile where all agents agree on the last choice, Random Duple selects one of the other alternatives, each of which gives maximal $\desifct^{\textsc{Veto}}$-value.
At any preference profile where all agents have different last choices, any outcome has zero deficit.
Finally, consider a preference profile with
\begin{eqnarray}
	P_1,P_2: & & \ldots \succ c, \\
	P_3: & & \ldots \succ b.
\end{eqnarray}
The $\desifct^{\textsc{Veto}}$-value of $a$ is $1$ and the $\desifct^{\textsc{Veto}}$-value of $b$ is $\frac{2}{3}$, so that the maximum $\desifct^{\textsc{Veto}}$-value is 1. 
The worst case for Random Duple is that agents 1 and 2 rank $b$ first, in which case $a$ will be selected with probability $\frac{1}{3}$ and $b$ with probability $\frac{2}{3}$.
Thus, the deficit of Random Duple at $\bm P$ is 
\begin{equation}
	1 - \frac{1}{3}\cdot 1 - \frac{2}{3}\cdot \frac{2}{3} = \frac{2}{9}.
\end{equation}
In particular, this deficit is attained by Random Duple at the preference profile $\bm P$ with 
\begin{eqnarray}
	P_1,P_2: & & b \succ a \succ c, \\
	P_3: & & c \succ a \succ b. 
\end{eqnarray}

It remains to be proven whether any strategyproof, anonymous, neutral mechanism $\varphi$ can achieve a lower deficit (where anonymity and neutrality are without loss of generality, similar to the proof of Proposition \ref{PROP:PLURALITY_PARETO_FRONTIER}).
Consider the preference profile $\bm P$ and let $\text{uni}_i$ be a strategyproof and neutral unilateral component in the symmetric decomposition of $\varphi$.
$\text{uni}_i$ must pick an outcome $(p_1,p_2,1-p_1-p_2)$ with $p_1 \geq p_2 \geq 1-p_1-p_2$, where $p_k$ denotes the probability of agent $i$'s $k$th choice.
The symmetric decomposition implies that $\text{uni}_1$, $\text{uni}_2$, and $\text{uni}_3$ are equally likely to be chosen.
Analogous to the proof of Proposition \ref{PROP:PLURALITY_PARETO_FRONTIER}, we get
\begin{center}
\begin{tabular}{|c|c|c|c|}
	\hline
	\textbf{Unilateral} & $\bm a$ & $\bm b$ & $\bm c$ \\
	\hline \hline
	$\text{uni}_1$ or $\text{uni}_2$ & $p_1$ & $p_2$ & $1-p_1-p_2$  \\
	\hline
	$\text{uni}_3$ & $p_2$ & $1-p_1-p2$ & $p_1$  \\
	\hline
\end{tabular}
\end{center}
Thus, alternative $b$ is selected with probably at least $\frac{1}{3}$, which means that the deficit of $\varphi$ at the preference profile $\bm P$ is not reduced by including any unilaterals in the symmetric decomposition.

Similarly, if $\text{dup}_{a,b}$ is a strategyproof, anonymous duple in the symmetric decomposition of $\varphi$, it has the form
\begin{equation}
	\text{dup}_{a,b}(\bm P) = \left\{
		\begin{array}{ll}
			\left( p_3,1-p_3,0 \right), & \text{ if }P_i: a \succ b\text{ for all agents } i, \\
			\left( p_2,1-p_2,0 \right), & \text{ if }P_i: a \succ b\text{ for two agents } i, \\
			\left( p_1,1-p_1,0 \right), & \text{ if }P_i: a \succ b\text{ for one agents } i, \\
			\left( p_0,1-p_0,0 \right), & \text{ if }P_i: a \succ b\text{ for one agents } i,
	\end{array}  \right.
\end{equation}
where $p_3 \geq p_2 \geq \frac{1}{2}$ and $p_0 \leq p_1 \leq \frac{1}{2}$.
Again by symmetry of the symmetric decomposition, it must also contain the anonymous duple $\text{dup}_{a,b}^{\varpi}$ for any permutation of the alternatives $\varpi: M \rightarrow M$.
The following table shows what outcomes the different duples $\{\text{dup}_{a,b}^{\varpi}~|~\varpi:M\rightarrow M\text{ permutation}\}$ will select at $\bm P$.
\begin{center}
\begin{tabular}{|c|c|c|c|}
	\hline
	\textbf{Duple} & $\bm a$ & $\bm b$ & $\bm c$ \\
	\hline \hline
	$\text{dup}_{a,b}$ & $p_1$ & $1-p_1$ & $0$  \\
	\hline
	$\text{dup}_{a,c}$ & $p_2$ & $0$ & $1-p_2$  \\
	\hline
	$\text{dup}_{b,c}$ & $0$ & $p_2$ & $1-p_2$  \\
	\hline
	$\text{dup}_{b,a}$ & $1-p_2$ & $p_2$ & $0$  \\
	\hline
	$\text{dup}_{c,a}$ & $1-p_1$ & $0$ & $p_1$  \\
	\hline
	$\text{dup}_{c,b}$ & $0$ & $1-p_1$ & $p_1$  \\
	\hline
\end{tabular}
\end{center}
Thus, since $p_2 \geq p_1$, $b$ is selected with probability of at least $\frac{1}{3}$, and therefore, including any other duples in the symmetric decomposition of $\varphi$ will not improve the deficit at $\bm P$.
Consequently, there exists no strategyproof mechanism with a lower deficit than Random Duple.

\medskip
Next, we show that Uniform Veto has minimal manipulability of among all $\desifct^{\textsc{Veto}}$-maximizing mechanisms.  
First, observe that Uniform Veto is $\frac{1}{2}$-approximately strategyproof.
To see this, consider the preference profile $\bm P$
\begin{eqnarray}
	P_1: & & a \succ b \succ c, \\
	P_2,P_3: & & \ldots \succ c.
\end{eqnarray}
Uniform Veto selects $a$ and $b$ with probability $\frac{1}{2}$ each.
To manipulate, agent 1 can rank $b$ last and obtain $a$ with certainty.
Its gain from this manipulation would be
\begin{equation}
	u_1(a) - \frac{1}{2}\left(u_1(a)+u_1(b)\right) = \frac{1}{2}\left(u_1(a)-u_1(b)\right),
\end{equation}
which is at most $\frac{1}{2}$ for $u_1(a) = 1$ and $u_1(b)$ close to $0$. 
Thus, the manipulability of Uniform Veto is at least $\frac{1}{2}$. 

Suppose now that all agents have different last choices.
In this case, Uniform Veto selects any of the alternatives with probability $\frac{1}{3}$.
By ranking another alternative last, an agent could only ensure the implementation of its third choice with certainty, which is not a beneficial manipulation.

Finally, suppose that two agents have the same last choice, while a third agent has a different last choice.
We have the following cases from the perspective of agent 1.
\begin{itemize}
	\item Case I:
		\begin{eqnarray}
			P_1: & & a \succ b \succ c, \\
			P_2,P_3: & & \ldots \succ a.
		\end{eqnarray}
		In this case, Uniform Veto implements $b$ with certainty.
		Agent 1 can only enforce $c$ by ranking $b$ last, or rank $a$ last and obtain $b$ and $c$ with probabilities $\frac{1}{2}$ each.
		Neither of these moves will make agent 1 better off.
	\item Case II:
		\begin{eqnarray}
			P_1: & & a \succ b \succ c, \\
			P_2,P_3: & & \ldots \succ b.
		\end{eqnarray}
		In this case, Uniform Veto implements $a$ with certainty, which is already agent 1's first choice.
	\item Case III:
		\begin{eqnarray}
			P_1: & & a \succ b \succ c,\\
			P_2: & & \ldots \succ b, \\
			P_3: & & \ldots \succ c.
		\end{eqnarray}
		In this case, Uniform Veto implements $a$ with certainty, which is already agent 1's first choice.
	\item Case IV:
		\begin{eqnarray}
			P_1: & & a \succ b \succ c,\\
			P_2: & & \ldots \succ a, \\
			P_3: & & \ldots \succ c.
		\end{eqnarray}
		In this case, Uniform Veto implements $b$ with certainty.
		By ranking $b$ last, agent 1 could obtain a probability of $\frac{1}{3}$ for each alternative instead.
		Its gain from this manipulation is
		\begin{equation}
			\frac{1}{3}\left(u_1(a) + u_1(b) + u_1(c)\right) - u_1(b) \leq \frac{1}{3} - \frac{2}{3}u_1(b) \leq \frac{1}{3}.
		\end{equation}
\end{itemize}
Further observe that by renaming agents and alternatives, the above cases I through IV cover all possible constellations with 2 different alternatives ranked as last choices from the perspective of any agent. 
Thus, Uniform Veto is $\frac{1}{2}$-approximately strategyproof.

Having shown that the manipulability of Uniform Veto is exactly $\frac{1}{2}$, we now must show that any other $\desifct^{\textsc{Veto}}$-maximizing mechanism $\varphi$ has manipulability $\varepsilon(\varphi)\geq \frac{1}{2}$. 
We assume without loss of generality that $\varphi$ is also anonymous and neutral, and we consider the preference profile
\begin{eqnarray}
	P_1,P_2: & & a \succ b \succ c, \\
	P_3: & & b \succ a \succ c.
\end{eqnarray}
At this profile, $\varphi$ has to select $a$ with some probability $p_a$ and $b$ with probability $1-p_a$.
If $p_a \geq \frac{1}{2}$, agent 3 can rank $a$ last and enforce selection of $b$, the only remaining alternative with full $\desifct^{\textsc{Veto}}$-value.
If $u_3(b) = 1$, $u_3(c) = 0$, and $u_3(a)$ is close to 0, its gain will be
\begin{equation}
	1 - (1-p_a) = p_a \geq \frac{1}{2}.
\end{equation}
If $p_a < \frac{1}{2}$, agent 1 can enforce $a$ by ranking $b$ last and obtain a gain of
\begin{equation}
	1 - p_a > \frac{1}{2}
\end{equation}
with a similar utility function. 
Thus, any $\desifct^{\textsc{Veto}}$-maximizing mechanism has manipulability of at least $\frac{1}{2}$.

\medskip
Last, we show that the hybrids of Random Duples and Uniform Veto do not lie on the Pareto frontier, except for the extreme cases. 
In fact, the signatures of their hybrids form a straight line: 
observe that the deficit of Random Duple is attained at the preference profile $\bm P$ with
\begin{eqnarray}
	P_1,P_2,P_3: & & a \succ b \succ c.
\end{eqnarray}
Since Uniform Veto is $\desifct^{\textsc{Veto}}$-maximizing, its deficit is zero at all preference profiles.
Consequently, by linearity of the $\desifct^{\textsc{Veto}}$-value $\desifct(x,\bm P)$ in the outcome $x$, the deficit of any hybrid $h_{\beta}$ of Random Duple and Uniform Veto is determined by the deficit at $\bm P$.
Furthermore, the manipulability of Uniform Veto is highest at the same preference profile if agent 1 swaps $b$ and $c$ to enforce $a$ and has a utility close to $u_1(a) = 1$, $u_1(c)= 0$, and $u_1(b)$ close to $0$.
This misreport leaves the outcome of Random Duple unchanged.
Therefore, the manipulability of any hybrid will also be determined by this preference profile and this potential misreport.
By linearity of the incentive constraints from Theorem \ref{THM:ASP_EQUIV_LINEAR_CONSTRAINTS} it is evident that the signatures of the hybrids of Random Duple and Uniform Veto for a straight line between the signatures $(0,2/9)$ and $(1/2,0)$ of the respective component mechanism.
Consequently, if the Pareto frontier is not linear, then these hybrids will not be on the Pareto frontier for any $\beta \neq 0,1$.
\begin{table}%
\begin{tabular}{|c|c|c|l|}
	\hline
	\textbf{Algorithm} & $\bm \varepsilon$ & $\bm{\delta({\varepsilon})}$ & \textbf{Comment} \\
	\hline
	\hline
	\FindLower & $0$				 					& $2/9$ & found s.m.b. at $\varepsilon = 0$\\
	\hline
	\FindLower & $1/2$				& $0$ & \\
	\hline
	\FindLower & $1/4$				& $1/12$ & \\
	\hline
	\FindLower & $1/8$				& $1/8$ & \\
	\hline
	\FindLower & $1/16$			& $65/432$ & \\
	\hline
	\FindLower & $1/32$			& $13/72$ & \\
	\hline
	\FindLower & $1/64$			& $29/144$ & found $\underline{\varepsilon} = 1/64 < \varepsilon_1$ \\
	\hline
	\hline
	\FindBounds & $1/6$			& $1/9$ & \\
	\hline
	\FindBounds & $1/18$			& $25/162$ & \\
	\hline
	\FindBounds & $3/47$			& $28/187$ & \\
	\hline
	\FindBounds & $1/12$			& $5/36$ & found s.m.b. at $\varepsilon = 1/12$ and $\varepsilon = 1/2$ \\
	\hline
	\FindBounds & $1/21$			& $10/63$ & found s.m.b. at $\varepsilon = 1/21$\\
	\hline
\end{tabular}
\caption{Executions of the linear program \FindOpt\ when using \FindLower\ and \FindBounds\ to determine the Pareto frontier.}
\label{tbl:find_hinges_dup_veto}
\end{table}%

To find the supporting manipulability bounds of the Pareto frontier, we used the algorithm \FindLower\ to determine a lower bound for the smallest non-zero supporting manipulability bound and then applied \FindBounds\ with this value of $\underline{\varepsilon}$.
Table \ref{tbl:find_hinges_dup_veto} gives the signatures on the Pareto frontier that were determined using the \textit{signature}-function in the order in which they were computed.%

Table \ref{tbl:optimal_mechanisms} gives two mechanisms that are optimal at $\varepsilon_1=1/21$ and $\varepsilon_2=1/12$, respectively. 
For preference profiles that are not listed, rename the agents and alternatives to obtain one of the listed preference profiles, and select the respective outcome (renaming the alternatives again). 
\begin{table}
\begin{tabular}{|c|c|c||c|c|c||c|c|c|}
	\hline
	\multicolumn{3}{|c||}{\textbf{Preference Profile}} & \multicolumn{3}{|c||}{$\bm{\varepsilon_1 = 1/21}$} & \multicolumn{3}{|c|}{$\bm{\varepsilon_2= 1/12}$} \\
	\hline
	$\bm{P_1}$ & $\bm{P_2}$ & $\bm{P_3}$ & \hspace{1.2em}$\bm{a}$\hspace{1em} & \hspace{1em}$\bm{b}$\hspace{1em} & \hspace{1em}$\bm{c}$\hspace{1em} & \hspace{1em}$\bm{a}$\hspace{1em} & \hspace{1em}$\bm{b}$\hspace{1em} & \hspace{1em}$\bm{c}$\hspace{1em} \\
	\hline
	\hline
	$a \succ b \succ c$ & $ a \succ b \succ c$ & $ a \succ b \succ c$ & $  11/21 $ & $   3/7  $ & $   1/21 $ & $   7/12 $ & $   1/3  $ & $   1/12 $\\
	\hline
	$a \succ b \succ c$ & $ a \succ b \succ c$ & $ a \succ c \succ b$ & $   4/7  $ & $   3/7  $ & $ 0$ & $   2/3  $ & $   1/4  $ & $   1/12 $\\
	\hline
	$a \succ b \succ c$ & $ a \succ b \succ c$ & $ b \succ a \succ c$ & $  11/21 $ & $  10/21 $ & $ 0$ & $   1/2  $ & $   1/2  $ & $ 0$\\
	\hline
	$a \succ b \succ c$ & $ a \succ b \succ c$ & $ b \succ c \succ a$ & $  10/21 $ & $  11/21 $ & $ 0$ & $   5/12 $ & $   7/12 $ & $ 0$\\
	\hline
	$a \succ b \succ c$ & $ a \succ b \succ c$ & $ c \succ a \succ b$ & $   4/7  $ & $   8/21 $ & $   1/21 $ & $   2/3  $ & $   1/3  $ & $ 0$\\
	\hline
	$a \succ b \succ c$ & $ a \succ b \succ c$ & $ c \succ b \succ a$ & $  10/21 $ & $  11/21 $ & $ 0$ & $   5/12 $ & $   7/12 $ & $ 0$\\
	\hline
	$a \succ b \succ c$ & $ a \succ c \succ b$ & $ a \succ c \succ b$ & $   4/7  $ & $ 0$ & $   3/7  $ & $   2/3  $ & $   1/12 $ & $   1/4  $\\
	\hline
	$a \succ b \succ c$ & $ a \succ c \succ b$ & $ b \succ a \succ c$ & $   4/7  $ & $   3/7  $ & $ 0$ & $   7/12 $ & $   5/12 $ & $ 0$\\
	\hline
	$a \succ b \succ c$ & $ a \succ c \succ b$ & $ b \succ c \succ a$ & $   3/7  $ & $   8/21 $ & $   4/21 $ & $   1/3  $ & $   5/12 $ & $   1/4  $\\
	\hline
	$a \succ b \succ c$ & $ a \succ c \succ b$ & $ c \succ a \succ b$ & $   4/7  $ & $ 0$ & $   3/7  $ & $   7/12 $ & $ 0$ & $   5/12 $\\
	\hline
	$a \succ b \succ c$ & $ a \succ c \succ b$ & $ c \succ b \succ a$ & $   3/7  $ & $   4/21 $ & $   8/21 $ & $   1/3  $ & $   1/4  $ & $   5/12 $\\
	\hline
	$a \succ b \succ c$ & $ b \succ a \succ c$ & $ b \succ a \succ c$ & $  10/21 $ & $  11/21 $ & $ 0$ & $   1/2  $ & $   1/2  $ & $ 0$\\
	\hline
	$a \succ b \succ c$ & $ b \succ a \succ c$ & $ b \succ c \succ a$ & $   3/7  $ & $   4/7  $ & $ 0$ & $   5/12 $ & $   7/12 $ & $ 0$\\
	\hline
	$a \succ b \succ c$ & $ b \succ a \succ c$ & $ c \succ a \succ b$ & $   4/7  $ & $   8/21 $ & $   1/21 $ & $   7/12 $ & $   5/12 $ & $ 0$\\
	\hline
	$a \succ b \succ c$ & $ b \succ a \succ c$ & $ c \succ b \succ a$ & $   8/21 $ & $   4/7  $ & $   1/21 $ & $   5/12 $ & $   7/12 $ & $ 0$\\
	\hline
	$a \succ b \succ c$ & $ b \succ c \succ a$ & $ b \succ c \succ a$ & $   1/21 $ & $   4/7  $ & $   8/21 $ & $ 0$ & $   2/3  $ & $   1/3  $\\
	\hline
	$a \succ b \succ c$ & $ b \succ c \succ a$ & $ c \succ a \succ b$ & $   1/3  $ & $   1/3  $ & $   1/3  $ & $   1/3  $ & $   1/3  $ & $   1/3  $\\
	\hline
	$a \succ b \succ c$ & $ b \succ c \succ a$ & $ c \succ b \succ a$ & $   1/21 $ & $   4/7  $ & $   8/21 $ & $ 0$ & $   7/12 $ & $   5/12 $\\
	\hline
	$a \succ b \succ c$ & $ c \succ a \succ b$ & $ c \succ a \succ b$ & $  11/21 $ & $ 0$ & $  10/21 $ & $   7/12 $ & $ 0$ & $   5/12 $\\
	\hline
	$a \succ b \succ c$ & $ c \succ a \succ b$ & $ c \succ b \succ a$ & $   8/21 $ & $   4/21 $ & $   3/7  $ & $   5/12 $ & $   1/4  $ & $   1/3  $\\
	\hline
	$a \succ b \succ c$ & $ c \succ b \succ a$ & $ c \succ b \succ a$ & $ 0$ & $  11/21 $ & $  10/21 $ & $ 0$ & $   7/12 $ & $   5/12 $\\
	\hline
\end{tabular}
\caption{Optimal mechanisms at $\varepsilon_1 = 1/21$ and $\varepsilon_2=1/12$ (extended to other preference profiles via the anonymity and neutrality extension).}
\label{tbl:optimal_mechanisms}
\end{table}}
\begin{document}

{\setstretch{1}
\title{%
\LARGE{%
The Pareto Frontier for \\ Random Mechanisms}\thanks{\scriptsize{
Department of Informatics, University of Zurich, Switzerland,
Email: \{mennle, seuken\}@ifi.uzh.ch.
Updates available from https://www.ifi.uzh.ch/ce/publications/EFF.pdf. 
We would like to thank (in alphabetical order)
Daniel Abaecherli, 
Haris Aziz, 
Gianluca Brero, 
Albin Erlanson, 
Bettina Klaus, 
Christian Kroer, 
Dmitry Moor, 
Ariel Procaccia, 
Tuomas Sandholm, 
Arunava Sen, 
Steffen Schuldenzucker, 
William Thomson, 
and 
Utku \"Unver 
for helpful comments on this work.
Furthermore, we are thankful for the feedback we received from participants at 
COST COMSOC Meeting 2016 (Istanbul, Turkey) and multiple anonymous referees at EC'15 and EC'16. 
Any errors remain our own.
A 1-page abstract based on this paper has been published in the conference proceedings of EC'16 \citep{MennleSeuken2016EFF_EC_extended_abstract}.
Part of this research was supported by the SNSF (Swiss National Science Foundation) under grant \#156836.}}}
\author{%
Timo Mennle \\ University of Zurich
\and Sven Seuken \\ University of Zurich }
\date{First version: February 13, 2015 \\
This version: \today}
\maketitle

\begin{abstract}
We study the trade-offs between strategyproofness and other desiderata, such as efficiency or fairness, that often arise in the design of random ordinal mechanisms. 
We use $\varepsilon$-approximate strategyproofness to define \emph{manipulability}, a measure to quantify the incentive properties of non-strategyproof mechanisms, 
and we introduce the \emph{deficit}, a measure to quantify the performance of mechanisms with respect to another desideratum. 
When this desideratum is incompatible with strategyproofness, mechanisms that trade off manipulability and deficit optimally form the \emph{Pareto frontier}.
Our main contribution is a structural characterization of this Pareto frontier, and we present algorithms that exploit this structure to compute it.
To illustrate its shape, we apply our results for two different desiderata, namely Plurality and Veto scoring, in settings with $3$ alternatives and up to $18$ agents.
\end{abstract}
\noindent \textbf{Keywords:}
Random Mechanisms, 
Pareto Frontier, 
Ordinal Mechanisms, 
Strategyproofness, 
Approximate Strategyproofness, 
Positional Scoring, 
Linear Programming

\medskip
\noindent\textbf{JEL:} 
D71, 
D82
}

\newpage

\section{Introduction}
\label{SEC:INTRODUCTION}
In many situations, a group of individuals has to make a collective decision by selecting an alternative: 
\emph{who should be the next president? 
who gets a seat at which public school? or where to build a new stadium?} 
\emph{Mechanisms} are systematic procedures to make such decisions. 
Formally, a mechanism collects the individuals' (or \emph{agents'}) preferences and selects an alternative based on this input. 
While the goal 
is to select an alternative that is desirable for society as a whole, it may be possible for individual agents to manipulate the outcome to their own advantage by being insincere about their preferences. 
However, if these agents are lying about their preferences, the mechanism will have difficulty in determining an outcome that is desirable with respect to the true preferences. 
Therefore, incentives for truthtelling are a major concern in mechanism design. 
In this paper, we consider \emph{ordinal} mechanisms with \emph{random outcomes}. 
Such mechanisms collect preference orders and select lotteries over alternatives. 
We formulate our results for the full domain where agents can have any weak or strict preferences, but our results continue to hold on many domain restrictions, including the domain of strict preferences, the assignment domain, and the two-sided matching domain. 
\subsection{The Curse of Strategyproofness}
\label{SEC:INTRODUCTION:STRATEGYPROOFNESS}
\noindent
\emph{Strategyproof} mechanisms make truthful reporting a dominant strategy for all agents and it is therefore the ``gold standard'' among the incentive concepts. 
However, the seminal impossibility results by \citet{Gibbard1973ManipulationOfVotingSchemes} and \citet{Satterthwaite1975StrategyproofnessAndArrowsConditions} established that if there are at least three alternatives and all strict preference orders are possible, then the only unanimous and deterministic mechanisms that are strategyproof are \emph{dictatorships}. 
\citet{Gibbard1977ManipulationOfVotingWithChance} extended these insights to mechanisms that involve randomization and showed that all strategyproof random mechanisms are probability mixtures of strategyproof unilateral and strategyproof duple mechanisms. 
These results greatly restrict the design space of strategyproof mechanisms. 
In particular, many common desiderata are incompatible with strategyproofness, such as 
Condorcet consistency, 
stability of matchings, 
or 
egalitarian fairness. 

When a desideratum is incompatible with strategyproofness, designing ``ideal'' mechanisms is impossible. 
For example, suppose that our goal is to select an alternative that is the first choice of as many agents as possible. 
The \emph{Plurality} mechanism selects an alternative that is the first choice for a maximal number of agents. 
Thus, it achieves our goal perfectly. 
At the same time, Plurality and any other mechanism that achieves this goal perfectly must be manipulable. 
In contrast, the \emph{Random Dictatorship} mechanism selects the first choice of a random agent. 
It is strategyproof but there is a non-trivial probability that it selects ``wrong'' alternatives. 
If Plurality is ``too manipulable'' and Random Dictatorship is ``wrong too often,'' then trade-offs are necessary. 
In this paper, we study mechanisms that make \emph{optimal trade-offs} between (non-)manipulability and another desideratum. 
Such mechanisms form the \emph{Pareto frontier} as they achieve the desideratum as well as possible, subject to a given limit on manipulability. 
\subsection{Measuring Manipulability and Deficit}
\label{SEC:INTRODUCTION:MEASURES}
In order to understand these trade-offs formally, we need measures for the performance of mechanisms in terms of incentives and in terms of the desideratum. 
\paragraph{Approximate Strategyproofness and Manipulability}
A strategyproof mechanism does not allow agents to obtain a strictly positive gain from misreporting. 
\emph{Approximate strategyproofness} is a relaxation of strategyproofness: 
instead of requiring the gain from misreporting to be non-positive, $\varepsilon$-approximate strategyproofness imposes a small (albeit positive) upper bound $\varepsilon$ on this gain. 
The economic intuition is that if the potential gain is small, then agents might 
stick to truthful reporting (e.g., due to cognitive cost). 
To obtain a notion of approximate strategyproofness for ordinal mechanisms, 
we follow earlier work and consider agents with von Neumann-Morgenstern utility functions that are bounded between 0 and 1 \citep{BirrellPass2011ApproximatelyStrategyproofVoting,Carroll2013AQuantitativeApproachToIncentives}. 
This allows the formulation of a parametric measure for incentives: 
the \emph{manipulability} $\varepsilon(\varphi)$ of a mechanism $\varphi$ is the smallest bound $\varepsilon$ for which $\varphi$ is $\varepsilon$-approximately strategyproof.
\paragraph{Desideratum Functions and Deficit}
By allowing higher manipulability, the design space of mechanisms grows. 
This raises the question how the new freedom can be harnessed to improve the performance of mechanisms with respect to other desiderata. 
To measure this performance we introduce the \emph{deficit}. 

We first illustrate its construction via an example: 
consider the desideratum to select an alternative that is the first choice of as many agents as possible. 
For any alternative $a$ it is natural to think of ``the share of agents for whom $a$ is the first choice'' as the \emph{value} that society derives from selecting $a$.  
For some alternative ($b$, say) this value is maximal. 
Then the \emph{deficit of $a$} is the difference between the share of agents with first choice $b$ and the share of agents with first choice $a$. 
A Dictatorship mechanism may select alternatives with strictly positive deficit. 
We define the \emph{deficit} of this mechanism to be the highest deficit of any alternatives that it selects across all possible preference profiles. 

In general, any notion of deficit is constructed in three steps: 
first, we express the desideratum via a \emph{desideratum function}, which specifies the value to society from selecting a given alternative at a given preference profile.
Since we consider random mechanisms, we extend desideratum functions to random outcomes by taking expectations.%
\footnote{%
Many important desiderata can be expressed via desideratum functions, including binary desiderata such as
\emph{unanimity}, 
\emph{Condorcet consistency}, 
\emph{egalitarian fairness},
\emph{Pareto optimality}, 
\emph{$v$-rank efficiency} of assignments, 
\emph{stability} of matchings, 
any desideratum specified via a \emph{target mechanism} (or a \emph{target correspondence}), 
or any \emph{logical combination} of these. 
Moreover, it is possible to express quantitative desiderata, such as 
\emph{maximizing positional score} in voting, 
\emph{maximizing $v$-rank value} of assignments, 
or \emph{minimizing the number of blocking pairs} in matching. 
We discuss the generality and limitations of desideratum functions in Section \ref{APP:GENERAL}.} 
Second, we define the deficit of outcomes based on the desideratum function. 
Third, we define the deficit of mechanisms based on the deficit of outcomes.%
\footnote{%
In the example we considered \emph{absolute} differences to construct the deficit of outcomes and we considered the \emph{worst-case} deficit to construct the deficit of mechanisms. 
Alternatively, we can consider \emph{relative} differences and we can also define an \emph{ex-ante} deficit of mechanisms. 
Many meaningful notions of deficit can be constructed in this way (see Appendix \ref{APP:DEFICITS}); 
the results in this paper hold for any of them.%
}
For any mechanism $\varphi$, we denote its deficit by $\deficit(\varphi)$.
\subsection{Optimal Mechanisms and the Pareto Frontier}
\label{SEC:INTRODUCTION:FRONTIER}
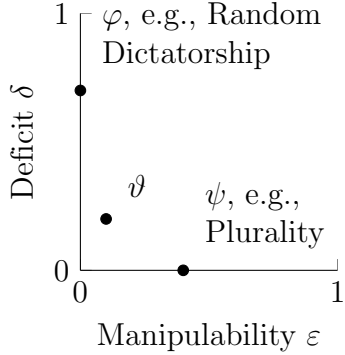
\begin{figure}
\begin{center}
\begin{tikzpicture}[scale=1] 
\begin{axis}[ 
	xlabel={Manipulability $\varepsilon$}, 
	ylabel={Deficit $\deficit$},
	ymin=0,
	ymax=1,
	xmin=0,
	xmax=1, 
	height=5cm,
	width=5cm,
	xtick={0,1},
	ytick={0,1},
	axis lines*=left, 
	clip=false,
	] 
\addplot+[
	mark=*,
	mark options={solid,black},
	only marks,
	]
	coordinates {
	(0.4,0) 
	(0.1,0.2) 
	(0,0.7) 
	}; 
	\node[label={[align=left]45:{$\varphi$, e.g., Random \\ Dictatorship}}] at (axis cs:0,.7) {};
	\node[label={[align=left]45:{$\vartheta$}}] at (axis cs:0.1,.2) {};
	\node[label={[align=left]45:{$\psi$, e.g., \\ Plurality}}] at (axis cs:.4,0) {};
\end{axis} 
\end{tikzpicture}
\end{center}
\caption{Example signatures of mechanisms in a manipulability-deficit-plot.}
\label{FIG:INTRO:EXAMPLE_SIGN}
\end{figure}
Together, manipulability $\varepsilon(\varphi)$ and deficit $\deficit(\varphi)$ yield a way to compare different mechanisms in terms of incentives and their performance with respect to some desideratum. 
Specifically, the \emph{signature} of a mechanism $\varphi$ is the point $(\varepsilon(\varphi),\deficit(\varphi))$ in the unit square $[0,1]\times [0,1]$.
Figure \ref{FIG:INTRO:EXAMPLE_SIGN} illustrates this comparison. 
Ideally, a mechanism would be strategyproof \emph{and} would always select the most desirable alternatives. 
This corresponds to a signature in the origin $(0,0)$. 
However, for desiderata that are incompatible with strategyproofness, designing ideal mechanisms is not possible. 
Instead, there exist strategyproof mechanisms which have a non-trivial deficit, such as Random Dictatorship; 
and there exist value maximizing mechanisms which have non-trivial manipulability, such as Plurality (if the goal is to select an alternative that is the first choice of as many agents as possible). 
Finally, there may exist mechanisms that have intermediate signatures, such as $\vartheta$ in Figure \ref{FIG:INTRO:EXAMPLE_SIGN}. 
Choosing between these mechanisms means to make trade-offs.
\paragraph{Finding Optimal Mechanisms}
Naturally, we want to make \emph{optimal} trade-offs. 
A mechanism is \emph{optimal at manipulability bound $\varepsilon$} if 
it has manipulability of at most $\varepsilon$ and 
it has the lowest deficit among all such mechanisms. 
For a given $\varepsilon$ we denote by $\OPT(\varepsilon)$ the set of all mechanisms that are optimal at $\varepsilon$. 
Given an optimal mechanism, it is not possible to reduce the deficit without increasing manipulability at the same time. 
In this sense, the set of all optimal mechanisms constitutes the \emph{Pareto frontier}. 
Our first result yields a finite set of linear constraints that is equivalent to $\varepsilon$-approximate strategyproofness. 
This equivalence allows us to formulate the linear program \FindOpt, whose solutions uniquely identify the optimal mechanisms at $\varepsilon$.
\paragraph{Trade-offs via Hybrid Mechanisms}
Given two mechanisms, \emph{mixing} them suggests itself as a natural approach to create new mechanisms with intermediate signatures. 
Formally, the \emph{$\beta$-hybrid} of two mechanisms $\varphi$ and $\psi$ is the $\beta$-convex combination of the two mechanisms. 
Such a hybrid can be implemented by first collecting the agents' preference reports, then randomly deciding to use $\psi$ or $\varphi$ with probabilities $\beta$ and $1-\beta$, respectively. 
If $\varphi$ has lower manipulability and $\psi$ has lower deficit, then one would expect their hybrid to inherit a share of both properties. 
Our second result in this paper formalizes this intuition: 
we prove that the signature of a $\beta$-hybrid is always \emph{weakly preferable} on both dimensions to the $\beta$-convex combination of the signatures of the two original mechanisms. 
This insight teaches us that interesting intermediate mechanisms can indeed be obtained via mixing. 

Our result has important consequences for the Pareto frontier: 
it implies that the first unit of manipulability that we sacrifice yields the greatest return in terms of a reduction of deficit. 
Furthermore, the marginal return on further relaxing incentive requirements decreases as the mechanisms become more manipulable. 
This is good news for mechanism designers because it means that the most substantial improvements already arise by relaxing strategyproofness just ``a little bit.'' 
\paragraph{Structural Characterization of the Pareto Frontier}
To fully understand the possible and necessary trade-offs between manipulability and deficit, we need to identify the whole Pareto frontier across \emph{all} manipulability bounds. 
Our main result in this paper is a structural characterization of this Pareto frontier.
We show that there exists a finite set $\varepsilon_0 < \ldots < \varepsilon_K$ of \emph{supporting manipulability bounds}, such that between any two of them ($\varepsilon_{k-1}$ and $\varepsilon_{k}$, say) the Pareto frontier consists precisely of the hybrids of two mechanisms that are optimal at $\varepsilon_{k-1}$ and $\varepsilon_{k}$, respectively. 
Consequently, the two building blocks of the Pareto frontier are, 
first, the optimal mechanisms at the supporting manipulability bounds $\varepsilon_k$ 
and, second, the hybrids of optimal mechanisms at adjacent supporting manipulability bounds for any intermediate $\varepsilon\neq \varepsilon_k$. 
Thus, the Pareto frontier can be represented concisely in terms of a finite number of optimal mechanisms and their hybrids. 
In combination with the linear program \FindOpt, we can exploit this characterization to compute the whole Pareto frontier algorithmically.

\smallskip
In summary, we provide a novel perspective on the possible and necessary trade-offs between incentives and other desiderata. 
Our results unlock the Pareto frontier of random mechanisms to analytic, axiomatic, and algorithmic exploration. 
\section{Related Work}
\label{SEC:RELATED}
Severe impossibility results restrict the design of strategyproof ordinal mechanisms. 
The seminal Gibbard-Satterthwaite Theorem \citep{Gibbard1973ManipulationOfVotingSchemes,Satterthwaite1975StrategyproofnessAndArrowsConditions} established that if all strict preferences over at least 3 alternatives are possible, then the only unanimous, strategyproof, and deterministic mechanisms are dictatorial, and \citet{Gibbard1977ManipulationOfVotingWithChance} extended this result to random mechanisms.
Thus, many important desiderata are incompatible with strategyproofness, such as selecting a Condorcet winner or maximizing Borda count \citep{Pacuit2012VotingMethodsStanfordEncyclopedia}. 
Similar restrictions persist in other domains: 
in the random assignment problem, strategyproofness is incompatible with rank efficiency \citep{Featherstone2011RankBasedRefinementWP}, 
and in the two-sided matching problem, strategyproofness is incompatible with stability \citep{Roth1982MatchingStabilityIncentives}. 

Many research efforts have been made to circumvent these impossibility results and to obtain better performance on other dimensions. 
One way to reconcile strategyproofness with other desiderata is to consider restricted domains. 
\citet{Moulin1980OnStrategyproofnessAndSinglePeakedness} showed that in the single-peaked domain, all strategyproof, anonymous, and efficient mechanisms are variants of the Median mechanism with additional virtual agents, and \citet{EhlersPetersStrocken2002SPProbDecisionSchemesForOneDimSinglePeakedPrefs} extended this result to random mechanisms. 
\citet{ChatterjiSen2013WellBehavedAdmittingDomains} showed that a \emph{semi-single-peaked} structure is \emph{almost} the defining characteristic of domains that admit the design of strategyproof deterministic mechanisms with appealing properties; an analogous result for random mechanisms is outstanding. 

An alternative way to circumvent impossibility results is to continue working in full domains but to relax the strategyproofness requirement ``a little bit.'' 
This can enable the design of mechanisms that come closer to achieving a given desideratum but still have appealing (albeit imperfect) incentive properties. 
\citet{MennleSeuken2017PSP_WP} introduced \emph{partial strategyproofness}, a relaxation of strategyproofness that has particular appeal in the assignment domain. 
\citet{AzevedoBudish2015SPL} proposed \emph{strategyproofness in the large}, which requires that the incentives for any individual agent to misreport should vanish in large markets. 
However, strategyproofness in the large is unsuited for the exact analysis of finite settings which we perform in this paper. 
Instead, we follow \citet{BirrellPass2011ApproximatelyStrategyproofVoting} and \citet{Carroll2013AQuantitativeApproachToIncentives}, who used \emph{approximate strategyproofness for agents with bounded vNM utility functions} to quantify manipulability of non-strategyproof ordinal mechanisms and derived asymptotic results. 
We also use approximate strategyproofness but we give exact results for finite settings. 

Some prior work has considered trade-offs explicitly. 
Using efficiency notions based on dominance relations, \citet{AzizBrandBrill2013OnTradeoffEffAndSPInRandSocialChoice} and \citet{AzizBrandlBrandt2014IncompatibilityOfEffAndSPInRandSocialChoice} proved compatibility and incompatibility of various combinations of incentive and efficiency requirements. 
\citet{Procaccia2010ApproxCircumventGS} considered an \emph{approximation ratio} based on positional scoring and gave bounds on how well strategyproof random mechanisms can approximate optimal positional scores as markets get large. 
While he found most of these to be inapproximable by strategyproof mechanisms, \citet{BirrellPass2011ApproximatelyStrategyproofVoting} obtained positive limit results for the approximation of deterministic target mechanisms via \emph{approximately} strategyproof random mechanisms. 
In \citep{MennleSeuken2017HYB_WP} we studied how hybrid mechanisms make trade-offs between startegyproofness (in terms of the \emph{degree of strategyproofness}) and efficiency (in terms of \emph{dominance}) in random assignment.  
While hybrid mechanisms also play a central role in our present paper, we consider general ordinal mechanisms and different measures (i.e., \emph{manipulability} and \emph{deficit}). 
\section{Formal Model}
\label{SEC:MODEL}
Let $N$ be a set of $n$ \emph{agents} and $M$ be a set of $m$ \emph{alternatives}, where the tuple $(N,M)$ is called a \emph{setting}. 
Each agent $i \in N$ has a \emph{preference order} $P_i$ over alternatives, where 
$P_i: a \succeq b$, 
$P_i: a \succ b$, and
$P_i: a \sim b$ 
denote \emph{weak} preference, \emph{strict} preference, and \emph{indifference}, respectively, 
and $\mathcal{P}$ denotes the set of all preference orders. 
For agent $i$'s preference order $P_i$, the \emph{rank} of alternative $j$ under $P_i$ is the number of alternatives that $i$ strictly prefers to $j$ plus 1, denoted $\text{rank}_{P_i}(j)$.\footnote{1 is added to ensure that first choice alternatives have rank 1, not 0.} 
A \emph{preference profile} $\bm P = (P_i,P_{-i})$ is a collection of preference orders from all agents, and $P_{-i}$ are the preference orders of all other agents, except $i$.
A \emph{(random) mechanism} is a mapping 
$\varphi: \mathcal{P}^N \rightarrow \Delta (M)$. 
Here $\Delta (M)$ is the space of lotteries over alternatives, and any $x \in \Delta (M)$ is called an \emph{outcome}.

We extend agents' preferences over alternatives to preferences over lotteries via von Neumann-Morgenstern utility functions:
each agent $i\in N$ has a \emph{utility function} $u_i:M\rightarrow [0,1]$ that represents their preference order, i.e., $u_i(a) \geq u_i(b)$ holds whenever $P_i: a \succeq b$. 
Note that utility functions are bounded between 0 and 1, so that the model admits a non-degenerate notion of approximate strategyproofness (see Remark \ref{REM:NORMALIZE}).
We denote the set of all utility functions that represent the preference order $P_i$ by $U_{P_i}$. 
\begin{remark}
\label{REMARK:STRICT_PREFERENCES} 
We formulate our results for the full domain but they naturally extend to a variety of domains, including 
the domain of strict preferences, 
the assignment domain, 
and the two-sided matching domain (see Appendix \ref{APP:GENERAL}).
\end{remark}
\section{Approximate Strategyproofness and Manipulability}
\label{SC:SP}
Our goal in this paper is to study mechanisms that trade off manipulability and other desiderata optimally. 
For this purpose we need measures for the performance of different mechanisms with respect to the two dimensions of this trade-off.
In this section, we review approximate strategyproofness, derive a measure for the incentive properties of non-strategyproof mechanisms, and present our first main result.
\subsection{Strategyproofness and Approximate Strategyproofness}
\label{SEC:SP:SP_ASP}
The most demanding incentive concept is strategyproofness. 
It requires that truthful reporting is a dominant strategy for all agents. 
For random mechanisms, this means that truthful reporting always maximizes any agent's \emph{expected utility}.  
\begin{definition}[Strategyproofness]
\label{DEF:STRATEGYPROOF} 
Given a setting $(N,M)$, 
a mechanism $\varphi$ is \emph{strategyproof} 
if for any agent $i \in N$, 
any preference profile $(P_i, P_{-i}) \in \mathcal{P}^N$, 
any utility $u_i \in U_{P_i}$ that represents $P_i$, 
and any misreport $P_i' \in \mathcal{P}$, we have
\begin{equation}
	\sum_{j \in M} u_i(j) \cdot \left( \varphi_j(P_i',P_{-i}) - \varphi_j(P_i,P_{-i}) \right) \leq 0.
\label{EQ:SP_BOUND_INEQUALITY}
\end{equation}
\end{definition}
The left side of (\ref{EQ:SP_BOUND_INEQUALITY}) is the change in its own expected utility that $i$ can affect by falsely reporting $P_i'$ instead of reporting $P_i$. 
For later use, we denote this difference by
\begin{equation}
	\varepsilon(u_i,(P_i,P_{-i}),P_i',\varphi) = \sum_{j \in M} u_i(j) \cdot \left( \varphi_j(P_i',P_{-i}) - \varphi_j(P_i,P_{-i}) \right).
\label{EQ:MANIPULABILITY_IN_SITUATION}
\end{equation}
The fact that $\varepsilon(u_i,(P_i,P_{-i}),P_i',\varphi)$ is upper bounded by 0 for strategyproof mechanisms means that deviating from the true preference report weakly decreases expected utility for any agent in any situation, independent of the other agents' reports. 
Conversely, if a mechanism $\varphi$ is \emph{not} strategyproof, there necessarily exists at least one situation in which $\varepsilon(u_i,(P_i,P_{-i}),P_i',\varphi)$ is strictly positive. 
Imposing a different bound leads to the notion of approximate strategyproofness \citep{BirrellPass2011ApproximatelyStrategyproofVoting,Carroll2013AQuantitativeApproachToIncentives}. 
\begin{definition}[$\varepsilon$-Approximate Strategyproofness]
\label{DEF:ASP}
Given a setting $(N,M)$ and a bound $\varepsilon \in [0,1]$, 
a mechanism $\varphi$ is \emph{$\varepsilon$-approximately strategyproof} 
if for any agent $i \in N$, 
any preference profile $(P_i, P_{-i}) \in \mathcal{P}^N$, 
any utility $u_i \in U_{P_i}$ that represents $P_i$, 
and any misreport $P_i' \in \mathcal{P}$, we have
\begin{equation}
	\varepsilon(u_i,(P_i,P_{-i}),P_i',\varphi) = \sum_{j \in M} u_i(j) \cdot \left( \varphi_j(P_i',P_{-i}) - \varphi_j(P_i,P_{-i}) \right) \leq \varepsilon.
\label{EQ:INCENTIVE_CONSTRAINTS_APPROXIMATE_STRATEGYPROOFNESS}
\end{equation}
\end{definition}
This definition is analogous to Definition \ref{DEF:STRATEGYPROOF} of strategyproofness, except that the upper bound in (\ref{EQ:INCENTIVE_CONSTRAINTS_APPROXIMATE_STRATEGYPROOFNESS}) is $\varepsilon$ instead of 0. 
Thus, $0$-approximate strategyproofness coincides with strategyproofness.
Furthermore, the gain never exceeds 1, which makes $1$-approximate strategyproofness a void constraint that is trivially satisfied by any mechanism.

The interpretation of intermediate values of $\varepsilon \in (0,1)$ is more challenging. 
Unlike utilities in quasi-linear domains, vNM utilities are not comparable across agents. 
Thus, we cannot simply think of $\varepsilon$ as the ``value'' (e.g., in dollars) that an agent can gain by misreporting. 
Instead, $\varepsilon$ is a \emph{relative} bound: 
since $u_i$ is between 0 and 1, a change of magnitude $1$ in expected utility corresponds to the selection of an agent's first choice instead of that agent's last choice. 
Thus, ``1'' is the \emph{maximal gain} from misreporting that \emph{any} agent could obtain under an \emph{arbitrary} mechanism. 
The bound $\varepsilon$ is the \emph{share of this maximal gain} by which any agent can at most improve its expected utility under an $\varepsilon$-approximately strategyproof mechanism. 
\begin{remark}
\label{REM:NORMALIZE}
The bounds on utilities are essential for $\varepsilon$-approximate strategyproofness to be a useful relaxation of strategyproofness for ordinal mechanisms.  
Suppose that a non-strategyproof mechanism $\varphi$ allows a gain of $\varepsilon(u_i,(P_i,P_{-i}),P_i',\varphi) > 0$. 
Then scaling the utility function $u_i$ by a factor $\alpha > 1$ results in a linear increase of this gain. 
Thus, $\varepsilon$-approximate strategyproofness for unbounded utilities would imply strategyproofness.%
\end{remark}
\subsection{Manipulability}
\label{SEC:SP:MANIPULABILITY}
If $\varphi$ is $\varepsilon$-approximately strategyproof, then it is also $\varepsilon'$-approximately strategyproof for any $\varepsilon' \geq \varepsilon$. 
Thus, lower values of $\varepsilon$ correspond to stronger incentives. 
With this in mind, we define the manipulability of a mechanism. 
\begin{definition}[Manipulability]
\label{DEF:MANIPULABILITY}
Given a setting $(N,M)$ and mechanism $\varphi$, the \emph{manipulability of $\varphi$} (\emph{in the setting $(N,M)$}) is given by 
\begin{equation}
	\varepsilon(\varphi) = \min\{\varepsilon' \in [0,1] : \varphi \text{ is }\varepsilon'\text{-approximately strategyproof in }(N,M)\}.
\label{EQ:DEF_MANIPULABILITY}
\end{equation}
\end{definition}
Intuitively, $\varepsilon(\varphi)$ is the lowest bound $\varepsilon'$ for which $\varphi$ is $\varepsilon'$-approximately strategyproof. 
This minimum is in fact attained because all inequalities from (\ref{EQ:INCENTIVE_CONSTRAINTS_APPROXIMATE_STRATEGYPROOFNESS}) are weak. 
Note that in a different setting $(N',M')$, the manipulability may vary. 
However, for all statements in this paper, a setting is held fixed and the value $\varepsilon(\varphi)$ should be understood as the manipulability of the mechanism $\varphi$ in the fixed setting from the respective context. 
\subsection{An Equivalent Condition for Approximate Strategyproofness}
\label{SEC:SP:EQUIVALENT}
Recall that the definition of $\varepsilon$-approximate strategyproofness imposes an upper bound on the gain that agents can obtain by misreporting. 
In particular, inequality (\ref{EQ:INCENTIVE_CONSTRAINTS_APPROXIMATE_STRATEGYPROOFNESS}) must hold for all utility functions that represent the agent's preference order. 
Since there are infinitely many such utility functions, a na\"ive approach to verifying $\varepsilon$-approximate strategyproofness of a given mechanism would involve checking an infinite number of constraints. 
This is somewhat unattractive from an axiomatic perspective and even prohibitive from an algorithmic perspective. 
Fortunately, we can bypass this issue, as the next Theorem \ref{THM:ASP_EQUIV_LINEAR_CONSTRAINTS} shows. 
\begin{theorem}
\label{THM:ASP_EQUIV_LINEAR_CONSTRAINTS}
\ThmASPEquivLinConStatement
\end{theorem}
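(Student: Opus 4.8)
The plan is to collapse the continuum of incentive constraints defining $\varepsilon$-approximate strategyproofness down to the finite family in condition~\ref{item:ThmASPEquivLinCon:fin_lin_con}, exploiting that the gain functional $\varepsilon(u_i,(P_i,P_{-i}),P_i',\varphi)$ from \eqref{EQ:MANIPULABILITY_IN_SITUATION} is \emph{linear} in $u_i$. Fix an agent $i$, a profile $(P_i,P_{-i})$, and a misreport $P_i'$. For each rank $r\in\{1,\ldots,m\}$ introduce the binary utility $u^{(r)}$ with $u^{(r)}(j)=1$ exactly when $\text{rank}_{P_i}(j)\le r$. The key observation is that plugging $u^{(r)}$ into the gain functional annihilates every term outside the top-$r$ alternatives, so $\varepsilon(u^{(r)},(P_i,P_{-i}),P_i',\varphi)$ equals precisely the left-hand side of \eqref{eq:finite_linear_incentive_constraint}. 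Thus condition~\ref{item:ThmASPEquivLinCon:fin_lin_con} is exactly the assertion that the gain is bounded by $\varepsilon$ at each such binary utility.

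For the direction \ref{item:ThmASPEquivLinCon:eASP}~$\Rightarrow$~\ref{item:ThmASPEquivLinCon:fin_lin_con}, each $u^{(r)}$ weakly represents $P_i$ and is therefore an admissible utility in Definition~\ref{DEF:ASP}; applying the $\varepsilon$-approximate strategyproofness inequality to $u^{(r)}$ immediately yields \eqref{eq:finite_linear_incentive_constraint}. Should one read the definition as quantifying only over strictly representing utilities (so that $u^{(r)}$ lies on the boundary of the admissible set when $P_i$ is strict), I would instead argue by continuity: for any strictly representing $u'$ the perturbation $(1-\beta)u^{(r)}+\beta u'$ strictly represents $P_i$ and thus has gain at most $\varepsilon$; since gains are bounded below by $-1$, letting $\beta\to 0$ forces the gain at $u^{(r)}$ itself to be at most $\varepsilon$.

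For the converse \ref{item:ThmASPEquivLinCon:fin_lin_con}~$\Rightarrow$~\ref{item:ThmASPEquivLinCon:eASP}, the crux is a geometric lemma: every representing utility is a convex combination of binary representing utilities, i.e., $U_{P_i}\subseteq\textit{Conv}\!\left(B^{\{0,1\}}(P_i)\right)$, where $B^{\{0,1\}}(P_i)=\{u:M\to\{0,1\}\mid u(j)\ge u(j')\text{ whenever }P_i:j\succeq j'\}$. Granting this, I take an arbitrary $u_i\in U_{P_i}$, write $u_i=\sum_l\alpha_l u^l$ with $u^l\in B^{\{0,1\}}(P_i)$, $\alpha_l\ge 0$, and $\sum_l\alpha_l=1$, and invoke linearity to obtain
\[
	\varepsilon(u_i,(P_i,P_{-i}),P_i',\varphi)=\sum_l\alpha_l\,\varepsilon(u^l,(P_i,P_{-i}),P_i',\varphi)\le\sum_l\alpha_l\,\varepsilon=\varepsilon .
\]
Each term is at most $\varepsilon$ because every $u^l$ is a top-$r$ indicator, for which condition~\ref{item:ThmASPEquivLinCon:fin_lin_con} supplies the bound (the all-zero utility trivially has gain $0$). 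This establishes $\varepsilon$-approximate strategyproofness for all representing utilities, closing the equivalence.

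The main obstacle is the convex-hull lemma. For a strict order $P:j_1\succ\cdots\succ j_m$, the binary representing utilities are exactly the prefix indicators $u^k$ of $\{j_1,\ldots,j_k\}$ for $k\in\{0,\ldots,m\}$, while any representing utility satisfies $1\ge u(j_1)\ge\cdots\ge u(j_m)\ge 0$. I would use the layer-cake decomposition: set $\Delta u(0)=1-u(j_1)$, $\Delta u(k)=u(j_k)-u(j_{k+1})$ for $1\le k\le m-1$, and $\Delta u(m)=u(j_m)$. These weights are nonnegative, telescope to $\sum_k\Delta u(k)=1$, and satisfy $\sum_{k\ge r}\Delta u(k)=u(j_r)$, so that $\sum_k\Delta u(k)\,u^k=u$ exhibits $u$ as the desired convex combination. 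For orders with indifferences, I would merge each indifference class into a single virtual alternative, apply the strict-case argument, and observe that the binary utilities then correspond exactly to those functions that are constant on indifference classes.
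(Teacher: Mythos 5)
Your proposal is correct and follows essentially the same route as the paper's proof: the same prefix-indicator binary utilities, the same layer-cake decomposition $\Delta u(k)$ establishing $U_{P_i}\subseteq\textit{Conv}\bigl(B^{\{0,1\}}(P_i)\bigr)$ (including the merging of indifference classes), and a perturbation-with-lower-bound-$-1$ argument for the forward direction that is the limiting form of the paper's explicit $\beta=\frac{\delta/2}{\varepsilon+\delta+1}$ contradiction. Your explicit handling of the all-zero utility $u^{(0)}$ (gain $0$, not covered by the ranks $r\in\{1,\ldots,m\}$) is a small point the paper glosses over, but it does not change the argument.
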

\begin{proof}[Proof Outline (formal proof in Appendix \ref{APP:PROOFS:ASP_EQUIV_LINEAR_CONSTRAINTS})]
The key idea is to represent any utility function as an element of the convex hull of a certain set of \emph{extreme utility functions}.
For any combination $(i,(P_i,P_{-i}),P_i',k)$ the inequality in statement (2) is just the $\varepsilon$-approximate strategyproofness constraints for one extreme utility function. 
\end{proof}
Theorem \ref{THM:ASP_EQUIV_LINEAR_CONSTRAINTS} yields that $\varepsilon$-approximate strategyproofness can be equivalently expressed as a \emph{finite} set of \emph{weak}, \emph{linear} inequalities. 
This has far-reaching consequences. 
In general, it unlocks approximate strategyproofness for use under the automated mechanism design paradigm \citep{Sandholm2003AutomatedMechanismDesign}. 
Specifically, it enables our identification of optimal mechanisms as solutions to a particular linear program (Section \ref{SEC:OPTIMAL}). 
\section{Desideratum Functions and Deficit}
\label{SEC:VALUE}
While it is important to elicit truthful preferences, good incentives alone do not make a mechanism attractive. 
Instead, it should ultimately select \emph{desirable} alternatives, where desirability depends on the agents' preferences. 
In this section, we introduce a formal method to quantify the ability of mechanisms to achieve a given desideratum. 
\subsection{Desideratum Functions}
\label{SEC:VALUE:DESIDERATUM_FCT}
To express a desideratum formally, we define \emph{desideratum functions}.
These reflect the value that society derives from selecting a particular alternative when the agents have a particular preference profile. 
\begin{definition}[Desideratum Function]
\label{DEF:DESIDERATUM_FCT}
A \emph{desideratum function} is a mapping \hbox{$\desifct:M\times \mathcal{P}^N \rightarrow [0,1]$}, where $\desifct(j,\bm P)$ is the \emph{$\desifct$-value} associated with selecting alternative $j$ when the agents have preferences $\bm P$. 
\end{definition}
We illustrate how desideratum functions represent desiderata with two examples. 
\begin{example}
\label{EX:DEFICIT_PLURALITY}
Suppose that our goal is to select an alternative that is the first choice of as many agents as possible. 
We can define the corresponding desideratum function by setting $\desifct^{\textsc{Plu}}(j,\bm P)=n_j^1/n$, where $n_j^1$ is the number of agents whose first choice under $\bm P$ is $j$. 
Note that $\desifct^{\textsc{Plu}}(j,\bm P)$ is proportional to the \emph{Plurality score} of $j$ under $\bm P$.
\end{example}
\begin{example}
\label{EX:DEFICIT_CONDORCET}
Alternatively, we can consider a binary desideratum. 
An alternative $j$ is a \emph{Condorcet winner at $\bm P$} if it dominates all other alternatives in a pairwise majority comparison.  
A mechanism is \emph{Condorcet consistent} if it selects Condorcet winners whenever they exist. 
We can reflect this desideratum by setting $\desifct^{\textsc{Con}}(j,\bm P)=1$ for any $j$ that is a Condorcet winner at $\bm P$, and $\desifct^{\textsc{Con}}(j,\bm P)=0$ otherwise. 
\end{example}
Desideratum functions are extended to (random) outcomes by taking expectations. 
\begin{definition}[Expected $\desifct$-value]
\label{DEF:EXPECTED_VALUE}
Given a desideratum function $\desifct$, a preference profile $\bm P \in \mathcal{P}^N$, and an outcome 
$x \in \Delta(M)$, 
the \emph{(expected) $\desifct$-value of $x$ at $\bm P$} is given by 
\begin{equation}
	\desifct(x,\bm P) = \sum_{j \in M} x_j \cdot \desifct(j,\bm P).
\end{equation}
\end{definition}
The interpretation of $d(x,\bm P)$ is straightforward: 
if $\desifct$ \emph{quantifies} the value of alternatives (as in Example \ref{EX:DEFICIT_PLURALITY}), then $\desifct(x,\bm P)$ is the \emph{expectation} of the societal value from selecting an alternative according to $x$. 
If $\desifct$ reflects a \emph{binary desideratum} (as in Example \ref{EX:DEFICIT_CONDORCET}), then $\desifct(x,\bm P)$ is the \emph{probability} of selecting an alternative with the desirable property. 
\begin{remark}
\label{REM:RESTRICTIONS}
By taking expectations, the $\desifct$-value of random outcomes is fully determined by the $\desifct$-values of the alternatives. 
This linear structure is a key ingredient to our results. 
In Appendix \ref{APP:GENERAL}, we show that many (but not all) popular desiderata admit such a representation, and we also discuss the limitations. 
\end{remark}
Ideally, mechanisms would always select outcomes that maximize the $d$-value. 
\begin{definition}[$\desifct$-maximizing]
\label{DEF:MAXIMIZING}
Given a desideratum function $\desifct$ and a preference profile $\bm P \in \mathcal{P}^N$, an outcome $x \in \Delta(M)$ is \emph{$\desifct$-maximizing at $\bm P$} if 
$\desifct(x,\bm P) = \max_{j\in M} \desifct(j,\bm P)$. 
A mechanism $\varphi$ is \emph{$\desifct$-maximizing} if, for any $\bm P \in \mathcal{P}^N$, $\varphi(\bm P)$ is $\desifct$-maximizing at $\bm P$.  
\end{definition}
Note that for any preference profile $\bm P$ there always exists at least one alternative that is $\desifct$-maximizing at $\bm P$ by construction. 
Furthermore, any $\desifct$-maximizing random outcome must be a lottery over alternatives that are all $\desifct$-maximizing at $\bm P$. 

\medskip
\textbf{Example \ref{EX:DEFICIT_PLURALITY}, continued.} 
Recall the desideratum function $\desifct^{\textsc{Plu}}(j,\bm P) = n_j^1/n$, where $n_j^1$ is the number of agents who ranked $j$ first under $\bm P$. 
A mechanism is $\desifct^{\textsc{Plu}}$-maximizing if and only if it is a Plurality mechanism (i.e., a mechanism that selects only alternatives which are the first choice of a maximum number of agents). 
\medskip

\textbf{Example \ref{EX:DEFICIT_CONDORCET}, continued.}
Recall that 
$ \desifct^{\textsc{Con}}(j,\bm P) = \mathds{1}_{\{j\text{ Condorcet winners at }\bm P\}}$ 
expresses the desideratum to select Condorcet winners when they exist. 
Indeed, any $\desifct^{\textsc{Con}}$-maximizing mechanism is Condorcet consistent. 
Moreover, at any preference profile $\bm P$ where no Condorcet winner exists, the minimal and maximal achievable $\desifct$-values are both zero. 
At these $\bm P$, the $\desifct^{\textsc{Con}}$-maximizing mechanisms are therefore free to choose any outcome.  
Consequently, maximizing $\desifct^{\textsc{Con}}$-value is \emph{equivalent} to Condorcet consistency. 
\subsection{Deficit of Mechanisms}
\label{SEC:VALUE:MEASURES_MECHANISMS}
Intuitively, the deficit of an outcome is the loss that society incurs from choosing that outcome instead of a $d$-maximizing outcome. 
\begin{definition}[Deficit of Outcomes]
\label{DEF:DEFICIT}
Given a desideratum function $\desifct$, 
a preference profile $\bm P\in \mathcal{P}^N$, 
and an outcome $x \in \Delta(M)$, 
the \emph{$\desifct$-deficit of $x$ at $\bm P$} is 
\begin{equation}
	\deficit_\desifct(x,\bm P) = \max_{j \in M}\desifct(j,\bm P) - \desifct(x,\bm P).
\label{EQ:ABSOLUTE_DEFICIT_DIFFERENCE}
\end{equation}
\end{definition}
\begin{remark}[Relative Deficit]
\label{REM:RELATIVE_DEFICIT}
The difference in (\ref{EQ:ABSOLUTE_DEFICIT_DIFFERENCE}) is \emph{absolute}; 
however, in some situations, it may be more natural to consider \emph{relative} differences, such as the ratio between the achieved and the maximal achievable $\desifct$-value. 
As we show in Appendix \ref{APP:DEFICITS:RELATIVE}, it is without loss of generality that we restrict our attention to absolute differences.  
\end{remark}
Equipped with the deficit of outcomes, we define the deficit of mechanisms. 
This measure is the ``desideratum counterpart'' to the measure $\varepsilon(\varphi)$ for incentive properties.
\begin{definition}[Deficit of Mechanisms]
\label{DEF:WORST_CASE_DEFICIT}
Given a setting $(N,M)$, a desideratum function $\desifct$, and a mechanism $\varphi$, the \emph{(worst-case) $\desifct$-deficit of $\varphi$} (\emph{in $(N,M)$}) is the highest $\desifct$-deficit incurred by $\varphi$ across all preference profiles; 
formally, 
\begin{equation}
	\deficit_{\desifct}^{\max}(\varphi) = \max_{\bm P \in \mathcal{P}^N} \deficit_\desifct(\varphi(\bm P),\bm P).
\end{equation}
\end{definition}
Intuitively, the deficit of the mechanism is determined by the most severe violation of the desideratum across all preference profiles. 
Thus, a mechanism with low deficit violates the desideratum only ``a little bit'' across all possible preference profiles. 
\begin{remark} 
Another way to define the deficit of mechanisms arises when agents' preference profiles are drawn from a known distribution. 
In Appendix \ref{APP:DEFICITS:EXANTE} we define this \emph{ex-ante deficit} formally. 
All results in this paper hold for both notions of deficit. 
\end{remark}
For the remainder of this paper, we fix 
an arbitrary setting $(N,M)$, 
a desideratum expressed via a desideratum function $\desifct$, 
and a notion of deficit $\deficit$ derived from $\desifct$. 
The triple \NMO\ is called a \emph{problem}. 
Unless explicitly stated otherwise, our results are understood to hold for any fixed problem \NMO\ separately. 
\section{Optimal Mechanisms}
\label{SEC:OPTIMAL}

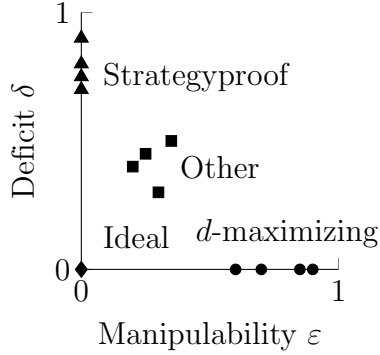
\begin{figure}
\begin{center}
\begin{tikzpicture}[scale=1] 
\begin{axis}[ 
	xlabel={Manipulability $\varepsilon$}, 
	ylabel={Deficit $\deficit$},
	ymin=0,
	ymax=1,
	xmin=0,
	xmax=1, 
	height=5cm,
	width=5cm,
	xtick={0,1},
	ytick={0,1},
	axis lines*=left, 
	clip=false,
	] 
\addplot+[
	mark size=3.0pt,
	mark=triangle*,
	mark options={solid,black},
	only marks,
	]
	coordinates {
	(0,.9) 
	(0,.8) 
	(0,.75)
	(0,.7) 
	}; 
\addplot+[
	mark size=2.0pt,
	mark=*,
	mark options={solid,black},
	only marks,
	]
	coordinates {
	(0.7,0) 
	(0.9,0) 
	(0.85,0)
	(0.6,0) 
	}; 
\addplot+[
	mark size=2.0pt,
	mark=square*,
	mark options={solid,black},
	only marks,
	]
	coordinates {
	(0.3,0.3) 
	(0.2,0.4) 
	(0.25,0.45)
	(0.35,0.5)
	}; 
\addplot+[
	mark size=3.0pt,
	mark=diamond*,
	mark options={solid,black},
	only marks,
	]
	coordinates {
	(0,0) 
	}; 
	\node[label={0:{Strategyproof}}] at (axis cs:0,.75) {};
	\node[label={45:{Ideal}}] at (axis cs:0,0) {};
	\node[label={90:{$d$-maximizing}}] at (axis cs:.8,0) {};
	\node[label={0:{Other}}] at (axis cs:.3,.4) {};
\end{axis} 
\end{tikzpicture}
\end{center}
\caption{Manipulability-deficit-plot with mechanism signatures: strategyproof (triangles), $\desifct$-maximizing (circles), ideal (diamond), others (squares).}
\label{FIG:EXAMPLE_SIGN}
\end{figure}

We are now in a position to formalize and study optimal mechanisms, which are mechanisms that trade off manipulability and deficit optimally. 
\subsection{Signatures of Mechanisms}
\label{SEC:OPTIMAL:SIGNATURE}
To compare different mechanisms, we introduce signatures. 
\begin{definition}[Signature]
\label{DEF:SIGNATURE}
Given a problem \NMO\ and a mechanism $\varphi$, the tuple $\left(\varepsilon(\varphi),\deficit(\varphi)\right) \in [0,1] \times [0,1]$ is called the \emph{signature of $\varphi$} (\emph{in the problem \NMO}).\footnote{Since we fix a problem \NMO, we refer to the tuple $(\varepsilon(\varphi),\deficit(\varphi))$ simply as the \emph{signature of $\varphi$}, keeping in mind that a mechanism's signature may be different for different problems. }
\end{definition}
Signatures allow a convenient graphical representation of the performance of any mechanism in terms of manipulability and deficit.
Figure \ref{FIG:EXAMPLE_SIGN} gives examples of such signatures: 
since $0$-approximate strategyproofness is equivalent to strategyproofness, the signature of any strategyproof mechanism  have an $\varepsilon$-component of 0. 
On the other hand, any $\desifct$-maximizing mechanism have a signature with a $\deficit$-component of 0. 
If an ideal mechanism exists, it has a signature in the origin $(0,0)$. 
Mechanisms that are neither strategyproof nor $d$-maximizing have signatures in the half-open unit square $(0,1]\times (0,1]$. 
\subsection{Definition and Existence of Optimal Mechanisms}
\label{SEC:OPTIMAL:DEFINITION}
\noindent
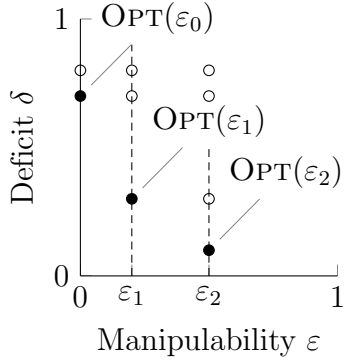
\begin{figure}
\begin{center}
\begin{tikzpicture}[scale=1] 
\begin{axis}[ 
	xlabel={Manipulability $\varepsilon$}, 
	ylabel={Deficit $\deficit$},
	ymin=0,
	ymax=1,
	xmin=0,
	xmax=1, 
	height=5cm,
	width=5cm,
	xtick={0,.2,.5,1},
	xticklabels={$0$,$\varepsilon_1$,$\varepsilon_2$,$1$},
	ytick={0,1},
	axis lines*=left, 
	clip=false,
	] 
\draw[densely dashed] (axis cs:.2,0) -- (axis cs:.2,.9);
\draw[densely dashed] (axis cs:.5,0) -- (axis cs:.5,.5);
\addplot+[
	mark size=2.0pt,
	mark=*,
	mark options={solid,black},
	only marks,
	]
	coordinates {
	(0,0.7) 
	(0.2,.3)
	(0.5,.1)
	}; 
\addplot+[
	mark size=2.0pt,
	mark=o,
	mark options={black},
	only marks,
	]
	coordinates {
	(0.2,.7) 
	(0.5,.7)
	(0.5,.3)
	(0,.8)
	(0.2,.8)
	(0.5,.8)
	}; 
	\node[pin={80:{$\OPT(\varepsilon_0)$}}] at (axis cs:0,.7) {};
	\node[pin={80:{$\OPT(\varepsilon_1)$}}] at (axis cs:0.2,.3) {};
	\node[pin={80:{$\OPT(\varepsilon_2)$}}] at (axis cs:0.5,.1) {};
\end{axis} 
\end{tikzpicture}
\end{center}
\caption{Example signatures of optimal mechanisms (filled circles) and non-optimal mechanisms (empty circles) at manipulability bounds $\varepsilon_0,\varepsilon_1,\varepsilon_2$.}
\label{FIG:EXAMPLE_OPTIMAL}
\end{figure}

When impossibility results prohibit the design of ideal mechanisms, the decision in favor of any mechanism necessarily involves a trade-off between manipulability and deficit. 
To make an informed decision about this trade-off, a mechanism designer must be aware of the different design options. 
One straightforward approach to this problem is to decide on a maximal acceptable manipulability $\varepsilon \in [0,1]$ up front and use a mechanism that minimizes the deficit among all $\varepsilon$-approximately strategyproof mechanisms. 
We define 
$\deficit(\varepsilon) = \min \left\{ \deficit(\varphi)\ |\ \varphi\ \varepsilon\text{-approximately SP} \right\}$ 
to be the lowest deficit that is achievable by any $\varepsilon$-approximately strategyproof mechanism. 
\begin{definition}[Optimal Mechanism]
\label{DEF:OPTIMAL_MECHANISM}
Given a problem \NMO\ and a bound $\varepsilon \in [0,1]$, a mechanism $\varphi$ is \emph{optimal at $\varepsilon$} if $\varphi$ is $\varepsilon$-approximately strategyproof and $\deficit(\varphi) = \deficit(\varepsilon)$.
Denote by $\OPT(\varepsilon)$ the set of all mechanisms that are optimal at $\varepsilon$. 
Any optimal mechanism $\varphi \in \OPT(\varepsilon)$ is called a \emph{representative of $\OPT(\varepsilon)$}.%
\end{definition}
Proposition \ref{PROP:OPTIMAL_NON_EMPTY} shows that optimal mechanisms always exist. 
\begin{proposition}
\label{PROP:OPTIMAL_NON_EMPTY}
\PropOptimalMechanismsNotEmptyStatement
\end{proposition}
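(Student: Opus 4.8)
The plan is to prove existence by a standard compactness-plus-continuity argument (essentially the Weierstrass extreme value theorem). First I would verify that the feasible set is non-empty: any strategyproof mechanism, such as the constant mechanism that always returns a fixed lottery, is $0$-approximately strategyproof and hence $\varepsilon$-approximately strategyproof for every $\varepsilon \in [0,1]$. Since the $\desifct$-deficit of any outcome lies in $[0,1]$, the worst-case (or ex-ante) deficit of any mechanism is bounded by $1$, so the quantity $\deficit(\varepsilon) = \inf\{\deficit(\varphi) \mid \varphi \text{ is } \varepsilon\text{-approximately strategyproof}\}$ is well-defined and finite.

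Next I would argue that the feasible set is compact. A mechanism is a point in the product space $\Delta(M)^{\mathcal{P}^N}$, a finite product of simplices, which is compact. By Theorem \ref{THM:ASP_EQUIV_LINEAR_CONSTRAINTS}, $\varepsilon$-approximate strategyproofness is equivalent to a \emph{finite} system of \emph{weak, linear} inequalities in the entries of $\varphi$; each such inequality cuts out a closed half-space, so the feasible set is the intersection of the compact cube with finitely many closed sets, hence compact. I would then observe that the deficit is continuous: the worst-case deficit is a maximum over the finite set $\mathcal{P}^N$ of the outcome deficits $\deficit_\desifct(\varphi(\bm P),\bm P)$, each of which is affine in $\varphi(\bm P)$ (because $\desifct$ is extended linearly to random outcomes), and a maximum of finitely many affine functions is continuous; the ex-ante deficit is even a single affine function. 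A continuous function attains its minimum on a non-empty compact set, and any minimizer is by definition a representative of $\OPT(\varepsilon)$.

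Alternatively, mirroring the sequence-based formulation, I could take a minimizing sequence $(\varphi_k)_{k\geq 1}$ with $\deficit(\varphi_k) \to \deficit(\varepsilon)$, extract a uniformly convergent subsequence by compactness, and let $\tilde{\varphi}$ be its limit. The key point is that $\tilde{\varphi}$ remains feasible: the finitely many weak linear inequalities from Theorem \ref{THM:ASP_EQUIV_LINEAR_CONSTRAINTS} are satisfied by every $\varphi_k$ and are preserved under limits, so $\tilde{\varphi}$ is $\varepsilon$-approximately strategyproof. Continuity of the deficit then gives $\deficit(\tilde{\varphi}) = \lim_k \deficit(\varphi_k) = \deficit(\varepsilon)$, so $\tilde{\varphi} \in \OPT(\varepsilon)$. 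The only delicate step is precisely this preservation of feasibility in the limit, which is exactly where the equivalence to a finite set of \emph{weak} inequalities is indispensable: an infinite family of constraints (the original definition quantifying over all $u_i \in U_{P_i}$) or any strict inequalities need not pass cleanly to the limit, whereas closedness of each individual weak half-space constraint makes the argument immediate.
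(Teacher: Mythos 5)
Your proposal is correct and follows essentially the same route as the paper: the paper's own proof is exactly your ``alternative'' sequential formulation (minimizing sequence, uniformly convergent subsequence, feasibility of the limit via the finite set of weak linear inequalities from Theorem \ref{THM:ASP_EQUIV_LINEAR_CONSTRAINTS}, and continuity of the deficit as a maximum or weighted average of finitely many affine functions), while your primary Weierstrass phrasing is just a cleaner repackaging of that same compactness-plus-continuity argument. Your closing remark correctly identifies the one delicate point --- that closedness of the feasible set hinges on the constraints being \emph{finitely many weak} inequalities --- which is precisely the role Theorem \ref{THM:ASP_EQUIV_LINEAR_CONSTRAINTS} plays in the paper's proof as well.
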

Existence follows via a compactness argument (see Appendix \ref{APP:PROOFS:OPTIMAL_NON_EMPTY} for a proof).

Proposition \ref{PROP:OPTIMAL_NON_EMPTY} justifies the use of the minimum (rather than the infimum) in the definition of $\deficit(\varepsilon)$, since the deficit $\deficit(\varepsilon) = \deficit(\varphi)$ is in fact attained by some mechanism.
Figure \ref{FIG:EXAMPLE_OPTIMAL} illustrates signatures of optimal and non-optimal mechanisms. 
On the vertical lines at each of the manipulability bounds $\varepsilon_0 = 0, \varepsilon_1, \varepsilon_2$, the 
circles (empty circles) correspond to signatures of non-optimal mechanisms. 
The signatures of optimal mechanisms (filled circles) from $\OPT(\varepsilon_k), k=0,1,2$ take the lowest positions on the vertical lines. 
\subsection{Identifying Optimal Mechanisms}
\label{SEC:OPTIMAL:COMPUTE}
The existence proof for optimal mechanisms is implicit and does not provide a way of actually determining them. 
Our next result establishes a correspondence between the set $\OPT(\varepsilon)$ and the set of solutions to a linear program. 
We can solve this program algorithmically to find representatives of $\OPT(\varepsilon)$ and to compute $\deficit(\varepsilon)$.
\begin{lp}[\FindOpt]
\label{LP:FINDOPT}
\renewcommand\arraystretch{1.4}
\upshape
\begin{equation*}
\begin{array}{llr}
	\text{minimize } & \variable{d} & \text{\emph{(Objective)}} \\
	\text{subject to } & \displaystyle\sum\limits_{j \in M: \text{rank}_{P_i}(j) \leq k} \variable{f_j(P_i',P_{-i})} - \variable{f_j(P_i,P_{-i})} \leq \varepsilon,  & \text{\emph{($\varepsilon$-approximate SP)}} \\
	& \hspace{2em}\forall i \in N, (P_i,P_{-i}) \in \mathcal{P}^N, P_i' \in \mathcal{P}, k \in \{1,\ldots,m\}  &
\\

	& \variable{d} \geq \max_{j\in M}\desifct(j,\bm P) - \sum_{j \in M} \variable{f_j(\bm P)} \cdot \desifct(j,\bm P), \hspace{1em} \forall \bm P \in \mathcal{P}^N & \text{\emph{(Deficit)}} \\	
	& \displaystyle\sum\limits_{j \in M} \variable{f_j(\bm P)} = 1, \hspace{3em} \forall \bm P \in \mathcal{P}^N & \text{\emph{(Probability)}} \\
	&	\variable{f_j(\bm P)} \in [0,1], \hspace{3em} \forall \bm P \in \mathcal{P}^N, j \in M & \text{\emph{(Outcome variables)}} \\
	&	\variable{d} \in [0,1], & \text{\emph{(Deficit variable)}} 
\end{array}
\end{equation*}
\end{lp}
Each variable $\variable{f_j(\bm P)}$ corresponds to the probability with which the mechanism $\varphi$ selects alternative $j$ if the agents report preference profile $\bm P$.
Consequently, any assignment of the variables $\{\variable{f_j(\bm P)} : j \in M, \bm P \in \mathcal{P}^N\}$ corresponds to a unique mapping $\varphi: M \times \mathcal{P}^N \rightarrow \mathds{R}^M$. 
The two constraints labeled \emph{(Probability)} and \emph{(Outcome variables)} ensure that the variable assignment in fact corresponds to a \emph{mechanism} (rather than just a mapping from preference profiles to $\mathds{R}^M$). 
The additional variable $\variable{d}$ represents this mechanism's deficit and the objective is to minimize its value. 
 
The constraints labeled \emph{($\varepsilon$-approximate SP)} reflect the equivalent condition for $\varepsilon$-approximate strategyproofness that we obtained from Theorem \ref{THM:ASP_EQUIV_LINEAR_CONSTRAINTS}. 
In combination, these constraints ensure that the mechanisms corresponding to the feasible variable assignments of \FindOpt\ are \emph{exactly} the $\varepsilon$-approximately strategyproof mechanisms. 
The constraint labeled \emph{(Deficit)} 
makes $\variable{d}$ an upper bound for the deficit of $\varphi$.\footnote{Alternatively, we can implement the \emph{ex-ante deficit} by replacing the \emph{(Deficit)}-constraint (see Appendix \ref{APP:DEFICITS:EXANTE}).}

The following Proposition \ref{PROP:OPT_EQUIVALENT_LP} formalizes the correspondence between optimal mechanisms and  solutions of the linear program \FindOpt. 
\begin{proposition}
\label{PROP:OPT_EQUIVALENT_LP}
Given a problem \NMO\ and a bound $\varepsilon \in [0,1]$, a variable assignment $\{\variable{f_j(\bm P)} : j \in M, \bm P \in \mathcal{P}^N\}$ is a solution of \FindOpt\ at $\varepsilon$ if and only if the mechanism $\varphi$ defined by $\varphi_j(\bm P) = \variable{f_j(\bm P)}$ for all $j\in M, \bm P \in \mathcal{P}^N$ is optimal at $\varepsilon$. 
\end{proposition}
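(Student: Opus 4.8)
The plan is to show that the feasible region of \FindOpt\ corresponds exactly to the $\varepsilon$-approximately strategyproof mechanisms, each paired with an upper bound on its deficit, and that minimizing the objective drives this bound down to the mechanism's true deficit; the equivalence then falls out by matching optimal LP values with $\deficit(\varepsilon)$. Throughout I read ``solution'' to mean an optimal (objective-minimizing) feasible assignment of \FindOpt, not merely a feasible one.

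First I would establish the correspondence between feasible assignments and mechanisms. The constraints labeled \emph{(Probability)} and \emph{(Outcome variables)} guarantee that for every profile $\bm P$ the vector $(\variable{f_j(\bm P)})_{j \in M}$ lies in $\Delta(M)$, so any feasible assignment defines a genuine mechanism $\varphi$ via $\varphi_j(\bm P) = \variable{f_j(\bm P)}$, and conversely every mechanism induces such an assignment. Moreover, the \emph{($\varepsilon$-approximate SP)} constraints are exactly the finite family of linear inequalities in condition~\ref{item:ThmASPEquivLinCon:fin_lin_con} of Theorem~\ref{THM:ASP_EQUIV_LINEAR_CONSTRAINTS} (with the LP index $k$ playing the role of the rank $r$); by that theorem they hold if and only if $\varphi$ is $\varepsilon$-approximately strategyproof. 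Hence, projected onto the $\variable{f}$-variables, the feasible assignments are precisely the $\varepsilon$-approximately strategyproof mechanisms.

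Next I would pin down the role of the variable $\variable{d}$. For a fixed mechanism $\varphi$, the \emph{(Deficit)} constraints state $\variable{d} \geq \max_{j\in M}\desifct(j,\bm P) - \desifct(\varphi(\bm P),\bm P) = \deficit_\desifct(\varphi(\bm P),\bm P)$ for every $\bm P$, i.e. $\variable{d} \geq \max_{\bm P}\deficit_\desifct(\varphi(\bm P),\bm P) = \deficit(\varphi)$. Since nothing forces $\variable{d}$ upward, the least value of $\variable{d}$ compatible with a fixed feasible $\varphi$ is exactly $\deficit(\varphi)$. Minimizing the objective $\variable{d}$ over all feasible assignments therefore yields optimal value $\min\{\deficit(\varphi) : \varphi \text{ is } \varepsilon\text{-approximately SP}\} = \deficit(\varepsilon)$, which is attained by Proposition~\ref{PROP:OPTIMAL_NON_EMPTY}.

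With these observations the two directions follow directly. For ``only if,'' let $(\variable{f},\variable{d})$ be a minimizer of \FindOpt; then $\varphi$ is $\varepsilon$-approximately strategyproof and $\variable{d} = \deficit(\varepsilon)$ equals the LP optimum, so combining with $\variable{d} \geq \deficit(\varphi) \geq \deficit(\varepsilon)$ forces $\deficit(\varphi) = \deficit(\varepsilon)$, whence $\varphi \in \OPT(\varepsilon)$. For ``if,'' given $\varphi \in \OPT(\varepsilon)$ set $\variable{f_j(\bm P)} = \varphi_j(\bm P)$ and $\variable{d} = \deficit(\varphi)$; feasibility is immediate from the preceding paragraphs, and the objective value $\deficit(\varphi) = \deficit(\varepsilon)$ equals the LP optimum, so the assignment is a solution. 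The one delicate point, which I would argue carefully, is the clean separation between the LP variable $\variable{d}$ and the mechanism's actual deficit: one must verify that at any minimizer $\variable{d}$ is squeezed down to $\deficit(\varphi)$ rather than remaining a loose upper bound, which is precisely what minimization of $\variable{d}$ against the family of \emph{(Deficit)} lower bounds ensures. Finally, I would remark that the same argument applies to the ex-ante deficit after the corresponding substitution of the \emph{(Deficit)} constraint.
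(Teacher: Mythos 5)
Your proof is correct and follows essentially the same route as the paper, which simply notes that the proposition ``follows directly from the discussion above'' --- namely, that the \emph{(Probability)}/\emph{(Outcome variables)} constraints make feasible assignments mechanisms, the \emph{($\varepsilon$-approximate SP)} constraints encode Theorem~\ref{THM:ASP_EQUIV_LINEAR_CONSTRAINTS}, and the \emph{(Deficit)} constraints together with minimization of $\variable{d}$ pin the objective to $\deficit(\varepsilon)$. Your added care in verifying that $\variable{d}$ is squeezed down to $\deficit(\varphi)$ at any minimizer is exactly the right detail to make the paper's implicit argument rigorous.
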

The proof follows directly from the discussion above.
One important consequence of Proposition \ref{PROP:OPT_EQUIVALENT_LP} is that we can compute optimal mechanisms for any given problem \NMO\ and any manipulability bound $\varepsilon \in [0,1]$. 
Going back to the mechanism designer's problem of trading off manipulability and deficit, we now have a way of determining optimal mechanisms for particular manipulability bounds $\varepsilon$. 
With \FindOpt\ we can evaluate algorithmically what deficit we must accept when manipulability must not exceed some fixed limit $\varepsilon$. 

Shifting the burden of design to a computer by encoding good mechanisms in 
optimization problems is the central idea of \emph{automated mechanism design} \citep{Sandholm2003AutomatedMechanismDesign}. 
A common challenge with this approach is that 
the optimization problem can become large and difficult to solve; 
and na\"ive implementations of \FindOpt\ will face this issue as well. 
Substantial run-time improvements are possible, e.g., by exploiting additional axioms such as anonymity and neutrality \citep{MennleAbaecherliSeuken2015ComputingFrontiers}. 
Nonetheless, determining optimal mechanisms remains a computationally expensive operation. 

Computational considerations aside, Proposition \ref{PROP:OPT_EQUIVALENT_LP} provides a new understanding of optimal mechanisms: 
since $\OPT(\varepsilon)$ corresponds to the solutions of the linear program \FindOpt, it can be interpreted as a convex polytope. 
In Section \ref{SEC:FRONTIER} we use methods from convex geometry to derive our structural characterization of the Pareto frontier. 
The representation of optimal mechanisms as solutions to \FindOpt\ constitutes the first building block of this characterization. 
\section{Hybrid Mechanisms}
\label{SEC:HYBRIDS}
In this section, we introduce \emph{hybrid mechanisms}, which are convex combinations of two \emph{component mechanisms}. 
Intuitively, by mixing one mechanism with low manipulability and another mechanism with low deficit, we may hope to obtain new mechanisms with intermediate signatures. 
Initially, the construction of hybrids is independent of the study of optimal mechanisms. 
However, in Section \ref{SEC:FRONTIER}, they will constitute the second building block for our structural characterization of the Pareto frontier. 
\begin{definition}[Hybrid]
\label{DEF:HYBRIDS}
For $\beta \in [0,1]$ and mechanisms $\varphi$ and $\psi$, the \emph{$\beta$-hybrid} $h_{\beta}$ is given by
$h_{\beta}(\bm P) = (1-\beta) \varphi(\bm P) + \beta \psi(\bm P) $ 
for any preference profile $\bm P \in \mathcal{P}^N$. 
\end{definition}
In practice, ``running'' a hybrid mechanism is straightforward: first, collect the preference reports. 
Second, toss a $\beta$-coin to determine whether to use $\psi$ (probability $\beta$) or $\varphi$ (probability $1-\beta$). 
Third, apply this mechanism to the reported preference profile. 
Our next result formalizes the intuition that hybrids have at least intermediate signatures. 
\begin{theorem}
\label{THM:HYBRID_GUARANTEES}
\ThmHybridGuaranteesStatement
\end{theorem}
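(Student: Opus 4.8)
The plan is to prove each of the two inequalities by the same template: write the quantity of interest as the maximum of a family of functions that depend \emph{linearly} (or affinely) on the mechanism, substitute the hybrid $h_{\beta} = (1-\beta)\varphi + \beta\psi$, use linearity to split each member of the family into a $\varphi$-part and a $\psi$-part, and then invoke subadditivity of the max-operator, i.e.\ $\max_s(a_s + b_s) \leq \max_s a_s + \max_s b_s$, together with $\beta,1-\beta \geq 0$.

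First, for the manipulability bound (\ref{eq:guarantees_manip_bound}), I would start from the fact that $\varepsilon(\varphi)$ is a worst case: it equals the maximum of the gain term $\varepsilon(u_i,(P_i,P_{-i}),P_i',\varphi)$ over all agents $i$, profiles $(P_i,P_{-i})$, misreports $P_i'$, and admissible utilities $u_i \in U_{P_i}$ (this is just Definition \ref{DEF:MANIPULABILITY}, and the minimum is attained since the constraints are weak). The structural observation I need is that the gain term in (\ref{EQ:MANIPULABILITY_IN_SITUATION}), namely $\sum_{j\in M} u_i(j)\bigl(\varphi_j(P_i',P_{-i}) - \varphi_j(P_i,P_{-i})\bigr)$, is linear in the mechanism. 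Substituting $h_{\beta}$ and distributing therefore gives, for each fixed index, the decomposition $\varepsilon(u_i,\cdot,h_{\beta}) = (1-\beta)\,\varepsilon(u_i,\cdot,\varphi) + \beta\,\varepsilon(u_i,\cdot,\psi)$. Taking the maximum over the common index set and applying subadditivity of the max (with the nonnegative coefficients pulled out) yields exactly $\varepsilon(h_{\beta}) \leq (1-\beta)\varepsilon(\varphi) + \beta\varepsilon(\psi)$. I do not think I even need Theorem \ref{THM:ASP_EQUIV_LINEAR_CONSTRAINTS} here, since linearity of the gain term holds for every $u_i$, not just the extreme ones.

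Second, for the worst-case deficit bound (\ref{eq:guarantees_deficit_bound}), I would analogously write $\deficit(h_{\beta}) = \max_{\bm P}\deficit_{\desifct}(h_{\beta}(\bm P),\bm P)$. The key point is that the per-profile deficit $\deficit_{\desifct}(x,\bm P) = \desifct^{\max}(\bm P) - \sum_{j\in M} x_j\,\desifct(j,\bm P)$ is affine in the outcome $x$, where the term $\desifct^{\max}(\bm P) = \max_{j\in M}\desifct(j,\bm P)$ is a constant independent of $x$. The one place that needs a moment's care is this constant: since $(1-\beta)+\beta = 1$, I can split $\desifct^{\max}(\bm P) = (1-\beta)\desifct^{\max}(\bm P) + \beta\desifct^{\max}(\bm P)$, which makes the per-profile deficit of $h_{\beta}$ decompose exactly as $(1-\beta)\deficit_{\desifct}(\varphi(\bm P),\bm P) + \beta\deficit_{\desifct}(\psi(\bm P),\bm P)$. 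Maximizing over $\bm P$ and again using subadditivity of the max gives the stated bound. For the ex-ante deficit, no max is involved at all: the deficit is a single linear functional of the mechanism (a fixed weighted average over profiles), so the two decompositions above combine into an exact equality $\deficit(h_{\beta}) = (1-\beta)\deficit(\varphi) + \beta\deficit(\psi)$.

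There is no deep obstacle here; the content is entirely the linearity of both the manipulation-gain term and the deficit in the mechanism. The only things I would be careful to get right are the correct direction of the triangle inequality for the max-operator (it goes the way we want precisely because we are bounding a max of sums), and the bookkeeping that splits the profile-dependent constant $\desifct^{\max}(\bm P)$ across the convex combination using $(1-\beta)+\beta=1$. Everything else is routine algebra.
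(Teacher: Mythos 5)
Your proposal is correct and takes essentially the same route as the paper's proof: both exploit linearity of the manipulation-gain term and of the per-profile deficit in the mechanism and then apply subadditivity of the max-operator (what the paper calls the triangle inequality for the $\max$), with exact equality in the ex-ante case because no maximum is taken. Your explicit splitting of the profile-dependent constant $\desifct^{\max}(\bm P)$ via $(1-\beta)+\beta=1$ spells out a step the paper compresses into the remark that $\desifct(\cdot,\bm P)$ is linear in its first argument, but the argument is the same.
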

\begin{proof}[Proof Outline (formal proof in Appendix \ref{APP:PROOFS:HYBRID_GUARANTEES})]
We write out the definitions of $\varepsilon(h_{\beta})$ and $\deficit(h_{\beta})$, each of which may involve taking a maximum.
The two inequalities are then obtained with the help of the triangle inequality. 
\end{proof}
In words, the signatures of $\beta$-hybrids are always weakly better than the $\beta$-convex combination of the signatures of the two component mechanisms. 

There are two important takeaways from Theorem \ref{THM:HYBRID_GUARANTEES}.
First, it yields a strong argument in favor of randomization: 
given two mechanisms with attractive manipulability and deficit, randomizing between the two always yields a mechanism with a signature that is at least as attractive as the $\beta$-convex combination of the signatures of the two mechanisms. 
As Example \ref{EX:STRICT_IMPROVEMENTS_HYBRIDS} in Appendix \ref{APP:HYBRIDS_STRICT_IMPROVEMENT} shows, randomizing in this way can yield \emph{strictly} preferable mechanisms that have \emph{strictly} lower manipulability and \emph{strictly} lower deficit than either of the component mechanisms. 

The second takeaway is that the common fairness requirement of \emph{anonymity} comes ``for free'' in terms of manipulability and deficit (provided that the deficit measure $\deficit$ is itself anonymous): 
given any mechanism $\varphi$, an anonymous mechanism can be constructed by randomly assigning the agents to new roles. 
This yields a hybrid mechanism with many components, each of which corresponds to the original mechanism with agents assigned to different roles. 
Under an anonymous deficit notion, every component will have the same signature as $\varphi$. 
If follows from Theorem \ref{THM:HYBRID_GUARANTEES} that this new anonymous mechanism has a weakly better signature than $\varphi$. 
Similarly, we can impose \emph{neutrality} without having to accept higher manipulability or more deficit (if $\deficit$ is itself neutral). 
We formalize these insights in Appendix \ref{APP:ANON_NEUTR_SYMM_DECOMP}. 
\section{The Pareto Frontier}
\label{SEC:FRONTIER}
Recall that optimal mechanisms are those mechanisms that trade off manipulability and deficit optimally. 
They form the Pareto frontier because we cannot achieve a strictly lower deficit without accepting strictly higher manipulability. 
\begin{definition}[Pareto Frontier]
\label{DEF:PARETO_FRONTIER}
Given a problem \NMO, let $\bar{\varepsilon}$ be the smallest manipulability bound that is compatible with $\desifct$-maximization; 
formally, 
\begin{equation}
	\bar{\varepsilon} = \min\{\varepsilon \in [0,1]~|~\exists \varphi: \varphi\text{ $\desifct$-maximizing \& }\varepsilon\text{-approximately SP}\}.
\end{equation}
The \emph{Pareto frontier} is the union of all mechanisms that are optimal for some manipulability bound $\varepsilon \in [0,\bar{\varepsilon}]$; 
formally, 
\begin{equation}
	\PF = \bigcup_{\varepsilon \in [0,\bar{\varepsilon}]} \OPT(\varepsilon).
\end{equation}
\end{definition}
The special manipulability bound $\bar\varepsilon$ is chosen such that maximal $\desifct$-value can be achieved with an $\bar\varepsilon$-approximately strategyproof mechanisms ($\bar\varphi$, say) but not with any mechanism that has strictly lower manipulability. 
Since $\bar\varphi$ has deficit 0, any mechanism $\varphi$ that is optimal at some larger manipulability bound $\varepsilon > \bar\varepsilon$ may be more manipulable than $\bar\varphi$ but $\varphi$ cannot have a strictly lower deficit. 
Thus, we can restrict attention to manipulability bounds between 0 and $\bar\varepsilon$ (instead of 0 and 1). 
From the perspective of the mechanism designer, mechanisms on the Pareto frontier are the only mechanisms that should be considered; 
if a mechanism is \emph{not} on the Pareto frontier, we can find another mechanism that is a Pareto-improvement in the sense that it has strictly lower manipulability, strictly lower deficit, or both. 
\subsection{Monotonicity and Convexity}
\label{SEC:FRONTIER:MONOTONIC_CONVEX}
Recall that we have defined $\deficit(\varepsilon)$ 
as the smallest deficit that can be attained by any $\varepsilon$-approximately strategyproof mechanism. 
Thus, the signatures of mechanisms on the Pareto frontier are described by the mapping $\varepsilon \mapsto \deficit(\varepsilon)$ that associates each manipulability bound $\varepsilon \in [0,\bar{\varepsilon}]$ with the deficit $\delta(\varepsilon)$ of optimal mechanisms at this manipulability bound.
Based on our results so far, we can already make interesting observations about the Pareto frontier by analyzing this mapping. 
\begin{corollary}
\label{COR:PF_MONOTONIC_CONVEX}
\CorMonotonicConvexStatement
\end{corollary}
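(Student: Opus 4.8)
The plan is to handle the two properties separately, leaning on the existence of optimal mechanisms (Proposition~\ref{PROP:OPTIMAL_NON_EMPTY}) and the hybrid guarantees (Theorem~\ref{THM:HYBRID_GUARANTEES}).

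For monotonicity, I would argue directly from the definition of $\deficit(\varepsilon)$ as a minimum over a feasible set. For any two bounds $\varepsilon \leq \varepsilon'$, every $\varepsilon$-approximately strategyproof mechanism is also $\varepsilon'$-approximately strategyproof, since the incentive constraints from Theorem~\ref{THM:ASP_EQUIV_LINEAR_CONSTRAINTS} only become weaker as their right-hand side grows. Hence the set of candidates in the minimization defining $\deficit(\varepsilon')$ contains the set of candidates defining $\deficit(\varepsilon)$, and a minimum over a larger set can only decrease. This yields $\deficit(\varepsilon') \leq \deficit(\varepsilon)$.

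For convexity, fix $\varepsilon, \varepsilon' \in [0,1]$ and $\beta \in [0,1]$, and set $\varepsilon_{\beta} = (1-\beta)\varepsilon + \beta\varepsilon'$. By Proposition~\ref{PROP:OPTIMAL_NON_EMPTY} I can pick representatives $\varphi \in \OPT(\varepsilon)$ and $\varphi' \in \OPT(\varepsilon')$ and form their $\beta$-hybrid $h_{\beta}$. Theorem~\ref{THM:HYBRID_GUARANTEES} then does the work on both coordinates at once. The manipulability bound gives
\[
  \varepsilon(h_{\beta}) \leq (1-\beta)\varepsilon(\varphi) + \beta\varepsilon(\varphi') \leq (1-\beta)\varepsilon + \beta\varepsilon' = \varepsilon_{\beta},
\]
so $h_{\beta}$ is $\varepsilon_{\beta}$-approximately strategyproof and is therefore a legitimate candidate in the minimization defining $\deficit(\varepsilon_{\beta})$; and the deficit bound gives $\deficit(h_{\beta}) \leq (1-\beta)\deficit(\varphi) + \beta\deficit(\varphi') = (1-\beta)\deficit(\varepsilon) + \beta\deficit(\varepsilon')$. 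Combining these,
\[
  \deficit(\varepsilon_{\beta}) \leq \deficit(h_{\beta}) \leq (1-\beta)\deficit(\varepsilon) + \beta\deficit(\varepsilon'),
\]
which is precisely the convexity inequality. Equivalently, one can phrase this as a contradiction: if convexity failed at some $\varepsilon_{\beta}$, the hybrid $h_{\beta}$ would strictly beat $\deficit(\varepsilon_{\beta})$, contradicting the optimality built into the definition of $\deficit(\varepsilon_{\beta})$.

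The only genuinely load-bearing step is the manipulability half of Theorem~\ref{THM:HYBRID_GUARANTEES}: I need the hybrid's manipulability to stay below $\varepsilon_{\beta}$ so that $h_{\beta}$ actually competes in the problem defining $\deficit(\varepsilon_{\beta})$. If hybrids could amplify manipulability beyond the convex combination of the components' bounds, the candidate set would not contain $h_{\beta}$ and the argument would collapse. Everything else — the existence of the two representatives and the monotonicity claim — is routine.
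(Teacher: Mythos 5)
Your proposal is correct and follows essentially the same route as the paper's own proof: monotonicity from the nesting of the candidate sets as $\varepsilon$ grows, and convexity by taking representatives $\varphi\in\OPT(\varepsilon)$ and $\varphi'\in\OPT(\varepsilon')$, forming their $\beta$-hybrid, and invoking both bounds of Theorem~\ref{THM:HYBRID_GUARANTEES} so that $h_{\beta}$ competes in the minimization defining $\deficit(\varepsilon_{\beta})$. The only cosmetic difference is that the paper phrases the convexity step as a proof by contradiction, which, as you note yourself, is equivalent to your direct chain of inequalities.
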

Monotonicity follows from the definition of optimal mechanisms, and convexity is a consequence of Theorem \ref{THM:HYBRID_GUARANTEES} (see Appendix \ref{APP:PROOFS:PF_MONOTONIC_CONVEX} for a formal proof).
While monotonicity is merely reassuring, 
convexity is non-trivial and very important.
It means that when we relax strategyproofness, the first unit of manipulability that we give up allows the largest reduction of deficit.
For any additional unit of manipulability that we sacrifice, the deficit will be reduced at a lower rate, which means \emph{decreasing marginal returns}. 
Thus, we can expect to capture most gains in $\desifct$-value from relaxing strategyproofness early on. 
Moreover, convexity and monotonicity together imply \emph{continuity}. 
This means that trade-offs along the Pareto frontier are \emph{smooth} in the sense that a tiny reduction of the manipulability bound $\varepsilon$ does not require accepting a vastly higher deficit.  

For mechanism designers, these observations deliver an important lesson: 
the most substantial gains (per unit of manipulability) arise from relaxing strategyproofness just ``a little bit.'' 
This provides encouragement to investigate the gains from accepting even small amounts of manipulability. 
On the other hand, if the initial gains are not worth the sacrifice, then gains from accepting more manipulability will not be ``surprisingly'' attractive either. 
\subsection{A Structural Characterization of the Pareto Frontier}
\label{SEC:FRONTIER:STRUCTURE}
In Section \ref{SEC:OPTIMAL}, we have shown that we can identify optimal mechanisms for individual manipulability bounds 
by solving the linear program \FindOpt. 
In Section \ref{SEC:HYBRIDS}, we have introduced hybrids, and we have shown how mixing two mechanisms results in new mechanisms with intermediate or even superior signatures. 
We now give our main result, a structural characterization of the Pareto frontier in terms of these two building blocks, namely \emph{optimal mechanisms} and \emph{hybrids}. 
\begin{theorem}
\label{THM:FRONTIER_STRUCTURE}
\ThmFrontierStructureStatement
\end{theorem}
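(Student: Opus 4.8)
The plan is to exploit the linear-programming representation of optimal mechanisms from Proposition~\ref{PROP:OPT_EQUIVALENT_LP}. By that result, fixing the problem \NMO\ and writing $x$ for the vector that collects all outcome probabilities $\variable{f_j(\bm P)}$ together with the deficit variable $\variable{d}$, the set $\OPT(\varepsilon)$ is exactly the solution set of a linear program of the form $\min_x \langle v, x\rangle$ subject to $Dx \le d$ and $Ax \le \varepsilon\mathbf{1}$, where $D, d, A, v$ are fixed and the manipulability bound $\varepsilon$ enters \emph{only} as the uniform right-hand side of the incentive constraints (this is where Theorem~\ref{THM:ASP_EQUIV_LINEAR_CONSTRAINTS} is essential, since it renders $\varepsilon$-approximate strategyproofness a finite system of weak linear inequalities). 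Thus the theorem reduces to a statement in \emph{parametric linear programming}: I must show that the feasible region $F_\varepsilon$ and the solution region $S_\varepsilon$ vary piecewise-linearly in $\varepsilon$, and that the number of breakpoints is finite.

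First I would record the elementary \emph{feasibility interpolation} fact: for $\varepsilon_0 < \varepsilon_1$, if $x_0 \in F_{\varepsilon_0}$ and $x_1 \in F_{\varepsilon_1}$, then $(1-\gamma)x_0 + \gamma x_1 \in F_{\varepsilon}$ for $\varepsilon = (1-\gamma)\varepsilon_0 + \gamma\varepsilon_1$. This is immediate from linearity: the $D$-constraints are shared, and each $A$-row interpolates its bound. The ``$\supseteq$'' direction of both interpolation identities then follows directly. For the reverse inclusion I would use the polytope structure: since $F_\varepsilon$ and $S_\varepsilon$ are bounded polyhedra, each equals the convex hull of its extreme points, and every extreme point is the unique solution of some set of $L$ linearly independent active constraints. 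The next step is to show that if such a constraint set is active-and-feasible (``restrictive'') at both $\varepsilon_0$ and $\varepsilon_1$, then it is restrictive at every intermediate $\varepsilon$ and its extreme point there is precisely the $\gamma$-convex combination of the two endpoint extreme points — again a direct linear-algebra computation, since each active $D$-row stays tight at its entry of $d$ and each active $A$-row becomes tight at the interpolated bound $\varepsilon$.

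The crux of the argument is twofold. First, \emph{finiteness}: because there are only finitely many constraints, there are only finitely many candidate constraint sets, and by the interpolation property the set of $\varepsilon$ at which a fixed set is restrictive is a closed subinterval of $[0,\bar\varepsilon]$. Collecting the endpoints of all these subintervals yields the finite partition $\varepsilon_0 < \ldots < \varepsilon_K$, on the interior of each piece of which the family of restrictive constraint sets is constant; this forces $F_\varepsilon = (1-\gamma)F_{\varepsilon_{k-1}} + \gamma F_{\varepsilon_k}$ on each piece via the $\mathcal{V}$-representation. Second, and this is the step I expect to be the main obstacle, I must upgrade feasibility to \emph{optimality}: I need that a constraint set whose extreme point is optimal at an interior $\varepsilon$ is also optimal at both endpoints, and that the optimal extreme points interpolate. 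The delicate point is that optimality is not preserved under arbitrary decompositions. I would argue by contradiction: writing an interior optimal extreme point $x$ as $(1-\gamma)x_0 + \gamma x_1$ with $x_0 \in F_{\varepsilon_{k-1}}$ and $x_1 \in F_{\varepsilon_k}$, if $x_0$ were not optimal at $\varepsilon_{k-1}$ I could replace it by a cheaper feasible $x_0'$ and interpolate to obtain a feasible point at $\varepsilon$ with strictly smaller objective $\langle v, \cdot\rangle$, contradicting $x \in S_\varepsilon$; symmetrically for $x_1$. This exchange argument, together with the convexity of $\varepsilon \mapsto \deficit(\varepsilon)$ already established in Corollary~\ref{COR:PF_MONOTONIC_CONVEX}, yields $S_\varepsilon = (1-\gamma)S_{\varepsilon_{k-1}} + \gamma S_{\varepsilon_k}$.

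Finally I would translate back to mechanisms. Because the correspondence $x \leftrightarrow \varphi$ of Proposition~\ref{PROP:OPT_EQUIVALENT_LP} is linear, the $\gamma$-convex combination of two solutions corresponds exactly to the $\gamma$-hybrid of the associated mechanisms in the sense of Theorem~\ref{THM:HYBRID_GUARANTEES}; hence $S_\varepsilon = (1-\gamma)S_{\varepsilon_{k-1}} + \gamma S_{\varepsilon_k}$ becomes $\OPT(\varepsilon) = (1-\beta)\OPT(\varepsilon_{k-1}) + \beta\OPT(\varepsilon_k)$ with $\beta = \gamma$. Since the deficit variable $\variable{d}$ is one coordinate of $x$, reading off that coordinate gives the claimed linear interpolation $\deficit(\varepsilon) = (1-\beta)\deficit(\varepsilon_{k-1}) + \beta\deficit(\varepsilon_k)$, which completes the proof.
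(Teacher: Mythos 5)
Your proposal is correct and takes essentially the same route as the paper's own proof: the parametric-LP reformulation via Theorem~\ref{THM:ASP_EQUIV_LINEAR_CONSTRAINTS} and Proposition~\ref{PROP:OPT_EQUIVALENT_LP}, the feasibility-interpolation step, the finite partition of $[0,\bar\varepsilon]$ obtained from the intervals on which each set of $L$ independent constraints is restrictive, the $\mathcal{V}$-representation argument giving $F_\varepsilon = (1-\gamma)F_{\varepsilon_{k-1}} + \gamma F_{\varepsilon_k}$, the exchange argument upgrading feasibility to optimality, and the final translation of $S_\varepsilon$ into $\OPT(\varepsilon)$ with the deficit read off the variable $\variable{d}$ correspond one-to-one to Claims~\ref{claim:path_efficient_frontier.1}--\ref{claim:path_efficient_frontier.7} in the paper's proof. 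The only cosmetic deviation is your appeal to Corollary~\ref{COR:PF_MONOTONIC_CONVEX}, which is superfluous since the exchange argument already yields the conclusion on its own.
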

\begin{proof}[Proof Outline (formal proof in Appendix \ref{APP:PROOFS:FRONTIER_STRUCTURE})]
Our proof exploits that $\OPT(\varepsilon)$ corresponds to the solutions of the linear program \FindOpt\ (Section \ref{SEC:OPTIMAL:COMPUTE}) with feasible set $F_{\varepsilon} = \left\{x \left| Dx\leq d, Ax \leq \varepsilon \right.\right\}$, where neither $D$, nor $d$, nor $A$ depend on $\varepsilon$.
First, we show that if a set of constraints is binding for $F_{\varepsilon}$, then it is binding for $F_{\varepsilon'}$ for $\varepsilon'$ within a compact interval $[\varepsilon^-,\varepsilon^+]$ that contains $\varepsilon$ and not binding for any $\varepsilon'' \notin [\varepsilon^-,\varepsilon^+]$. 
With finiteness of the number of constraints of the LP, this yields a finite segmentation of $[0,\bar{\varepsilon}]$.
The vertex-representations \citep{Gruenbaum2003ConvexPolytopesTEXTBOOK} can then be used to show that on each segment $[\varepsilon_{k-1},\varepsilon_k]$, the \emph{solution sets} $ S_{\varepsilon} = \argmin_{F_{\varepsilon}} \variable{d}$ are exactly the $\beta$-convex combinations of $S_{\varepsilon_{k-1}}$ and $S_{\varepsilon_{k}}$ with $\beta = \frac{\varepsilon-\varepsilon_{k-1}}{\varepsilon_{k}-\varepsilon_{k-1}}$.
\end{proof}

\begin{figure}
\begin{center}
\begin{tikzpicture}[scale=.9]
\begin{axis}[ 
	xlabel={Manipulability $\varepsilon$}, 
	ylabel={Deficit $\deficit^{\textsc{Plu}}$},
	ymin=0,
	ymax=1,
	xmin=0,
	xmax=1, 
	height=5cm,
	width=5cm,
	xtick={0,0.2,0.5,1},
	xticklabels={$0$,$\varepsilon_1$,$\varepsilon_2$,$1$},
	ytick={0,1},
	axis lines*=left, 
	clip=false,
	] 
\draw[densely dashed] (axis cs:.35,0) -- (axis cs:.35,.3);
\draw[densely dashed] (axis cs:.5,0) -- (axis cs:.5,.2);
\draw[densely dashed] (axis cs:.2,0) -- (axis cs:.2,.4);
\addplot+[
	mark size=2.0pt,
	mark=*,
	mark options={solid,black},
	only marks,
	]
	coordinates {
	(0.35,0.2) 
	}; 
\addplot+[
	mark size=2.0pt,
	mark=o,
	mark options={black},
	black,
	]
	coordinates {
	(0,.7) 
	(.2,.3)
	(.5,.1)
	(.9,0)
	}; 
	\node[pin={87:{$\varphi_1\in\OPT(\varepsilon_1)$}}] at (axis cs:0.2,.3) {};
	\node[pin={40:{$h_{\beta}\in\OPT((1-\beta)\varepsilon_1 + \beta \varepsilon_2)$}}] at (axis cs:0.35,.2) {};
	\node[pin={5:{$\varphi_2\in\OPT(\varepsilon_2)$}}] at (axis cs:0.5,.1) {};
\end{axis} 
\end{tikzpicture}
\end{center}
\caption{Illustrative example of the signatures of the Pareto frontier.}
\label{FIG:ILLUSTRATIVE_FRONTIER}
\end{figure}
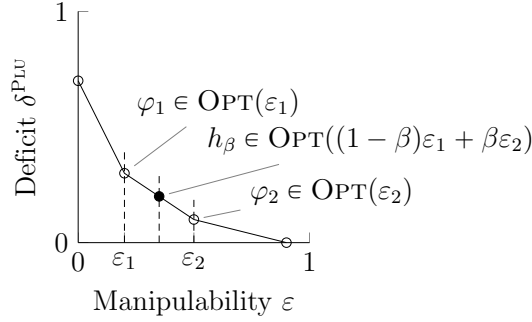

It would be particularly simple if the optimal mechanisms at some manipulability bound $\varepsilon$ 
were just the $\beta$-hybrids of optimal mechanisms at $0$ and $\bar{\varepsilon}$ with $\beta = \varepsilon/\bar{\varepsilon}$. 
While this is not true in general, Theorem \ref{THM:FRONTIER_STRUCTURE} shows that the Pareto frontier has a linear structure over each interval $[\varepsilon_{k-1},\varepsilon_k]$. 
Thus, it is completely identified by two building blocks: 
(1) the \emph{sets of optimal mechanisms} $\OPT(\varepsilon_k)$ for finitely many $\varepsilon_k, k =0,\ldots,K$, 
and 
(2) \emph{hybrid mechanisms}, which provide the missing link for $\varepsilon \neq \varepsilon_k$. 
Representatives of $\OPT(\varepsilon_k)$ can be obtained by solving the linear program \FindOpt\ at the supporting manipulability bounds $\varepsilon_k$; 
and for intermediate bounds $\varepsilon = (1-\beta)\varepsilon_{k-1} + \beta \varepsilon_k$, a mechanism $\varphi$ is optimal at $\varepsilon$ \emph{if and only if} it is a $\beta$-hybrid of two mechanisms $\varphi_{k-1} \in \OPT(\varepsilon_{k-1})$ and $\varphi_k \in \OPT(\varepsilon_k)$.

Theorem \ref{THM:FRONTIER_STRUCTURE} allows an additional insight about the mapping $\varepsilon \mapsto \deficit(\varepsilon)$ (where $\deficit(\varepsilon) = \min\{\deficit(\varphi)~|~\varphi~\varepsilon\text{-approximately SP}\}$). 
We have already observed that this mapping is monotonic, decreasing, convex, and continuous (see Corollary \ref{COR:PF_MONOTONIC_CONVEX}). 
In addition, we obtain \emph{piecewise linearity}.
\begin{corollary}
\label{COR:FRONTIER_PIECEWISE_LINEAR}
Given a problem \NMO, $	\varepsilon \mapsto \deficit(\varepsilon)$ is piecewise linear. 
\end{corollary}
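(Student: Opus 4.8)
The plan is to read off piecewise linearity directly from Theorem \ref{THM:FRONTIER_STRUCTURE}, which already supplies both the finite partition of the domain and the affine interpolation formula on each piece. First I would invoke the theorem to obtain the supporting manipulability bounds $\varepsilon_0 = 0 < \varepsilon_1 < \ldots < \varepsilon_K = \bar{\varepsilon}$, which decompose $[0,\bar{\varepsilon}]$ into finitely many closed intervals $[\varepsilon_{k-1},\varepsilon_k]$.

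Next, I would fix one such interval $[\varepsilon_{k-1},\varepsilon_k]$ and argue that $\varepsilon \mapsto \deficit(\varepsilon)$ is affine there. The theorem gives, for $\beta = \frac{\varepsilon-\varepsilon_{k-1}}{\varepsilon_k-\varepsilon_{k-1}}$, that $\deficit(\varepsilon) = (1-\beta)\deficit(\varepsilon_{k-1}) + \beta\deficit(\varepsilon_k)$. Since $\beta$ is an affine function of $\varepsilon$ (the map $\varepsilon \mapsto \frac{\varepsilon - \varepsilon_{k-1}}{\varepsilon_k - \varepsilon_{k-1}}$ has constant slope $\frac{1}{\varepsilon_k - \varepsilon_{k-1}}$), and $\deficit(\varepsilon)$ depends affinely on $\beta$ with the fixed endpoint values $\deficit(\varepsilon_{k-1})$ and $\deficit(\varepsilon_k)$ as coefficients, the composition is again affine in $\varepsilon$. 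Concretely, substituting $\beta$ yields the single linear expression $\deficit(\varepsilon) = \deficit(\varepsilon_{k-1}) + \frac{\deficit(\varepsilon_k) - \deficit(\varepsilon_{k-1})}{\varepsilon_k - \varepsilon_{k-1}}(\varepsilon - \varepsilon_{k-1})$ on this interval.

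Then I would assemble the pieces. Because adjacent intervals share the breakpoints $\varepsilon_k$, at which the two neighboring affine formulas both evaluate to $\deficit(\varepsilon_k)$, the per-interval restrictions glue into a well-defined continuous function that is affine on each of the finitely many intervals, i.e., piecewise linear on $[0,\bar{\varepsilon}]$; this is also consistent with the continuity already recorded in Corollary \ref{COR:PF_MONOTONIC_CONVEX}. Finally, for $\varepsilon \in [\bar{\varepsilon},1]$ the value $\deficit(\varepsilon) = 0$ is constant, by the definition of $\bar{\varepsilon}$ together with the monotonicity from Corollary \ref{COR:PF_MONOTONIC_CONVEX}, which is trivially linear and extends the piecewise-linear structure to all of $[0,1]$.

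I do not expect any genuine obstacle here: the corollary is essentially a restatement of the deficit half of Theorem \ref{THM:FRONTIER_STRUCTURE}. The only point requiring the slightest care is the bookkeeping that the per-interval affine formulas agree at the shared breakpoints, so that the global map is well defined and continuous rather than merely affine on each open piece.
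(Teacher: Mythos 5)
Your proposal is correct and matches the paper's reasoning exactly: the paper states Corollary \ref{COR:FRONTIER_PIECEWISE_LINEAR} as an immediate consequence of Theorem \ref{THM:FRONTIER_STRUCTURE}, whose interpolation formula $\deficit(\varepsilon) = (1-\beta)\deficit(\varepsilon_{k-1}) + \beta\deficit(\varepsilon_k)$ with affine $\beta$ gives affinity on each of the finitely many intervals $[\varepsilon_{k-1},\varepsilon_k]$, just as you spell out. Your added bookkeeping (agreement of the affine pieces at shared breakpoints and the constant extension $\deficit(\varepsilon)=0$ on $[\bar{\varepsilon},1]$) is sound and only makes explicit what the paper leaves implicit.
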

Figure \ref{FIG:ILLUSTRATIVE_FRONTIER} illustrates the results of Theorem \ref{THM:FRONTIER_STRUCTURE} and Corollary \ref{COR:FRONTIER_PIECEWISE_LINEAR}: 
first, the mapping $\varepsilon \mapsto \deficit(\varepsilon)$ is monotonic, decreasing, convex, continuous, and piecewise linear, and, second, optimal mechanisms with intermediate manipulability can be obtained by mixing two mechanisms that are optimal at the two adjacent supporting manipulability bounds, respectively. 
\section{Computing the Pareto Frontier: Algorithms and Examples}
\label{SEC:COMPUTE}
In this section we discuss an algorithm that exploits the structural characterization from Theorem \ref{THM:FRONTIER_STRUCTURE} to compute all supporting manipulability bounds. 
Then we derive the Pareto frontier for two example problems. 
\subsection{Algorithm \FindBounds}
\label{SEC:COMPUTE:ALGORITHM}
Recall that the linear program \FindOpt\ can be used to determine a representative of the set of optimal mechanisms $\OPT(\varepsilon)$. 
One na\"ive approach to identifying the Pareto frontier would be to run \FindOpt\ for many different manipulability bounds to obtain optimal mechanisms at each of these bounds, and then consider these mechanisms and their hybrids. 
However, this method has two drawbacks: 
first, and most importantly, it would not yield the correct Pareto frontier. 
The result can, at best, be viewed as a \emph{conservative estimate}, since choosing fixed manipulability bounds is not guaranteed to identify any actual supporting manipulability bounds exactly. 
Second, from a computational perspective, executing \FindOpt\ is expensive, which is why we would like to keep the number of executions as low as possible. 

Knowing the structure of the Pareto frontier, we can use the information obtained from iterated applications of \FindOpt\ to interpolate the most promising candidates for supporting manipulability bounds. 
In Appendix \ref{APP:ALGORITHM} we provide the algorithms \FindBounds\ (Algorithm \ref{ALG:FIND_SB}) and \FindLower\ (Algorithm \ref{ALG:FIND_DELTA}) that do this. 
Proposition \ref{PROP:RUNTIME} summarizes their properties. 
\begin{proposition}
\label{PROP:RUNTIME}
\PropComputableStatement
\end{proposition}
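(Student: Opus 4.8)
The plan is to prove the runtime bound by counting, separately for each subroutine, the number of times \FindOpt\ is invoked, and then adding the two counts. Throughout I would rely on the fact that the frontier map $\varepsilon \mapsto \deficit(\varepsilon)$ is decreasing, convex, and piecewise linear (Corollary~\ref{COR:PF_MONOTONIC_CONVEX} and Theorem~\ref{THM:FRONTIER_STRUCTURE}), so that its graph is the lower boundary of a convex region whose only ``kinks'' are the supporting manipulability bounds $\varepsilon_1 < \ldots < \varepsilon_{K-1}$ strictly inside $(0,\bar\varepsilon)$. I treat one call to \FindOpt\ at a bound $\varepsilon$ as a single evaluation of a signature oracle that returns the frontier point $\textit{sign}(\varepsilon) = (\varepsilon,\deficit(\varepsilon))$; each such evaluation is exactly one unit of cost.

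First I would analyse \FindLower, whose task is to return a value $\underline\varepsilon$ provably below the smallest nontrivial bound $\varepsilon_1$. It evaluates $\textit{sign}$ at $0$, $1$, and $1/2$ (three calls), and then repeatedly halves, evaluating $\textit{sign}$ at $1/4, 1/8, \ldots$. It can certify that a point $1/2^{k}$ lies below $\varepsilon_1$ as soon as the three points $\textit{sign}(0)$, $\textit{sign}(1/2^{k})$, $\textit{sign}(1/2^{k-1})$ are colinear, because by convexity colinearity means no kink occurs in $[0,1/2^{k-1}]$, whence $\varepsilon_1 \ge 1/2^{k-1}$ and therefore $1/2^{k} < \varepsilon_1$. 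If $\varepsilon_1 \in [1/2^{n},1/2^{n-1})$, this test first succeeds at $k = n+1$ (it fails at $k=n$ precisely because $\varepsilon_1$ may equal $1/2^{n}$, in which case $\textit{sign}(1/2^{n-1})$ already lies past the kink), so the halving contributes $n = \lceil\log_2(1/\varepsilon_1)\rceil$ calls. Hence \FindLower\ uses $3 + \log_2(1/\varepsilon_1)$ calls.

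Next I would bound \FindBounds, which is the substantive part. This routine maintains a decomposition of $[\underline\varepsilon,1]$ into \emph{segments} spanning pairs of computed signatures and repeatedly processes an unverified segment $(\textit{sign}(\varepsilon^-),\textit{sign}(\varepsilon^+))$: it intersects the affine hulls of the two neighbouring segments to produce a guess $\mathtt{e}$, evaluates $\textit{sign}(\mathtt{e})$, and branches according to whether the new point equals the predicted intersection (certifying a kink at $\mathtt{e}$ together with four adjacent frontier segments), lies on the chord from $\textit{sign}(\varepsilon^-)$ to $\textit{sign}(\varepsilon^+)$ (certifying that chord as kink-free), or neither (splitting the segment in two). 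To run the analysis I would first establish four convexity-driven facts: (i) $\mathtt{e}$ always lies strictly inside $(\varepsilon^-,\varepsilon^+)$; (ii) if a sampled signature lands on the chord between two others, the whole chord is part of the frontier and has no interior kink; (iii) coincidence of $\textit{sign}(\mathtt{e})$ with the affine-hull intersection certifies a kink and the four adjacent frontier segments (two applications of (ii)); and (iv) between any two adjacent supporting bounds at most three interior signatures are ever computed. Correctness --- that no kink is missed --- then follows by contradiction, since a missed kink would have to lie inside a chord verified via (ii) or (iii), both of which exclude interior kinks. Termination and the count follow from (iv): charging one extra evaluation to verify each gap, \FindBounds\ spends at most four \FindOpt\ calls per gap; over the $K-2$ gaps lying strictly between $\varepsilon_1$ and $\bar\varepsilon$, together with the boundary region above $\bar\varepsilon$, this totals at most $4K-4$.

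Adding the two counts gives $(3+\log_2(1/\varepsilon_1)) + (4K-4) = 4K + \log_2(1/\varepsilon_1) - 1$, as required. The hard part will be fact (iv), the ``at most three interior guesses'' bound: I expect it to require a careful case analysis of where a hypothetical fourth interior guess $\varepsilon''''$ sits relative to the three earlier ones, using convexity to argue that in each arrangement either one of the earlier guesses is forced to be an exact kink --- contradicting that it was strictly interior --- or the segment spanning two of them has already been verified via (ii), so that the fourth guess cannot be interior after all. Facts (i)--(iii) are comparatively routine consequences of convexity and of the definition of the affine-hull intersection, but they must be in place before both the correctness argument and (iv) can be carried out.
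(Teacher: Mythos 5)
Your proposal is correct and follows essentially the same route as the paper's own proof: the identical accounting ($3+\log_2(1/\varepsilon_1)$ oracle calls for \FindLower\ via the halving-and-collinearity test, $4K-4$ for \FindBounds\ via charging at most four calls per gap between adjacent supporting bounds), the same three-case branching of the interpolation loop, and your facts (i)--(iv) are precisely the paper's Claims \ref{claim:find_hinge.1}--\ref{claim:find_hinge.4}, including the correctness-by-contradiction argument and the case analysis behind the ``at most three interior guesses'' bound. The only quibble is your parenthetical on why the \FindLower\ test fails at $k=n$: it fails for every $\varepsilon_1\in[1/2^{n},1/2^{n-1})$, not only in the boundary case $\varepsilon_1=1/2^{n}$, since collinearity of $\textit{sign}(0)$, $\textit{sign}(1/2^{n})$, $\textit{sign}(1/2^{n-1})$ would force linearity on all of $[0,1/2^{n-1}]$ and hence $\varepsilon_1\geq 1/2^{n-1}$.
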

Due to space constraints we delegate the detailed discussion of the algorithms and the proof of Proposition \ref{PROP:RUNTIME} to Appendix \ref{APP:ALGORITHM}. 
\subsection{Examples: Plurality Scoring and Veto Scoring}
\label{SEC:COMPUTE:ANALYTIC_EXAMPLES}
In this section, we consider two concrete problems and derive the respective Pareto frontiers.  
The two examples highlight the different shapes that the Pareto frontier can take. 
Both settings contain 3 alternatives and 3 agents with strict preferences over the alternatives, but they differ in the desideratum.
\paragraph{Plurality and Random Dictatorships}
Consider a setting with agents $N = \{1,2,3\}$, alternatives $M = \{a,b,c\}$, and where agents' preferences are strict.
Suppose further that the desideratum is to select alternatives that are the first choice of as many agents as possible. 
From Example \ref{EX:DEFICIT_PLURALITY} we know that the corresponding desideratum function is $\desifct^{\textsc{Plu}}(j,\bm P) = n_j^1/n$, where $n_j^1$ is the number of agents whose first choice under $\bm P$ is $j$. 
Finally, suppose that the deficit $\deficit^{\textsc{Plu}}$ is the worst-case $\desifct^{\textsc{Plu}}$-deficit; 
formally, 
\begin{equation}
	\deficit^{\textsc{Plu}}(\varphi) = \max_{\bm P \in \mathcal{P}^N} \left( \max_{j \in M} \left(\desifct^{\textsc{Plu}}(j,\bm P)\right) - \desifct^{\textsc{Plu}}(\varphi(\bm P),\bm P) \right).
\end{equation}
Using \FindBounds, we find that in this problem, the Pareto frontier has only two supporting manipulability bounds $\varepsilon_0 = 0$ and $\bar{\varepsilon} = \varepsilon_1 = 1/3$. 
Thus, it consists precisely of the hybrids of mechanisms that are optimal at $\varepsilon_0$ and $\varepsilon_1$ by Theorem \ref{THM:FRONTIER_STRUCTURE}. 
Moreover, we can use \FindOpt\ to obtain representatives of $\OPT(0)$ and $\OPT(1/3)$. 
Interestingly, the representatives determined by the algorithm have a familiar structure: 
first, the representative of $\OPT(0)$ corresponds to Random Dictatorship with a uniform choice of the dictator. 
Thus, in this problem, no strategyproof mechanism has lower deficit than Random Dictatorship. 
Second, the representative of $\OPT(1/3)$ corresponds to \emph{Uniform Plurality}, a mechanism that determines the set of $\desifct^{\textsc{Plu}}$-maximizing alternatives and selects one of these uniformly at random. 
Thus, in this problem, no $\desifct$-maximizing mechanism has lower manipulability than Uniform Plurality. 
Figure \ref{FIG:FRONTIER_PLOT_PLURALITY_VETO_SMALL} (a) depicts the signatures of the Pareto frontier. 

\begin{figure}%
\begin{center}
\begin{tikzpicture}[scale=1] 
\begin{groupplot}[group style={group name=my plots, group size = 2 by 1, horizontal sep=3cm, }]
	\nextgroupplot[
		xlabel={Manipulability $\varepsilon$}, 
		ylabel={Deficit $\deficit^{\textsc{Plu}}$},
		ymin=0,
		ymax=0.222222,
		xmin=0,
		xmax=1, 
		height=5cm,
		width=5cm,
		xtick={0,0.3333333333333333,1},
		xticklabels={$0$,$1/3$,$1$},
		ytick={0,.11111111111,.2222222222222222},
		yticklabels={$0$,$1/9$,$2/9$},
		axis lines*=left, 
		clip=false,
		]
		\addplot+[
			mark size=2.0pt,
			mark=o,
			mark options={black},
			black,
			]
			coordinates {
			(0,0.1111111) 
			(0.3333333,0)
		}; 
		\node[pin={[align=left]45:{$\OPT(0)$, e.g., Random \\ Dictatorship}}] at (axis cs:0,0.222222) {};
		\node[pin={[align=left]45:{$\OPT(1/3)$, e.g., \\ Uniform Duple}}] at (axis cs:0.3333333,0) {};	
	\nextgroupplot[ 
		xlabel={Manipulability $\varepsilon$}, 
		ylabel={Deficit $\deficit^{\textsc{Veto}}$},
		ymin=0,
		ymax=0.2222222,
		xmin=0,
		xmax=1, 
		height=5cm,
		width=5cm,
		xtick={0,0.0476190476190476,0.08333333333333,1},
		xticklabels={$0$,,,$1$},
		ytick={0,.11111111111,.2222222222222222},
		yticklabels={$0$,$1/9$,$2/9$},
		axis lines*=left, 
		clip=false,
		] 
	\draw[densely dotted] (axis cs:0,0.2222222) -- (axis cs:0.5,0);
	\addplot[
		mark size=2.0pt,
		mark=o,
		mark options={black},
		black,
		]
		coordinates {
		(0,0.222222) 
		(0.04761904761,0.15873015873)
		(0.08333333333,0.13888888888)
		(.5,0)
		}; 
		\node[pin={[align=left]0:{$\OPT(0)$, e.g., Random Duple}}] at (axis cs:0,0.222222) {};
		\node[pin={[align=left]0:{$\OPT(1/21)$ with $\deficit(1/21) = 10/63$}}] at (axis cs:0.04761904761,.15873015873) {};
		\node[pin={[align=left]0:{$\OPT(1/12)$ with $\deficit(1/12) = 5/36$}}] at (axis cs:0.08333333333,0.1388888888) {};
		\node[pin={[align=left]10:{$\OPT(1/3)$, e.g., Uniform Veto}}] at (axis cs:.5,0) {};
\end{groupplot}
\node[below = 1.0cm of my plots c1r1.south] {(a)};
\node[below = 1.0cm of my plots c2r1.south] {(b)};
\end{tikzpicture} 
\end{center}
\caption{Plot of signatures of the Pareto frontier for $n=3$ agents, $m=3$ alternatives, strict preferences, (a) worst-case $\desifct^{\textsc{Plu}}$-deficit $\deficit^{\textsc{Plu}}$ and (b) worst-case $\desifct^{\textsc{Veto}}$-deficit $\deficit^{\textsc{Veto}}$.}%
\label{FIG:FRONTIER_PLOT_PLURALITY_VETO_SMALL}%
\end{figure}
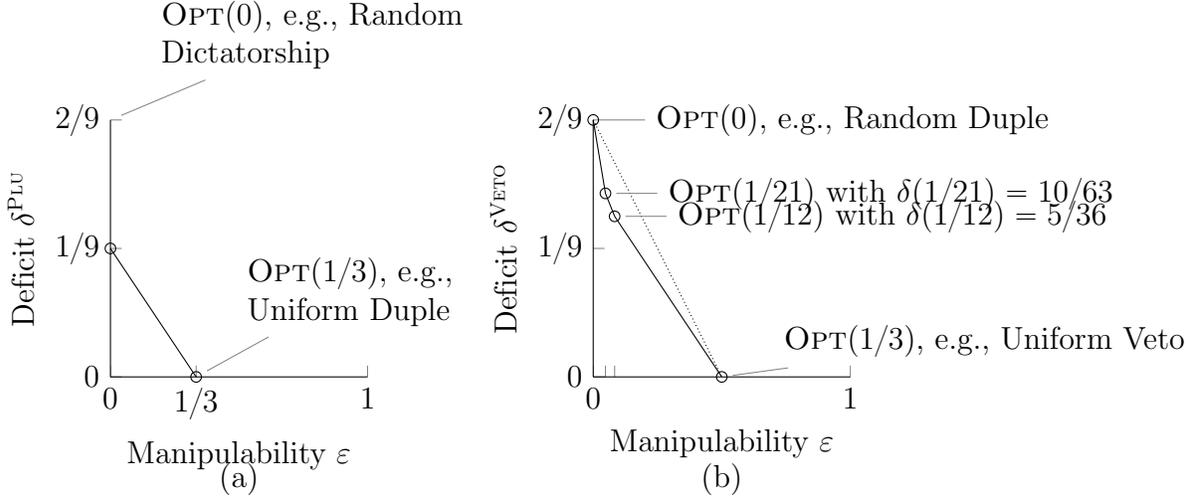

In addition, we prove these insights analytically. 
\begin{proposition}
\label{PROP:PLURALITY_PARETO_FRONTIER}
\PropPluralityParetoFrontierStatement
\end{proposition}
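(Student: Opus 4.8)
The plan is to establish the three claims separately, and throughout to exploit that anonymity and neutrality can be imposed without loss of generality. By the second takeaway from Theorem~\ref{THM:HYBRID_GUARANTEES}, symmetrizing any mechanism over the permutations of agents and of alternatives produces an anonymous, neutral mechanism whose signature is weakly better, so when searching for optimal mechanisms at a fixed manipulability bound it suffices to consider anonymous, neutral candidates. For the strategyproof case I would further invoke Gibbard's characterization of strategyproof random mechanisms \citep{Gibbard1977ManipulationOfVotingWithChance} in its symmetric form, writing any such mechanism as a lottery over neutral strategyproof unilaterals and anonymous strategyproof duples. This decomposition is the main engine for the two lower-bound arguments.

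First I would treat $\OPT(0)$. Random Dictatorship is a lottery of dictatorial unilaterals, hence strategyproof, so it is a candidate. I would compute its worst-case $\desifct^{\textsc{Plu}}$-deficit by case analysis on the number of distinct first choices: the deficit is $0$ when all agents agree or all disagree, and the binding case is two agents sharing a first choice $a$ against a third agent's first choice $b$, where Random Dictatorship yields $\desifct^{\textsc{Plu}}$-value $\tfrac{2}{3}\cdot\tfrac{2}{3}+\tfrac{1}{3}\cdot\tfrac{1}{3}=\tfrac{5}{9}$ against a maximum of $\tfrac{2}{3}$, i.e. deficit $\tfrac{1}{9}$. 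The harder direction is the matching lower bound: fixing a suitable profile, I would use the symmetric decomposition to show that every unilateral or duple component, once averaged over its symmetry orbit, places probability at least $\tfrac{1}{3}$ on a non-maximizing alternative, forcing a $\desifct^{\textsc{Plu}}$-value of at most $\tfrac{5}{9}$ and hence deficit at least $\tfrac{1}{9}$ at that single profile. This pins $\deficit(0)=\tfrac{1}{9}$ and proves claim~(2).

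Next I would treat $\OPT(1/3)$. Here I restrict to anonymous, neutral, $\desifct^{\textsc{Plu}}$-maximizing mechanisms and argue that manipulability can only be created at profiles where all three first choices differ: whenever some alternative is the first choice of at least two agents it is selected with certainty and no agent can profitably deviate. At a fully-distinct profile, anonymity and neutrality force the outcome $(\tfrac13,\tfrac13,\tfrac13)$; taking an agent whose utility is arbitrarily close to the binary utility $u(a)=u(b)=1,\,u(c)=0$ and letting it swap its top two alternatives to make its second choice the unique winner yields a gain approaching $\tfrac13$. Thus every $\desifct^{\textsc{Plu}}$-maximizing mechanism has manipulability at least $\tfrac13$, while a direct check via the rank-cumulative constraints of Theorem~\ref{THM:ASP_EQUIV_LINEAR_CONSTRAINTS} shows Uniform Plurality attains exactly $\tfrac13$; this gives claim~(3) and $\bar\varepsilon = 1/3$.

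Finally, for claim~(1) I would rule out intermediate supporting manipulability bounds using convexity. From the previous parts the frontier passes through $(0,\tfrac19)$ and $(\tfrac13,0)$, and by Corollary~\ref{COR:PF_MONOTONIC_CONVEX} the map $\varepsilon\mapsto\deficit(\varepsilon)$ is convex, hence lies weakly below the chord joining these points. It therefore suffices to exhibit a single interior point on that chord: I would evaluate the optimum at $\varepsilon=\tfrac16$ and show $\deficit(\tfrac16)=\tfrac{1}{18}$ by bounding, at the worst-case profile, the $\desifct^{\textsc{Plu}}$-value attainable under the $\tfrac16$-approximate strategyproofness constraint. Since $\tfrac{1}{18}$ is exactly the midpoint value, a convex function touching its chord at an interior point must coincide with the chord throughout, so the frontier is the straight segment between $(0,\tfrac19)$ and $(\tfrac13,0)$, and by Theorem~\ref{THM:FRONTIER_STRUCTURE} the only supporting manipulability bounds are $0$ and $1/3$. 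I expect the main obstacle to be the two lower-bound arguments, which require the symmetric decomposition together with a carefully chosen witness profile; the convexity step is then comparatively routine, resting only on the single midpoint computation.
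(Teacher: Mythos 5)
Your proposal is correct and follows essentially the same route as the paper's own proof: the symmetric decomposition of strategyproof, anonymous, neutral mechanisms with a witness profile forcing probability $\tfrac{1}{3}$ on a non-maximizing alternative (deficit $\geq \tfrac{1}{9}$ at $\varepsilon=0$), the cyclic profile with the near-binary utility $u(a)=u(b)=1$, $u(c)=0$ forcing manipulability $\geq \tfrac{1}{3}$ for $\desifct^{\textsc{Plu}}$-maximizing mechanisms, and the midpoint computation $\deficit(\tfrac{1}{6})=\tfrac{1}{18}$ combined with convexity to show the frontier coincides with the chord between $(0,\tfrac{1}{9})$ and $(\tfrac{1}{3},0)$. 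The only detail worth making explicit when writing this up is that the $\geq\tfrac{1}{18}$ bound is obtained at the post-misreport profile $(P_1',P_{-1})$, using the $\tfrac{1}{6}$-approximate-strategyproofness constraint at the cyclic profile to bound the probability shift, exactly as in the paper.
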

The formal proof is given in Appendix \ref{APP:PROOFS:PLURALITY_PARETO_FRONTIER}.
\paragraph{Veto and Random Duple}
The second problem again involves a setting with three agents, three alternatives, and agents' preferences are strict.
The difference to the previous problem is the different desideratum: 
this time, our goal is to select an alternative that is the \emph{last} choice of as \emph{few} agents as possible. 
This desideratum is reflected by the desideratum function
$\desifct^{\textsc{Veto}}(j,\bm P) = (n - n_j^m)/n$, 
where $n_j^m$ is the number of agents whose last choice under $\bm P$ was $j$. 
The worst-case $\desifct^{\textsc{Veto}}$-deficit $\deficit^{\textsc{Veto}}$ is given by 
\begin{equation}
	\deficit^{\textsc{Veto}}(\varphi) = \max_{\bm P \in \mathcal{P}^N} \left( \max_{j \in M} \left(\desifct^{\textsc{Veto}}(j,\bm P)\right) - \desifct^{\textsc{Veto}}(\varphi(\bm P),\bm P) \right).
\end{equation}
Again, we use the algorithm \FindBounds\ to determine the supporting manipulability bounds of the Pareto frontier in this problem. 
These are 
$\varepsilon_0 = 0, \varepsilon_1 = 1/21, \varepsilon_2 = 1/12, \bar\varepsilon=\varepsilon_3 = 1/2$. 
As in the previous example, we can compute representatives of the optimal mechanisms at each of these bounds. 
For the extreme bounds $\varepsilon_0=0$ and $\bar\varepsilon=1/2$, these representatives again have a familiar structure: 
first, the representative of $\OPT(0)$ corresponds to \emph{Random Duple}, a mechanism that picks two alternatives uniformly at random and then selects the one that is preferred to the other by a majority of the agents (breaking ties randomly). 
Second, the representative of $\OPT(1/2)$ corresponds to \emph{Uniform Veto}, a mechanism that finds all alternatives that are the last choice of a minimal number of agents and selects one of these alternatives uniformly at random. 
Figure \ref{FIG:FRONTIER_PLOT_PLURALITY_VETO_SMALL} (b) depicts the signatures of the Pareto frontier in this problem. 

We also created representatives of $\OPT(1/21)$ and $\OPT(1/12)$ via \FindOpt\ (they are given in Appendix \ref{APP:PROOFS:VETO_PARETO_FRONTIER} in their numerical form). 
While the interpretation of these two mechanisms is not straightforward, it is clear that neither of them is a hybrid Random Duple and Uniform Veto. 
Indeed, in order to generate mechanisms on the Pareto frontier in this problem, we cannot simply consider hybrids of optimal mechanisms from the extreme supporting manipulability bounds. 
Instead, we can (and have to) exploit the additional design freedom in the particular problem by separately designing optimal mechanisms for the two intermediate supporting manipulability bounds $\varepsilon_1=1/21$ and $\varepsilon_2=1/12$ specifically. 

As in the previous example, we convince ourselves of the correctness of these assertions by proving them analytically.
\begin{proposition}
\label{PROP:VETO_PARETO_FRONTIER}
\PropVetoParetoFrontierStatement
\end{proposition}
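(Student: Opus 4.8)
The plan is to mirror the structure of the proof of Proposition~\ref{PROP:PLURALITY_PARETO_FRONTIER}, handling the two extreme supporting bounds analytically and then pinning down the two intermediate bounds. Throughout, I would invoke the second takeaway from Theorem~\ref{THM:HYBRID_GUARANTEES} to restrict attention to anonymous and neutral mechanisms without loss of generality, and I would use the symmetric decomposition (Theorem~\ref{THM:SYMMETRIC_DECOMP}) to write any strategyproof, anonymous, neutral mechanism as a lottery over neutral strategyproof unilaterals and anonymous strategyproof duples. The endpoints $\varepsilon_0 = 0$ and $\bar\varepsilon = 1/2$ anchor the frontier, while establishing \emph{exactly} the intermediate bounds $1/21$ and $1/12$ is the delicate part.

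First I would show that Random Duple is optimal at $\varepsilon_0 = 0$. Since it is a lottery over duples it is strategyproof, hence a candidate for $\OPT(0)$. I would compute its worst-case deficit by checking a profile where two agents rank some alternative last and the third vetoes a different alternative; concretely, at $P_1,P_2: b \succ a \succ c$ and $P_3: c \succ a \succ b$ the $\desifct^{\textsc{Veto}}$-maximal alternative $a$ has value $1$ while Random Duple's outcome ($a$ with probability $\frac13$, $b$ with probability $\frac23$) yields value $\frac13\cdot 1 + \frac23\cdot\frac23 = \frac79$, i.e.\ deficit $\frac29$. To prove no strategyproof mechanism beats this, I would feed this extremal profile into the symmetric decomposition: averaging each neutral strategyproof unilateral and each anonymous strategyproof duple over its symmetry orbit forces probability at least $\frac13$ onto the ``wrong'' alternative $b$ there, so the deficit cannot fall below $\frac29$.

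Next I would show Uniform Veto is optimal at $\bar\varepsilon = 1/2$. I would first verify $\varepsilon(\text{Uniform Veto}) = 1/2$ by a finite case analysis: manipulation can only help at profiles where the last choices differ, and exhausting the finitely many constellations of last choices (up to renaming agents and alternatives) shows the largest attainable gain is exactly $\frac12$, attained when an agent vetoes its second choice to force its first. I would then show every $\desifct^{\textsc{Veto}}$-maximizing mechanism has manipulability at least $\frac12$ by exhibiting a profile where anonymity, neutrality, and $\desifct^{\textsc{Veto}}$-maximization pin down an outcome from which some agent gains $\geq\frac12$ by vetoing, thereby also establishing $\bar\varepsilon = 1/2$.

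Finally, for the intermediate structure and part~(4), I would run \FindBounds\ (with the run-time guarantee of Proposition~\ref{PROP:RUNTIME}) to locate the candidate bounds $1/21$ and $1/12$, exhibit explicit optimal mechanisms at each (as tabulated), and verify that their signatures lie strictly below the segment from $(0,2/9)$ to $(1/2,0)$. To see that hybrids of Random Duple and Uniform Veto trace exactly that segment, I would note that Uniform Veto has zero deficit everywhere, that Random Duple's deficit is attained at the unanimous profile $a \succ b \succ c$, and that by linearity of $\desifct^{\textsc{Veto}}(x,\bm P)$ in $x$ together with linearity of the incentive constraints from Theorem~\ref{THM:ASP_EQUIV_LINEAR_CONSTRAINTS}, both the deficit and the manipulability of every hybrid are governed by this single profile and the veto misreport there; hence the signatures are collinear between $(0,2/9)$ and $(1/2,0)$. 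Since $\varepsilon\mapsto\deficit(\varepsilon)$ is convex (Corollary~\ref{COR:PF_MONOTONIC_CONVEX}) and the intermediate optima sit strictly below this line, the hybrids are suboptimal for $\beta\in(0,1/2)$, and convexity together with the four verified signatures forces exactly the four claimed supporting bounds. The main obstacle is the matching lower bounds at the two intermediate bounds: unlike the endpoints, there is no clean symmetry argument, so proving that $\deficit(1/21)$ and $\deficit(1/12)$ cannot be improved would require either an LP-duality certificate or a careful direct argument anchored to the explicit optimal mechanisms.
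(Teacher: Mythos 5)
Your proposal matches the paper's proof essentially step for step: Random Duple's optimality at $0$ via the symmetric decomposition and the deficit-$\frac{2}{9}$ profile, Uniform Veto's tight manipulability $\frac{1}{2}$ via the same exhaustive case analysis and the pinned-down profile $P_1,P_2: a \succ b \succ c$, $P_3: b \succ a \succ c$, the collinearity of the hybrids' signatures between $(0,2/9)$ and $(1/2,0)$ via the single binding profile and linearity, and the intermediate bounds $1/21$ and $1/12$ located by \FindLower\ and \FindBounds\ with explicit optimal mechanisms tabulated. The obstacle you flag at the end is resolved in the paper exactly by the machinery you invoke: since solutions of \FindOpt\ characterize $\OPT(\varepsilon)$ exactly (Proposition \ref{PROP:OPT_EQUIVALENT_LP}), the computed signatures at $1/21$ and $1/12$ \emph{are} the matching lower bounds, and the correctness of \FindBounds\ (Proposition \ref{PROP:RUNTIME}) certifies that no further supporting bounds exist, so no separate duality certificate is needed.
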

The formal proof is given in Appendix \ref{APP:PROOFS:VETO_PARETO_FRONTIER}.
\begin{remark} 
The fact that we obtain Random Dictatorship and Random Duple as the optimal strategyproof mechanisms in the two problems, respectively, highlights a connection between our results and the characterization of strategyproof random mechanisms by \citet{Gibbard1977ManipulationOfVotingWithChance}. 
We discuss this connection in Appendix \ref{APP:ANON_NEUTR_SYMM_DECOMP} and we present a new symmetric decomposition result that may be of independent interest. 
\end{remark}
\paragraph{Pareto Frontiers in Larger Settings}
For larger settings analytic insights about the Pareto frontier are presently not available. 
However, we can apply the algorithm \FindBounds\ to identify the Pareto frontier algorithmically.\footnote{This can be achieved by solving reduced linear programs, see \citep{MennleAbaecherliSeuken2015ComputingFrontiers}.} 
Figure \ref{FIG:FRONTIER_PLOT_PLURALITY_VETO_LARGE} shows signatures of Pareto frontiers for $m=3$ alternatives, up to $n =18$ agents, and the worst-case $d^{\textsc{Plu}}$-deficit (a) and $d^{\textsc{Veto}}$-deficit (b), respectively. 

The plots suggest that the Pareto frontier converges to an L-shape as the number of agents grows. 
Based on this observation, we conjecture that for these two desiderata near-ideal mechanisms can be found in large markets. 
If true, we could find mechanisms that are \emph{almost} strategyproof and \emph{almost} achieve the desideratum perfectly. 
Proving these limit results will be an interesting topic for future research. 

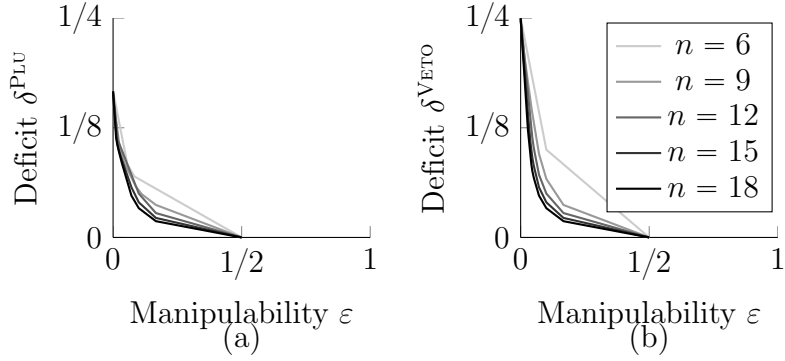
\begin{figure}[t]%
\begin{center}
\begin{tikzpicture}[scale=1]

\begin{groupplot}[
	group style={group name=my plots, group size = 2 by 1, horizontal sep=2cm, },	
	]
	  \nextgroupplot[
			xlabel={Manipulability $\varepsilon$}, 
			ylabel={Deficit $\deficit^{\textsc{Plu}}$},
			ymin=0,
			ymax=.25,
			xmin=0,
			xmax=1, 
			height=4.5cm,
			width=5cm,
			xtick={0,.5,1},
			xticklabels={$0$,$1/2$,$1$},
			ytick={0,.125,.25},
			yticklabels={$0$,$1/8$,$1/4$},
			axis lines*=left, 
			clip=false,
			]
			\addplot[thick,color=Black!20] coordinates { 
				(0,0.166666666666666) (0.0555555555555553,0.0833333333333335) (0.0833333333333329,0.0694444444444445) (0.5,0) 
			};
			\addplot[thick,color=Black!40] coordinates { 
				(0,0.160493827160494) (0.0123456790123429,0.13168724279836) (0.0242656449552998,0.105576841209025) (0.0244755244755153,0.105283605283616) (0.0307692307692302,0.0974358974358979) (0.0342465753424677,0.0936073059360712) (0.0438596491228049,0.0847953216374281) (0.0877192982456125,0.0584795321637433) (0.11111111111111,0.0493827160493828) (0.166666666666666,0.0370370370370372) (0.499999999999999,0) 				
			};
			\addplot[thick,color=Black!60] coordinates { 
				(0,0.166666666666668) (0.0183150183150154,0.11172161172162) (0.0187032418952617,0.110972568578553) (0.0999999999999999,0.05) (0.166666666666666,0.0277777777777777) (0.5,0) 
			};
			\addplot[thick,color=Black!80] coordinates { 
				(0,0.164444444444434) (0.00182149362525058,0.157377049178499) (0.00238095238095272,0.155238095238093) (0.0112994350282485,0.122033898305084) (0.0139552801174379,0.112565233204496) (0.0140753677697606,0.112320859372948) (0.0141997957408191,0.112069195950117) (0.0144888789870446,0.111496788088069) (0.0145900277483926,0.111298917931587) (0.0147343795570679,0.111020472984778) (0.015916088295098,0.108751239367632) (0.0173393973848762,0.106196702671975) 
(0.0174789915966376,0.105949579831934) (0.0185469305315489,0.104134703152007) (0.0222222222222217,0.1) (0.0277777777777776,0.0944444444444445) (0.0322580645161282,0.090322580645162) (0.071428571428571,0.0571428571428574) (0.1,0.0399999999999999) (0.166666666666666,0.0222222222222222) (0.5,0) 
			};
			\addplot[thick,color=Black!100] coordinates { 
				(0,0.166666666666788) (0.0117845117844223,0.113636363636766) (0.0132275132275303,0.111111111111081) (0.0157232704402512,0.106918238993711) (0.0163098878695223,0.106014271151884) (0.0225988700564977,0.0979284369114871) (0.034013605442177,0.085034013605442) (0.0714285714285714,0.0476190476190475) (0.0999999999999999,0.0333333333333333) (0.166666666666666,0.0185185185185185) (0.5,0) 
			};
		\nextgroupplot[
			xlabel={Manipulability $\varepsilon$}, 
			ylabel={Deficit $\deficit^{\textsc{Veto}}$},
			ymin=0,
			ymax=0.25,
			xmin=0,
			xmax=1, 
			height=4.5cm,
			width=5cm,
			xtick={0,.5,1},
			xticklabels={$0$,$1/2$,$1$},
			ytick={0,.125,.25},
			yticklabels={$0$,$1/8$,$1/4$},
			axis lines*=left, 
			clip=false,
			]
			\addplot[thick,color=Black!20] coordinates { 
				(0,0.25) (0.1,0.1) (0.499999999999994,0) 
			};
			\addlegendentry{$n=6$};
			\addplot[thick,color=Black!40] coordinates { 
				(0,0.246913580246913) (0.0134680134680074,0.210250654695114) (0.016032064128256,0.20374081496326) (0.043478260869565,0.144927536231884) (0.0714285714285714,0.0952380952380951) (0.1,0.0666666666666664) (0.166666666666667,0.0370370370370369) (0.5,0) 	
			};
			\addlegendentry{$n=9$};
			\addplot[thick,color=Black!60] coordinates { 
				(0,0.25) (0.0454545454545453,0.113636363636363) (0.0499999999999996,0.104166666666667) (0.0535714285714285,0.0982142857142856) (0.0714285714285714,0.0714285714285714) (0.0999999999999997,0.05) (0.166666666666666,0.0277777777777778) (0.5,0) 
			};
			\addlegendentry{$n=12$};
			\addplot[thick,color=Black!80] coordinates { 
				(0,0.248888888888888) (0.00600231907782644,0.224159334288244) (0.00660066006600677,0.221782178217821) (0.0301204819277105,0.134939759036145) (0.0384615384615387,0.107692307692306) (0.0454545454545454,0.0909090909090909) (0.0500000000000005,0.0833333333333325) (0.0535714285714288,0.0785714285714283) (0.0714285714285711,0.0571428571428574) (0.0999999999999999,0.04) (0.166666666666666,0.0222222222222222) (0.499999999999999,0) 
			};
			\addlegendentry{$n=15$};
			\addplot[thick,color=Black!100] coordinates { 
				(0,0.249999999999943) (0.028301886792554,0.122641509433507) (0.0306122448979518,0.115646258503423) (0.0416666666666666,0.0833333333333333) (0.0454545454545456,0.0757575757575755) (0.05,0.0694444444444442) (0.0535714285714294,0.0654761904761896) (0.0714285714285717,0.0476190476190473) (0.0999999999999998,0.0333333333333334) (0.166666666666666,0.0185185185185185) (0.499999999999996,0) 
			};
			\addlegendentry{$n=18$};
\end{groupplot}
\node[below = 1.0cm of my plots c1r1.south] {(a)};
\node[below = 1.0cm of my plots c2r1.south] {(b)};
\end{tikzpicture}
\end{center}
\caption{Plots of signatures of the Pareto frontier for $n\in\{6,9,12,15,18\}$ agents, $m=3$ alternatives, strict preferences, and worst-case $\desifct^{\textsc{Plu}}$-deficit (left) and worst-case $\desifct^{\textsc{Veto}}$-deficit (right).}
\label{FIG:FRONTIER_PLOT_PLURALITY_VETO_LARGE}
\end{figure}

\section{Conclusion}
\label{SEC:CONCLUSION}
In this paper, we have presented a structural characterization of the Pareto frontier for random ordinal mechanisms. 
Loosely speaking, the \emph{Pareto frontier} consists of those mechanisms that make optimal trade-offs between incentive properties (measured by manipulability $\varepsilon(\varphi)$) and other desiderata (measured by deficit $\deficit(\varphi)$)), such as Condorcet consistency or the goal of choosing an alternative that is the first choice of as many agents as possible. 

We have achieved our main result in three distinct steps: 
first, we have shown that $\varepsilon$-approximate strategyproofness can be equivalently expressed by a finite set of linear constraints. 
This has enabled us to define the linear program \FindOpt\ to identify the set of all mechanisms that have minimal deficit but satisfy $\varepsilon$-approximate strategyproofness for a given manipulability bound $\varepsilon$. 
Second, we have shown how hybrid mechanisms, which are convex combinations of two component mechanisms, trade off manipulability and deficit. 
In particular, we have given a guarantee that the signature of a $\beta$-hybrid $h_{\beta}$ is always at least as good as the $\beta$-convex combination of the signatures of the two component mechanisms. 
Third, we have shown that the Pareto frontier consists of two building blocks: 
(1) there exists a finite set of \emph{supporting manipulability bounds} $\varepsilon_0,\ldots,\varepsilon_K$ such that we can characterize the set of optimal mechanisms at each of the bounds $\varepsilon_k$ as the set of solutions to the linear program \FindOpt\ at $\varepsilon_k$, and 
(2) for any intermediate manipulability bound $\varepsilon = (1-\beta)\varepsilon_{k-1} + \beta \varepsilon_k$, the set of optimal mechanisms at $\varepsilon$ is precisely the set of $\beta$-hybrids of optimal mechanisms at each of the two adjacent supporting manipulability bounds $\varepsilon_{k-1}$ and $\varepsilon_k$. 

Our results have a number of interesting consequences (beyond their relevance in this paper): 
first, Theorem \ref{THM:ASP_EQUIV_LINEAR_CONSTRAINTS} gives a finite set of linear constraints that is equivalent to $\varepsilon$-approximate strategyproofness. 
This makes $\varepsilon$-approximate strategyproofness accessible to algorithmic analysis. 
In particular, it enables the use of this incentive requirement under the automated mechanism design paradigm. 

Second, the performance guarantees for hybrid mechanisms from Theorem \ref{THM:HYBRID_GUARANTEES} 
yield convincing arguments in favor of randomization. 
In particular, we learn that the important requirements of anonymity and neutrality come ``for free;'' 
mechanism designers do not have to accept a less desirable signature when imposing either or both (provided that the deficit measure is anonymous, neutral, or both).

Third, our main result, Theorem \ref{THM:FRONTIER_STRUCTURE}, has provided a \emph{structural understanding} of the whole Pareto frontier. 
Knowledge of the Pareto frontier enables mechanism designers to make a completely informed decision about trade-offs between manipulability and deficit. 
In particular, we now have a way to determine precisely by how much the performance of mechanisms (with respect to a given desideratum) can be improved when allowing additional manipulability. 
An important learning is that the mapping $\varepsilon\mapsto\deficit(\varepsilon)$, which associates each manipulability bound with the lowest achievable deficit at this manipulability bound, is \emph{monotonic}, \emph{decreasing}, \emph{convex}, \emph{continuous}, and \emph{piecewise linear}. 
This means that when trading off manipulability and deficit along the Pareto frontier, the trade-offs are smooth, and
the earliest sacrifices yield the greatest gains. 
Thus, it can be worthwhile to consider even small bounds $\varepsilon >0$ in order to obtain substantial improvements. 

Finally, we have illustrated our results by considering two concrete problems. 
In both problems, three agents had strict preferences over three alternatives. 
In the first problem, the desideratum was to choose an alternative that is the \emph{first choice of as many agents as possible} (i.e., Plurality scoring), and in the second problem, the desideratum was to choose an alternative that is the \emph{last choice of as few agents as possible} (i.e., Veto scoring).  
In both problems, we have computed the Pareto frontier and verified the resulting structure analytically. 
The examples have shown that the Pareto frontier may be completely linear (first problem) or truly non-linear (second problem). 
For the same desiderata and up to $n=18$ agents we have determined the Pareto frontier algorithmically and formulated a conjecture about its limit behavior. 

In summary, we have given novel insights about the Pareto frontier for random ordinal mechanisms. 
We have proven our results for the full ordinal domain that includes indifferences, but they continue to hold for many other interesting domains that arise by restricting the space of preference profiles, such as the assignment domain and the two-sided matching domain.
When impossibility results restrict the design of strategyproof mechanisms, we have provided a new perspective on the unavoidable trade-off between incentives and other desiderata along this Pareto frontier. 
%
%

{\small }

\newpage
\appendix
\section*{APPENDIX}
\setcounter{section}{0}

\section{Relative and Ex-ante Deficit}
\label{APP:DEFICITS}
We have defined the deficit of outcomes (Definition \ref{DEF:DEFICIT}) as the absolute difference between the achievable and the achieved $\desifct$-value; 
and we have defined the deficit of mechanisms (Definition \ref{DEF:WORST_CASE_DEFICIT}) as the worst-case deficit across all possible preference profiles. 
In this section we present two variations of these definitions, the \emph{relative deficit} of outcomes and the \emph{ex-ante deficit} of mechanisms. 
Our results in this paper hold for any of these variations. 
\subsection{Relative Deficit of Outcomes}
\label{APP:DEFICITS:RELATIVE}
In some situations, it may be more natural to consider a \emph{relative} difference, e.g., the ratio between the achieved and the maximal achievable $\desifct$-value. 
As we show next, it is without loss of generality that in this paper we have restricted our attention to absolute differences. 
This is because the relative $\desifct$-deficit can always be expressed as an absolute $\tilde{\desifct}$-deficit, where the adjusted desideratum function $\tilde{\desifct}$ is obtained from $\desifct$ by scaling. 
Proposition \ref{PROP:RELATIVE_ABSOLUTE_EQUIVALENCE} makes this argument precise. 

To state this equivalence formally, we need to define the relative deficit: 
for any preference profile $\bm P\in \mathcal{P}^N$, the \emph{$\desifct$-value margin at $\bm P$} is the difference between the highest and the lowest $\desifct$-value achievable by any alternative at $\bm P$. 
We set $\desifct^{\max}(\bm P) = \max_{j\in M} \desifct(j,\bm P)$, $ \desifct^{\min}(\bm P) = \min_{j\in M} \desifct(j,\bm P)$, and $\desifct^{\text{margin}}(\bm P) = \desifct^{\max}(\bm P) - \desifct^{\min}(\bm P)$. 
Note that for the special case where $\desifct^{\text{margin}}(\bm P) = 0$, all alternatives (and therefore all outcomes) have the same $\desifct$-value. 
In this case, \emph{any} alternative is $\desifct$-maximizing at $\bm P$. 
For an outcome $x \in \Delta(M)$, the \emph{relative $\desifct$-deficit of $x$ at $\bm P$} is the $\desifct$-deficit of $x$ at $\bm P$, normalized by the $\desifct$-value margin at $\bm P$; 
formally,  
\begin{equation}
	\deficit_\desifct^{\text{relative}}(x,\bm P) = \left\{
		\begin{array}{ll}
			\frac{\desifct^{\max}(\bm P) - \desifct(x,\bm P)}{\desifct^{\text{margin}}(\bm P)}, & \text{ if }\desifct^{\text{margin}}(\bm P) > 0, \\
			0, & \text{ else}.
		\end{array}
	\right.
\end{equation}
\begin{proposition}
\label{PROP:RELATIVE_ABSOLUTE_EQUIVALENCE}
For any desideratum function $\desifct$, there exists a desideratum function $\tilde{\desifct}$ such that the relative $\desifct$-deficit coincides with the absolute $\tilde{\desifct}$-deficit, such that for all outcomes $x \in \Delta(M)$ and all preference profiles $\bm P \in \mathcal{P}^N$, we have
\begin{equation}
	\deficit_{\desifct}^{\text{relative}}(x,\bm P) = \deficit_{\tilde{\desifct}}(x,\bm P).
\end{equation}
\end{proposition}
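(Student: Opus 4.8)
The plan is to construct $\tilde{\desifct}$ explicitly by rescaling $\desifct$ separately at each preference profile, and then to verify the claimed identity by a direct computation that exploits the linearity of the extension to random outcomes. Concretely, for each $\bm P \in \mathcal{P}^N$ with $\desifct^{\text{margin}}(\bm P) > 0$, I would set
$$\tilde{\desifct}(j,\bm P) = \frac{\desifct(j,\bm P) - \desifct^{\min}(\bm P)}{\desifct^{\text{margin}}(\bm P)},$$
and for the degenerate profiles where $\desifct^{\text{margin}}(\bm P) = 0$ (so that all alternatives share a common $\desifct$-value), I would set $\tilde{\desifct}(j,\bm P) = 0$ for all $j$. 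The first task is to check that $\tilde{\desifct}$ is a legitimate desideratum function, i.e., that it maps into $[0,1]$: since $\desifct^{\min}(\bm P) \leq \desifct(j,\bm P) \leq \desifct^{\max}(\bm P)$, the numerator lies between $0$ and $\desifct^{\text{margin}}(\bm P)$, so the quotient lies in $[0,1]$, and the degenerate case is trivially admissible.

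The second task is the computation, which I would carry out separately for the two cases. For profiles with positive margin, I would first record the two consequences of the rescaling: $\tilde{\desifct}^{\max}(\bm P) = 1$ (since the $\desifct$-maximizing alternative has numerator $\desifct^{\text{margin}}(\bm P)$), and, using $\sum_{j} x_j = 1$ together with Definition~\ref{DEF:EXPECTED_VALUE}, the expected value rescales as $\tilde{\desifct}(x,\bm P) = (\desifct(x,\bm P) - \desifct^{\min}(\bm P))/\desifct^{\text{margin}}(\bm P)$. Substituting these into the absolute deficit yields
$$\deficit_{\tilde{\desifct}}(x,\bm P) = 1 - \frac{\desifct(x,\bm P) - \desifct^{\min}(\bm P)}{\desifct^{\text{margin}}(\bm P)} = \frac{\desifct^{\max}(\bm P) - \desifct(x,\bm P)}{\desifct^{\text{margin}}(\bm P)},$$
where the last equality uses $\desifct^{\min}(\bm P) + \desifct^{\text{margin}}(\bm P) = \desifct^{\max}(\bm P)$. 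This is precisely $\deficit_{\desifct}^{\text{relative}}(x,\bm P)$. For the degenerate profiles, $\tilde{\desifct}(j,\bm P) \equiv 0$ forces $\tilde{\desifct}^{\max}(\bm P) = \tilde{\desifct}(x,\bm P) = 0$, so the absolute $\tilde{\desifct}$-deficit is $0$, matching the relative $\desifct$-deficit, which is defined to be $0$ on such profiles.

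There is no genuine obstacle here; the argument is a straightforward rescaling computation. The only points requiring care are bookkeeping rather than substance: one must ensure that the constructed $\tilde{\desifct}$ stays within the admissible codomain $[0,1]$, and one must handle the degenerate (zero-margin) profiles explicitly so that the piecewise definition of the relative deficit is matched. The structural ingredient that makes the rescaling of expected values go through cleanly is the linearity of $x \mapsto \desifct(x,\bm P)$ from Definition~\ref{DEF:EXPECTED_VALUE}, i.e., that the value of a lottery is the corresponding convex combination of the values of the alternatives.
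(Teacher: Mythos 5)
Your proof is correct and takes essentially the same route as the paper's: a profile-by-profile affine rescaling of $\desifct$, with zero-margin profiles set to zero, after which the identity is a direct computation using the linearity of $x \mapsto \desifct(x,\bm P)$. One remark worth recording: the paper's one-line proof sets $\tilde{\desifct}(j,\bm P) = \frac{\desifct^{\max}(\bm P) - \desifct(j,\bm P)}{\desifct^{\text{margin}}(\bm P)}$, which taken literally \emph{inverts} the ordering of the alternatives and yields $\deficit_{\tilde{\desifct}}(x,\bm P) = \frac{\desifct(x,\bm P) - \desifct^{\min}(\bm P)}{\desifct^{\text{margin}}(\bm P)} = 1 - \deficit_{\desifct}^{\text{relative}}(x,\bm P)$ rather than the relative deficit itself; your rescaling $\tilde{\desifct}(j,\bm P) = \frac{\desifct(j,\bm P) - \desifct^{\min}(\bm P)}{\desifct^{\text{margin}}(\bm P)}$ is the corrected version for which the claimed identity actually holds, and your explicit verification (including the check that $\tilde{\desifct}$ maps into $[0,1]$) supplies exactly what the paper's ``follows immediately'' leaves to the reader.
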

The proof follows immediately by setting $\tilde{\desifct}(j,\bm P) = \frac{\desifct^{\max}(\bm P) - \desifct(j,\bm P)}{\desifct^{\text{margin}}(\bm P)}$, whenever $\desifct^{\text{margin}}(\bm P) > 0$, and $\tilde{\desifct}(j,\bm P) = 0$ otherwise. 
Proposition \ref{PROP:RELATIVE_ABSOLUTE_EQUIVALENCE} shows that including relative deficit does not enrich the space of possible criteria but that the space of desideratum functions is rich enough to cover relative deficits implicitly. 
Thus, it is without loss of generality that we have restricted attention to absolute deficits in this paper. 
\subsection{Ex-ante Deficit of Mechanisms}
\label{APP:DEFICITS:EXANTE}
Recall that we defined the deficit of mechanisms as the worst-case loss that society incurs from using a mechanism that is not $d$-maximizing. 
This is most meaningful when we have no prior knowledge of the agents' preferences and therefore need to design mechanisms that achieve the desideratum as well as possible across all possible preference profiles. 

However, in some situations, we may have probabilistic information about the agents' preferences, which we would like to exploit to design better mechanisms. 
Suppose that the agents' preference profiles are drawn from a known distribution $\mathds{P}$. 
In this case, we may prefer a mechanism that induces high \emph{expected} $\desifct$-value under $\mathds{P}$. 
\begin{definition}[Ex-ante Deficit]
\label{DEF:EX_ANTE_DEFICIT}
Given a setting $(N,M)$, a desideratum function $\desifct$, a probability distribution $\mathds{P}$ over preference profiles, and a mechanism $\varphi$, the \emph{ex-ante $\desifct$-deficit of $\varphi$ with respect to $\mathds{P}$ (in $(N,M)$)} is 
\begin{equation}
	\deficit_{\desifct}^{\mathds{P}}(\varphi) = \sum_{\bm P \in \mathcal{P}^N} \mathds{P}[\bm P] \cdot \deficit_{\desifct}(\varphi(\bm P),\bm P).
\end{equation}
\end{definition}
Minimizing $\deficit_{\desifct}^{\mathds{P}}(\varphi)$ corresponds to minimizing the expected $\desifct$-deficit from applying $\varphi$ \emph{ex-ante} (i.e., before the agents' preferences are instantiated from $\mathds{P}$). 
This approach is attractive in situations where the same mechanism is applied repeatedly for different groups of agents, so that the outcomes are attractive \emph{on average} across all the repetitions. 

We can incorporate the ex-ante deficit into the linear program \FindOpt\ (Linear Program \ref{LP:FINDOPT}) by exchanging the constraint labeled \emph{(Deficit)}.
Instead of making the variable $\variable{d}$ an upper bound for the deficit of the mechanism across all preference profiles, we need to make $\variable{d}$ an upper bound for the expected deficit of the mechanism, where this expectation is taken with respect to the distribution $\mathds{P}$ over preference profiles. 
We achieve this by including the following constraint.
\begin{equation}
	\variable{d} \geq 
		\sum_{\bm P \in \mathcal{P}^N}
		\mathds{P}[\bm P] \cdot \left( 
			\max_{j\in M}\desifct(j,\bm P) - \sum_{j \in M} \variable{f_j(\bm P)} \cdot \desifct(j,\bm P)
		\right).
	\tag{\emph{Ex-ante deficit}}
\end{equation}
\section{Strict Improvements from Hybrids in Theorem \ref{THM:HYBRID_GUARANTEES}}
\label{APP:HYBRIDS_STRICT_IMPROVEMENT}
Theorem \ref{THM:HYBRID_GUARANTEES} showed two weak inequalities for hybrid mechanisms. 
We now give an example that shows that a hybrid of two mechanisms can in fact have a strictly lower manipulability and a strictly lower deficit than both of its component mechanisms. 
\begin{example}
\label{EX:STRICT_IMPROVEMENTS_HYBRIDS} 
Consider a problem with one agent and three alternatives $a,b,c$, where $\deficit$ is the worst-case deficit that arises from Plurality scoring. 
Let $\varphi$ and $\psi$ be two mechanisms whose outcomes depend only on the agent's relative ranking of $b$ and $c$. 
\begin{center}
\begin{tabular}{|c|c|c|c|c|c|c|}
	\hline
 	& \multicolumn{3}{|c|}{$\bm \varphi$} & \multicolumn{3}{|c|}{$\bm \psi$} \\
	\hline
	\textbf{Report} & \hspace{1em}$\bm a$\hspace{1em} & \hspace{1em}$\bm b$\hspace{1em} & \hspace{1em}$\bm c$\hspace{1em} & \hspace{1em}$\bm a$\hspace{1em} & \hspace{1em}$\bm b$\hspace{1em} & \hspace{1em}$\bm c$\hspace{1em} \\
	\hline\hline
	If $P:b \succeq c$ & 0 & 2/3 & 1/3 & 5/9 & 1/9 & 1/3 \\
	\hline
	If $P:c \succ b$ & 1/3 & 1/3 & 1/3 & 1/9 & 5/9 & 1/3 \\
	\hline
\end{tabular}
\end{center}
It is a simple exercise to verify that the mechanisms' signatures are $(\varepsilon(\varphi),\deficit(\varphi)) = (1/3,1)$ and $(\varepsilon(\psi),\deficit(\psi)) = (4/9,8/9)$. 
Furthermore, for $\beta = 3/7$, the hybrid $h_{\beta}$ is constant and therefore strategyproof, and it has a deficit of $\deficit(h_{\beta}) = 16/21$. 
Figure \ref{FIG:EXAMPLE_HYB_GUARANTEE} illustrates these signatures. 
The guarantees from Theorem \ref{THM:HYBRID_GUARANTEES} are represented by the shaded area, where the signature of $h_{\beta}$ can lie anywhere below and to the left of the $\beta$-convex combination of the signatures of $\varphi$ and $\psi$. 
\end{example}
\begin{figure}[t]
\centering
\begin{tikzpicture}[scale=4]
\draw[-] (-0.05,0) -- (1.0,0) coordinate (x axis);
\draw[-] (0,-0.05) -- (0,1.0) coordinate (y axis);
\draw[-] (1,-0.05) -- (1,0);
\draw[-] (-0.05,1) -- (0,1);
\draw (-0.15,1) node {1};
\draw (1.0, -0.15) node {1};
\draw (0, -0.15) node {0};
\draw (-0.15,0) node {0};
\draw (0.5,-0.3) node { Manipulability};
\draw (0.5,-0.15) node { $\varepsilon$};
\draw (-.4,0.5) node {Deficit};
\draw (-.15,0.5) node {$\deficit$};
\newcommand{\manipphi}{1/3}
\newcommand{\deficphi}{1}
\newcommand{\manippsi}{4/9}
\newcommand{\deficpsi}{8/9}
\newcommand{\maniphyb}{0}
\newcommand{\defichyb}{16/21}
\newcommand{\hybsigmanip}{8/21}
\newcommand{\hybsigdefic}{20/21}
\fill [fill=LightGray] (0.002,0.002) rectangle (\hybsigmanip,\hybsigdefic);
\draw[-,dashed] (\hybsigmanip,0) -- (\hybsigmanip,\hybsigdefic);
\draw[-,dashed] (0,\hybsigdefic) -- (\hybsigmanip,\hybsigdefic);
\draw[-] (\manipphi,\deficphi) -- (\manippsi,\deficpsi);
\draw [fill=White](\manipphi,\deficphi) circle [radius=0.02];
\draw [fill=White](\manippsi,\deficpsi) circle [radius=0.02];
\draw [fill=Black](\maniphyb,\defichyb) circle [radius=0.02];
\node[label=right:$\varphi$] at (\manipphi,\deficphi) {};
\node[label=right:$\psi$] at (\manippsi,\deficpsi) {};
\node[label=right:$h_{\beta}$] at (\maniphyb,\defichyb) {};
\end{tikzpicture}
\caption{Example signature of hybrid $h_{\beta}$ (\textbullet) must lie in the shaded area, weakly to the left and below the $\beta$-convex combinations of the signatures of $\varphi$ and $\psi$ ($\circ$).}
\label{FIG:EXAMPLE_HYB_GUARANTEE}
\end{figure}
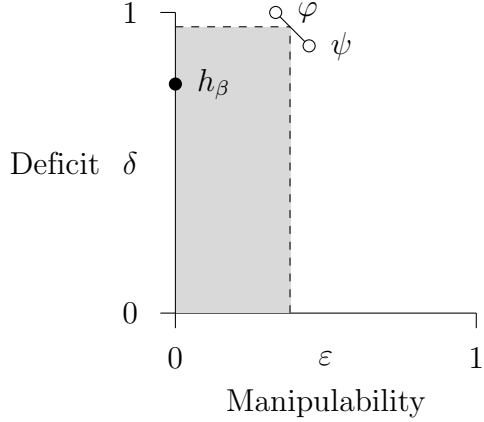

\section{Anonymity, Neutrality, Symmetric Decomposition}
\label{APP:ANON_NEUTR_SYMM_DECOMP}
Two common requirements in the design of ordinal mechanisms are anonymity and neutrality. 
Anonymity captures the intuition that a mechanism should not discriminate between agents; 
instead, the influence of any agent on the outcome should be the same and independent of the agent's name. 
Neutrality requires that the mechanism is not biased towards particular alternatives; 
the decision should depend only on the agents' preferences but not on the names of the alternatives. 

In this section we describe the implications of these two requirements for the design of optimal mechanisms: 
after providing definitions, we show how they can be incorporated in the linear program \FindOpt. 
Then we formally prove the second takeaway from Theorem \ref{THM:HYBRID_GUARANTEES} that (under certain conditions) both requirements come for free in terms of the signatures of optimal mechanisms. 
Finally, we prove a new symmetric decomposition result for strategyproof, anonymous, neutral mechanisms, that extends the characterization of strategyproof random mechanisms in \citep{Gibbard1977ManipulationOfVotingWithChance}. 
\subsection{Definition of Anonymity and Neutrality}
\label{APP:ANON_NEUTR_SYMM_DECOMP:DEFINITION}
First, we define anonymity: for any \emph{renaming of the agents} (i.e., any bijection $\pi:N\rightarrow N$) and any preference profile $\bm P = (P_1,\ldots,P_n) \in \mathcal{P}^N$, let $\bm P^{\pi} =(P_{\pi(1)},\ldots,P_{\pi(n)})$ be the preference profile where the agents have exchanged their roles according to $\pi$. 
Agent $i$ is now reporting the preference order $P_{\pi(i)}$ that was reported by agent $\pi(i)$ under the original preference profile $\bm P$. 
For any mechanism $\varphi$ let $\varphi^{\pi}$ be the mechanism under which the agents trade roles according to $\pi$; 
formally, let $\varphi^{\pi}(\bm P) = \varphi(\bm P^{\pi})$ for any preference profile $\bm P \in \mathcal{P}^N$.
\begin{definition}[Anonymity] 
\begin{itemize}
	\setlength{\itemsep}{0pt}
	\item A desideratum function $\desifct$ is \emph{anonymous} if for all renamings $\pi: N \rightarrow N$, preference profiles $\bm P \in \mathcal{P}^N$, and alternatives $j \in M$, we have $\desifct(j,\bm P) = \desifct(j,\bm P^{\pi})$ (i.e., the $\desifct$-value is independent of the order in which the agents submit their preferences).
	\item A probability distribution $\mathds{P}$ over preference profiles is \emph{anonymous} if for all renamings $\pi: N \rightarrow N$ and preference profiles $\bm P \in \mathcal{P}^N$, we have $\mathds{P}[\bm P] = \mathds{P}[\bm P^{\pi}]$ (i.e., the probability of a preference profile does not depend on the order in which the different preference orders appear). 
	\item The worst-case deficit $\deficit_\desifct$ is \emph{anonymous} if the underlying desideratum function $\desifct$ is anonymous. 
		The ex-ante deficit $\deficit_\desifct^{\mathds{P}}$ is \emph{anonymous} if the underlying desideratum function and the probability distribution $\mathds{P}$ are anonymous. 
	\item A mechanism $\varphi$ is \emph{anonymous} if for all renamings $\pi: N \rightarrow N$ and preference profiles $\bm P \in \mathcal{P}^N$, we have $\varphi(\bm P) = \varphi^{\pi}(\bm P)$ (i.e., the outcome of the mechanism is independent of the order in which the agents submit their preferences). 
\end{itemize}
\end{definition}
Next, we define neutrality: 
for any \emph{renaming of the alternatives} (i.e., any bijection $\varpi:M\rightarrow M$) and any preference order $P_i \in \mathcal{P}$, let $P_i^{\varpi}$ be the preference order under which $P_i^{\varpi}: \varpi(j) \succeq \varpi(j')$ whenever $P_i: j \succeq j'$ for any alternatives $j,j' \in M$. 
This means that $P_i^{\varpi}$ corresponds to $P_i$, except that the all alternatives have been renamed according to $\varpi$. 
For any preference profile $\bm P = (P_1,\ldots,P_n) \in \mathcal{P}^N$, let $\bm P^{\varpi} =(P_1^{\varpi},\ldots,P_{n}^{\varpi})$ be the preference profile where the alternatives inside the agents' preference orders have been renamed according to $\varpi$. 
For any mechanism $\varphi$ let $\varphi^{\varpi}$ be the mechanism under which the alternatives are renamed according to $\varpi$; 
formally, let $\varphi^{\varpi}_j(\bm P) = \varphi_{\varpi(j)}(\bm P^{\varpi})$ for all preference profile $\bm P \in \mathcal{P}^N$ and alternative $j \in M$.
\begin{definition}[Neutrality] 
We define the following: 
\begin{itemize}
	\setlength{\itemsep}{0pt}
	\item A desideratum function $\desifct$ is \emph{neutral} if for all renamings $\varpi:M\rightarrow M$, preference profiles $\bm P \in \mathcal{P}^N$, and alternatives $j \in M$, we have $\desifct(j,\bm P) = \desifct(\varpi(j),\bm P^{\varpi})$ (i.e., the $\desifct$-value of any alternative is independent of its name).
	\item A probability distribution $\mathds{P}$ over preference profiles is \emph{neutral} if for all renamings $\varpi:M\rightarrow M$ and preference profiles $\bm P \in \mathcal{P}^N$, we have $\mathds{P}[\bm P] = \mathds{P}[\bm P^{\varpi}]$ (i.e., the probability of a preference profile does not depend on the names of the alternatives).  
	\item The worst-case deficit $\deficit_\desifct$ is \emph{neutral} if the underlying desideratum function $\desifct$ is neutral. 
		The ex-ante deficit $\deficit_\desifct^{\mathds{P}}$ is \emph{neutral} if the underlying desideratum function and the probability distribution $\mathds{P}$ are neutral. 
	\item A mechanism $\varphi$ is \emph{neutral} if for all renamings $\varpi:M\rightarrow M$, preference profiles $\bm P \in \mathcal{P}^N$, and alternatives $j \in M$, we have $\varphi_j(\bm P) = \varphi_{j}^{\varpi}(\bm P) = \varphi_{\varpi(j)}(\bm P^{\varpi})$ (i.e., the outcomes of the mechanism are independent of the names of the alternatives). 
\end{itemize}
\end{definition}
Incorporating anonymity and neutrality as additional constraints in the linear program \FindOpt\ is straightforward and can be done as follows.

\medskip
\textsc{Linear Program \ref{LP:FINDOPT} (\FindOpt), extended.}
\begin{equation*}
\begin{array}{lllr}
	 & \ldots &  \\
	& \variable{f_j(\bm P)} = \variable{f_j(\bm P^{\pi})}, & \forall \bm P \in \mathcal{P}^N, j \in M, \pi:N \rightarrow N \text{ bijection}& \text{\emph{(Anonymity)}} \\
	& \variable{f_j(\bm P)} = \variable{f_{\varpi(j)}(\bm P^{\varpi})}, & \forall \bm P \in \mathcal{P}^N, j \in M, \varpi:M \rightarrow M \text{ bijection}& \text{\emph{(Neutrality)}} 
\end{array}
\end{equation*}
\medskip

\subsection{Costlessness of Requiring Anonymity and Neutrality}
\label{APP:ANON_NEUTR_SYMM_DECOMP:COSTLESS}
With these notions of anonymity and neutrality in mind, observe that for any given mechanism it is possible to construct an anonymous mechanism or a neutral mechanism by randomizing over the roles of the agents or the roles of the alternatives in the mechanism, respectively: let
\begin{equation}
	\varphi^{\text{anon}}_j(\bm P) = \frac{1}{n!} \sum_{\scriptsize{\begin{array}{c}
			\pi:N \rightarrow N \\
			\text{bijection}\end{array}}} \varphi_j^{\pi}(\bm P)
\end{equation}
and 
\begin{equation}
	\varphi_j^{\text{neut}}(\bm P) = \frac{1}{m!} \sum_{\scriptsize{\begin{array}{c}
			\varpi:M \rightarrow M \\
			\text{bijection}\end{array}}} \varphi_{j}^{\varpi}(\bm P)
\end{equation}
for all preference profiles $\bm P \in \mathcal{P}^N$ and alternatives $j \in M$. 
Observe that $\varphi^{\text{anon}}$ and $\varphi^{\text{neut}}$ are simply hybrid mechanisms with many components, where each component is used with a probability of $\frac{1}{n!}$ or $\frac{1}{m!}$, respectively. 
With these definitions we can formally derive the second main takeaway from Theorem \ref{THM:HYBRID_GUARANTEES}. 
\begin{corollary}
\label{COR:ANON_NEUTR_FREE}
Given a problem \NMO, where the deficit is anonymous/neutral/both, for any mechanism $\varphi$ there exists a mechanism $\tilde{\varphi}$ that is anonymous/neutral/both and has a weakly better signature than $\varphi$; 
formally, $\varepsilon(\tilde{\varphi}) \leq \varepsilon(\varphi)$ and $\deficit(\tilde{\varphi}) \leq \deficit(\varphi)$
\end{corollary}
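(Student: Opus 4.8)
The plan is to take $\tilde\varphi$ to be the symmetrized mechanism $\varphi^{\text{anon}}$, $\varphi^{\text{neut}}$, or their composition, and to observe that each of these is a hybrid (with many components) all of whose components share the signature of $\varphi$; the inequalities then fall out of Theorem \ref{THM:HYBRID_GUARANTEES}. First I would prove the key invariance lemma: for any renaming of agents $\pi$, the component $\varphi^{\pi}$ satisfies $\varepsilon(\varphi^{\pi}) = \varepsilon(\varphi)$, and, whenever the deficit is anonymous, $\deficit(\varphi^{\pi}) = \deficit(\varphi)$. For manipulability, recall that $\varepsilon(\varphi) = \max \varepsilon(u_i,(P_i,P_{-i}),P_i',\varphi)$ over all $(i,(P_i,P_{-i}),P_i')$. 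Since $\varphi^{\pi}(\bm P) = \varphi(\bm P^{\pi})$, the map $(i,\bm P,P_i') \mapsto (\pi(i),\bm P^{\pi},\ldots)$ is a bijection of the index set of this maximum that preserves each summand, so the two maxima coincide. For the deficit, I would use that an anonymous desideratum function $\desifct$ is constant on every agent-renaming orbit, i.e.\ $\desifct(j,\bm P) = \desifct(j,\bm P^{\pi})$; then the change of variables $\bm Q = \bm P^{\pi}$ inside $\deficit(\varphi^{\pi}) = \max_{\bm P} \deficit_{\desifct}(\varphi(\bm P^{\pi}),\bm P)$ turns it into $\max_{\bm Q} \deficit_{\desifct}(\varphi(\bm Q),\bm Q) = \deficit(\varphi)$. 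For the ex-ante deficit the identical substitution works once we additionally invoke anonymity of $\mathds{P}$. An entirely parallel argument with a renaming of alternatives $\varpi$ and a neutral deficit gives $\varepsilon(\varphi^{\varpi}) = \varepsilon(\varphi)$ and $\deficit(\varphi^{\varpi}) = \deficit(\varphi)$.

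Next I would invoke Theorem \ref{THM:HYBRID_GUARANTEES}, extended from two components to an arbitrary finite convex combination. This extension is routine by induction on the number of components, since both bounds rely only on the triangle inequality for the $\max$-operator and on linearity of the (extended) deficit in the outcome. Now $\varphi^{\text{anon}} = \frac{1}{n!}\sum_{\pi}\varphi^{\pi}$ is a uniform hybrid, and by the invariance lemma every one of its components has signature exactly $(\varepsilon(\varphi),\deficit(\varphi))$; hence the extended guarantee yields $\varepsilon(\varphi^{\text{anon}}) \leq \varepsilon(\varphi)$ and $\deficit(\varphi^{\text{anon}}) \leq \deficit(\varphi)$, and $\varphi^{\text{anon}}$ is anonymous by construction. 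This settles the anonymous case, and the neutral case follows identically with $\varphi^{\text{neut}} = \frac{1}{m!}\sum_{\varpi}\varphi^{\varpi}$. For the combined case I would set $\tilde\varphi = (\varphi^{\text{anon}})^{\text{neut}}$: two applications of the guarantee give $\varepsilon(\tilde\varphi) \leq \varepsilon(\varphi^{\text{anon}}) \leq \varepsilon(\varphi)$ and likewise for $\deficit$, while $\tilde\varphi$ is neutral by construction and remains anonymous because agent- and alternative-renamings act on disjoint coordinates and therefore commute (so each $\varpi$-renamed copy of the anonymous $\varphi^{\text{anon}}$ is still anonymous, and so is their mixture).

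The main obstacle I anticipate is the invariance lemma, since everything else is a direct appeal to Theorem \ref{THM:HYBRID_GUARANTEES}. Concretely, the manipulability half requires pinning down the bijection on the constraint index set and checking that the relabeling $\varphi^{\pi}(\bm P) = \varphi(\bm P^{\pi})$ leaves each incentive term unchanged, while the deficit half requires the anonymity (resp.\ neutrality) of the deficit in exactly the form that absorbs the change of variables $\bm Q = \bm P^{\pi}$ (resp.\ $\bm Q = \bm P^{\varpi}$), with the ex-ante notion additionally needing invariance of $\mathds{P}$. Once this bookkeeping is in place, the corollary is immediate.
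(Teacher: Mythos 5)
Your proposal is correct and takes essentially the same approach as the paper's proof: the paper likewise observes that each renamed copy $\varphi^{\pi}$ (resp.\ $\varphi^{\varpi}$) has the same signature as $\varphi$ whenever the deficit is anonymous (resp.\ neutral), and then applies Theorem \ref{THM:HYBRID_GUARANTEES} to the uniform hybrid $\varphi^{\text{anon}}$ (resp.\ $\varphi^{\text{neut}}$). You merely spell out two points the paper leaves implicit --- the extension of the hybrid guarantee from two to $n!$ (or $m!$) components and the commutation of agent- and alternative-renamings in the combined case --- which is sound bookkeeping rather than a different argument.
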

\begin{proof}
Observe that if the deficit $\deficit$ is anonymous, then $\varphi$ and $\varphi^{\pi}$ have the same signature for any renaming of the agents $\pi: N \rightarrow N$. 
Consequently, $\varphi^{\text{anon}}$ is a hybrid of $n!$ mechanisms which all have the same signature, and therefore it has a weakly better signature by Theorem \ref{THM:HYBRID_GUARANTEES}. 
Similarly, if $\deficit$ is neutral, then $\varphi$ and $\varphi^{\varpi}$ have the same signature for any renaming of the alternatives $\varpi: M \rightarrow M$, and the result follows analogously. 
\end{proof}

Intuitively, Corollary \ref{COR:ANON_NEUTR_FREE} means that, given the right desideratum, the two requirements anonymity and neutrality are ``free'' in terms of manipulability and deficit. 
We do not have to accept a less attractive signatures in order to achieve either property. 
\subsection{Non-trivial Signature Cost of Other Desiderata}
\label{APP:ANON_NEUTR_SYMM_DECOMP:OTHERS}
In contrast to anonymity and neutrality, other common desiderata do not come for free, such as Condorcet consistency, Pareto optimality, or even the rather weak requirement of unanimity. 
\begin{definition}[Unanimity, Pareto Optimality, Condorcet Consistency]
For a given preference profile $\bm P \in \mathcal{P}^N$, define the following:
\begin{itemize}
	\setlength{\itemsep}{0pt}
	\item An alternative $j \in M$ is a \emph{unanimity winner at $\bm P$} if for all agents $i\in N$ and all other alternatives $j' \in M$ we have $P_i: j \succeq j'$.
		Let $M^{\text{unan}}(\bm P)$ be the set of all unanimity winners at $\bm P$ and $M^{\neg\text{unan}}(\bm P) = M \backslash M^{\text{unan}}(\bm P)$ the set of non-winners.
	\item An alternative $j \in M$ \emph{Pareto dominates} another alternative $j'\in M$ at $\bm P$ if for all agents $i \in N$ we have $P_i: j \succeq j'$ and for some agent $i'$ we have $P_{i'}: j \succ j'$. 
		$j$ is \emph{Pareto optimal at $\bm P$} if there exists no other alternative $j'$ that Pareto dominates $j$.
		Let $M^{\text{Pareto}}(\bm P)$ be the set of Pareto optimal alternatives at $\bm P$, and let $M^{\neg\text{Pareto}}(\bm P) = M \backslash M^{\text{Pareto}}(\bm P)$ be the set of alternatives that are Pareto dominated by another alternative at $\bm P$.
	\item For any two alternatives $a,b \in M$ let $ n_{a \succ b}(\bm P) = \#\left\{ i \in N~\left|~P_i: a \succ b \right. \right\}$ be the number of agents who strictly prefer $a$ to $b$ at $\bm P$.
		An alternative $j\in M$ is a \emph{Condorcet winner at $\bm P$} if for all other alternatives $j' \in M$ we have $n_{j \succ j'}(\bm P) \geq n_{j' \succ j}(\bm P)$. 		
		$M^{\text{Condorcet}}(\bm P)$ is the set of Condorcet winners at $\bm P$, and the set of non-Condorcet winners is $M^{\neg\text{Condorcet}}(\bm P) = M \backslash M^{\text{Condorcet}}(\bm P)$.
\end{itemize}
\end{definition}
A mechanism $\varphi$ satisfies unanimity, Pareto optimality, or Condorcet consistency, if it only selects alternatives that have the respective property whenever they exist. 
One way to incorporate these desiderata into the linear program \FindOpt\ is to include them in the objective function by using a desideratum function that assigns higher value to alternatives that have the respective property. 

However, this is not sufficient if the optimal mechanisms must satisfy the requirement \emph{completely}, independent of the resulting increase in manipulability. 
For example, the goal could be to find a unanimous, strategyproof mechanism that minimizes the worst-case deficit based on Veto scoring. 
In this case, it would not suffice to include unanimity in the desideratum function, because the only way to then guarantee unanimity would be to impose full value maximization. 
Alternatively, we can incorporate the property as additional constraints in the linear program \FindOpt\ directly. 
The following linear constraints can be used to require unanimity, Pareto optimality, and Condorcet consistency, respectively. 

\medskip
\textsc{Linear Program \ref{LP:FINDOPT} (\FindOpt), extended.}
\begin{equation*}
\begin{array}{llr}
	 & \ldots &  \\
	& \variable{f_j(\bm P)} = 0, & \text{(Unanimity)} \\
	& \hspace{2em} \forall \bm P \in \mathcal{P}^N \text{ such that } M^{\text{unan}}(\bm P) \neq \emptyset \text{ and } j \in M^{\neg\text{unan}}(\bm P) \\
	& \variable{f_j(\bm P)} = 0,  & \text{(Pareto)}  \\
	& \hspace{2em} \forall \bm P \in \mathcal{P}^N, j \in M^{\neg\text{Pareto}}(\bm P) \\	
	& \variable{f_j(\bm P)} = 0,  & \text{(Condorcet)}  \\	
	& \hspace{2em} \forall \bm P \in \mathcal{P}^N \text{ such that } M^{\text{Condorcet}}(\bm P) \neq \emptyset \text{ and } j \in M^{\neg\text{Condorcet}}(\bm P) 
\end{array}
\end{equation*}
\medskip

The next example illustrates that, unlike anonymity and neutrality, unanimity does not come for free in terms of the mechanisms' signatures. 
Since Pareto optimality and Condorcet consistency imply unanimity, the same is true for both other desiderata. 
\begin{example}
\label{EX:UNANIMITY_NOT_FREE}
Consider the same problem as in Section \ref{SEC:COMPUTE:ANALYTIC_EXAMPLES}, where $n = m = 3$, agents have strict preferences over alternatives $a,b,c$, $\deficit$ is the worst-case deficit based on Veto scoring. 
Let $\varphi$ be a strategyproof mechanism that is also unanimous. 

By the characterization in \citep{Gibbard1977ManipulationOfVotingWithChance}, $\varphi$ must be a hybrid of strategyproof unilateral and strategyproof duple mechanisms. 
However, for $m \geq 3$ alternatives, no duple mechanism is unanimous, and unilateral mechanisms are unanimous only if they always select the first choice of the dictating agent. 
But as soon as a single component of a hybrid is not unanimous, the hybrid is not unanimous either. 
Consequently, $\varphi$ must be a hybrid of dictatorships. 
Since unanimity is an anonymous and neutral constraint (i.e., the constraint is invariant to renamings of agents or alternatives), we obtain from Corollary \ref{COR:ANON_NEUTR_FREE} that the signature of $\varphi$ is at most as good as the signature of Random Dictatorship, where the dictating agent is chosen \emph{uniformly} at random. 
However, Random Dictatorship has a deficit of $\frac{4}{9}$ at the preference profile
\begin{eqnarray}
	P_1,P_2 &: & a \succ b \succ c, \\
	P_3 &: & c \succ b \succ a.
\end{eqnarray}
But we already observed that Random Duple is strategyproof and optimal (but not unanimous) in this problem with strictly lower deficit of $\frac{2}{9}$. 
This means that requiring unanimity in this problem leads to a strict increase in the lowest deficit that is achievable with strategyproof, optimal mechanisms.
\end{example}
\subsection{Symmetric Decomposition of Strategyproof, Anonymous, and Neutral Mechanisms} 
\label{APP:ANON_NEUTR_SYMM_DECOMP:SYMM_DECOMP}
We present a refinement of Gibbard's strong characterization of strategyproof mechanisms \citep{Gibbard1977ManipulationOfVotingWithChance}.
Our \emph{symmetric decomposition} characterizes mechanisms that are \emph{strategyproof}, \emph{anonymous}, and \emph{neutral}.
We use this result to establish the shapes of Pareto frontiers in Sections \ref{SEC:COMPUTE:ANALYTIC_EXAMPLES} analytically. 
Beyond this application, the symmetric decomposition may be of independent interest.

In the full domain of \emph{strict} preferences, \citet{Gibbard1977ManipulationOfVotingWithChance} showed that any strategyproof mechanism is a hybrid of multiple ``simple'' mechanisms, namely strategyproof \emph{unilateral} and \emph{duple} mechanisms.
\begin{definition}[\citeauthor{Gibbard1977ManipulationOfVotingWithChance}, \citeyear{Gibbard1977ManipulationOfVotingWithChance}]
\label{DEF:UNILATERAL}
A mechanism $\text{uni}$ is \emph{unilateral} if the outcome only depends on the report of a single agent; 
formally, there exists $i \in N$ such that for all preference profiles $\bm P, \bm P' \in \mathcal{P}^N$ we have that $P_i = P_i'$ implies $\text{uni}(\bm P) = \text{uni}(\bm P')$. 
\end{definition}
\begin{definition}[\citeauthor{Gibbard1977ManipulationOfVotingWithChance}, \citeyear{Gibbard1977ManipulationOfVotingWithChance}]
\label{DEF:DUPLE}
A mechanism $\text{dup}$ is \emph{duple} if only two alternatives are possible; 
formally, there exist $a,b \in M$ such that for all preference profiles $\bm P \in \mathcal{P}^N$ we have $ \text{dup}_j(\bm P) = 0 $ for all $j \neq a,b$.
\end{definition}
The strong characterization result is the following.
\begin{fact}[\citeauthor{Gibbard1977ManipulationOfVotingWithChance}, \citeyear{Gibbard1977ManipulationOfVotingWithChance}] 
\label{FACT:GIBBARD_DECOMPOSITION}
A mechanism $\varphi$ is strategyproof \emph{if and only if} it can be written as a hybrid of mechanisms $\varphi^1,\ldots,\varphi^K$, and each component $\varphi^k$ is strategyproof and either unilateral or duple.\footnote{\citeauthor{Gibbard1977ManipulationOfVotingWithChance} further refined this result by replacing strategyproofness with \emph{localized} and \emph{non-perverse}.} 
\end{fact}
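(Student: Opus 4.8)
The plan is to prove the two directions separately; the ``if'' direction is short and follows from machinery already developed in the paper, while the ``only if'' direction is Gibbard's substantive structural result. For the ``if'' direction, suppose $\varphi = \sum_{k} \beta_k \varphi^k$ is an (iterated) convex combination of strategyproof unilateral and duple mechanisms $\varphi^k$. Since each $\varphi^k$ is strategyproof, $\varepsilon(\varphi^k) = 0$. Applying the manipulability bound of Theorem \ref{THM:HYBRID_GUARANTEES} repeatedly (or, more directly, observing that the incentive constraints of Theorem \ref{THM:ASP_EQUIV_LINEAR_CONSTRAINTS} are linear in the mechanism) yields $\varepsilon(\varphi) \le \sum_k \beta_k \varepsilon(\varphi^k) = 0$, so $\varphi$ is strategyproof.

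For the ``only if'' direction, the first step is to replace strategyproofness by a purely ordinal, \emph{local} condition. Setting $\varepsilon = 0$ in Theorem \ref{THM:ASP_EQUIV_LINEAR_CONSTRAINTS} shows that $\varphi$ is strategyproof if and only if, for every agent $i$, every profile $(P_i,P_{-i})$, every misreport $P_i'$, and every rank $r$, the top-$r$ probability mass $\sum_{j:\,\text{rank}_{P_i}(j)\le r}\varphi_j(P_i',P_{-i})$ is weakly maximized at the truthful report $P_i$. This is exactly first-order stochastic dominance of the truthful outcome over any misreport outcome with respect to $P_i$. From this I would extract the two local properties that drive Gibbard's argument: \emph{non-perversity} (raising an alternative $a$ in some agent's report weakly increases $\varphi_a$) and \emph{localizedness} (the change in $\varphi$ caused by an adjacent transposition in one agent's report is supported only on the two transposed alternatives). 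Both follow by instantiating the dominance inequalities at consecutive ranks $r-1$ and $r$ and taking differences.

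The heart of the proof, and the main obstacle, is the decomposition step: showing that \emph{every} non-perverse, localized scheme can be written as a convex combination of strategyproof unilaterals and duples. I would proceed by induction on the number $m$ of alternatives with the agents held fixed, using localizedness to argue that each coordinate $\varphi_a$ depends on any single agent's report only through a tightly restricted family of pairwise comparisons. The intended upshot is that each agent's marginal influence can only enter either (i) through that agent's top choice, which aggregates into unilateral components, or (ii) through that agent's relative ranking of a single pair, which aggregates into duple components; non-perversity then fixes the signs so that the isolated pieces are themselves strategyproof. The delicate combinatorial core is the bookkeeping: verifying that the extracted weights are nonnegative and sum to one, and that the residual pieces are genuine unilaterals and duples rather than merely ``unilateral-like'' or ``duple-like'' fragments.

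Since this reverse direction is precisely Gibbard's theorem, in the paper it is (appropriately) invoked as \citep{Gibbard1977ManipulationOfVotingWithChance} rather than reproved. I would therefore cite it as a black box and reserve a self-contained argument only for the refinement actually needed here, namely the passage to the \emph{anonymous} and \emph{neutral} case. For that refinement I would symmetrize a given Gibbard decomposition: averaging the unilateral and duple components over all renamings of agents (via the $\varphi^{\text{anon}}$ construction) and of alternatives (via $\varphi^{\text{neut}}$) preserves strategyproofness by the ``if'' direction above and collapses the components into symmetric families, which is exactly the structure the later analytic examples exploit.
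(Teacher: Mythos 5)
The paper offers no proof of this statement at all: it is imported verbatim as a Fact from \citep{Gibbard1977ManipulationOfVotingWithChance}, so your decision to black-box the substantive ``only if'' direction and cite Gibbard is precisely the paper's own treatment, and the proposal is correct in everything it actually asserts. What you add beyond the citation is sound. The ``if'' direction does follow from linearity of the incentive constraints in Theorem \ref{THM:ASP_EQUIV_LINEAR_CONSTRAINTS}, or equivalently by iterating Theorem \ref{THM:HYBRID_GUARANTEES}, since a convex combination of mechanisms with $\varepsilon(\varphi^k)=0$ has manipulability $0$. Your extraction of localizedness and non-perversity from the $\varepsilon=0$ case is also right, with one point you should make explicit: for an adjacent transposition of alternatives $a,b$ in agent $i$'s report, you must invoke the stochastic-dominance inequalities \emph{in both directions}, taking first $P_i$ and then $P_i'$ as the truthful report; since the upper contour sets coincide at every rank except the swap rank $r$, the two families of inequalities force equality of the cumulative probabilities at all $k\neq r$, which is what confines the change to $\{a,b\}$ --- one direction alone yields only one-sided bounds. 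Two further caveats. First, Gibbard's decomposition is proved for the full domain of \emph{strict} preferences, whereas the paper's general model allows indifferences; the paper itself only ever uses the Fact in the strict domain (Appendix \ref{APP:ANON_NEUTR_SYMM_DECOMP} and the $n=m=3$ examples), and your write-up should record this restriction. Second, your closing symmetrization paragraph is not part of this Fact: it is the statement and, essentially verbatim, the proof technique of the paper's Theorem \ref{THM:SYMMETRIC_DECOMP}, which averages a Gibbard decomposition over all renamings $\pi:N\rightarrow N$ and $\varpi:M\rightarrow M$ --- correct material, but belonging to a different result. Your inductive sketch of the decomposition core is the only speculative piece, and since you explicitly do not rely on it, nothing in the proposal hinges on it.
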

Obviously, duple mechanisms cannot satisfy neutrality (unless $m=2$) and unilateral mechanisms cannot satisfy anonymity (unless the mechanism is constant or $n=1$).
This means that anonymity and neutrality of strategyproof mechanisms are the result of mixing the unilateral and duple components ``correctly.''
This intuition gives rise to the following more refined decomposition of strategyproof, anonymous, neutral mechanism.
\begin{theorem}[Symmetric Decomposition]
\label{THM:SYMMETRIC_DECOMP} 
A mechanism $\varphi$ is strategyproof, anonymous, and neutral if and only if there exist
\begin{enumerate}
	\setlength{\itemsep}{0pt}
	\item strategyproof, neutral, unilateral mechanisms ${\text{uni}}_k,k \in \{1,\ldots,K^{\text{uni}}\}$,
	\item strategyproof, anonymous, duple mechanisms ${\text{dup}}_k,k \in \{K^{\text{uni}}+1,\ldots,K^{\text{uni}} + K^{\text{dup}}\}$,
	\item coefficients $\beta_k \geq 0, k \in \{1,\ldots,K^{\text{uni}} + K^{\text{dup}}\}$ with $\sum_{k=1}^{K^{\text{uni}} + K^{\text{dup}}}\beta_k = 1$,
\end{enumerate}
such that
\begin{equation}
	\varphi = \sum_{\scriptsize{\begin{array}{c}
			\pi:N \rightarrow N \\
			\text{bijection}		
		\end{array}}}
		\sum_{k=1}^{K^{\text{uni}}} \left(\frac{\beta_k}{n!}\right){\text{uni}}_{k}^{\pi} +
			\sum_{\scriptsize{\begin{array}{c}
			\varpi:M \rightarrow M \\
			\text{bijection}		
		\end{array}}} \sum_{k=K^{\text{uni}} + 1}^{K^{\text{uni}} + K^{\text{dup}}} \left(\frac{\beta_k}{m!}\right){\text{dup}}_{k}^{\varpi}.
\label{EQ:SYMMETRIC_DECOMPOSITION}
\end{equation}
\end{theorem}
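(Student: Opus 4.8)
The plan is to prove both directions of the equivalence, handling ``$\Leftarrow$'' as a direct verification and reserving the work for ``$\Rightarrow$,'' which I would obtain by symmetrizing Gibbard's decomposition (Fact~\ref{FACT:GIBBARD_DECOMPOSITION}). For the ``$\Leftarrow$'' direction, suppose $\varphi$ has the form in \eqref{EQ:SYMMETRIC_DECOMPOSITION}. I would first note that the coefficients are nonnegative and sum to one, since summing $\frac{\beta_k}{n!}$ over all $n!$ agent renamings and $\frac{\beta_k}{m!}$ over all $m!$ alternative renamings returns $\sum_k \beta_k = 1$, so $\varphi$ is a genuine hybrid of the mechanisms $\text{uni}_k^\pi$ and $\text{dup}_k^\varpi$. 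Each renaming merely relabels the incentive constraints and hence preserves strategyproofness, so every component is strategyproof and the hybrid is strategyproof by Theorem~\ref{THM:HYBRID_GUARANTEES} (equivalently, by linearity of the constraints in Theorem~\ref{THM:ASP_EQUIV_LINEAR_CONSTRAINTS} with $\varepsilon=0$). Anonymity and neutrality then follow blockwise: the unilateral block $\frac{1}{n!}\sum_\pi \text{uni}_k^\pi$ is invariant under any further agent renaming (as $\pi\mapsto\pi\pi'$ permutes the summands) and is therefore anonymous, while each $\text{dup}_k^\varpi$ is anonymous because renaming alternatives does not touch agents' names; symmetrically the duple block is neutral and each $\text{uni}_k^\pi$ is neutral because $\text{uni}_k$ is neutral and renaming agents preserves neutrality. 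Thus $\varphi$ is strategyproof, anonymous, and neutral.

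For the ``$\Rightarrow$'' direction, I would start from Gibbard's characterization, writing $\varphi = \sum_{k\in U}\beta_k \text{uni}_k + \sum_{k\in D}\beta_k \text{dup}_k$ with each component strategyproof and either unilateral or duple and $\sum_k\beta_k=1$. The central device is the full symmetrization operator $\text{Sym}(\psi) = \frac{1}{n!\,m!}\sum_{\pi,\varpi}(\psi^\pi)^\varpi$, for which I would establish three supporting facts. First, the agent- and alternative-renaming operations commute, i.e. $(\psi^\pi)^\varpi = (\psi^\varpi)^\pi$, which reduces to the identity $(\bm P^\pi)^\varpi = (\bm P^\varpi)^\pi$ since one operation permutes the list of agents and the other relabels alternatives within each entry. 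Second, renaming preserves the component type and strategyproofness: a unilateral stays unilateral depending on the \emph{same} agent under both renamings, and a duple stays supported on its two alternatives under agent renaming. Third, since $\varphi$ is anonymous and neutral, $\varphi^\pi=\varphi$ and $\varphi^\varpi=\varphi$ for all $\pi,\varpi$, so $\text{Sym}(\varphi)=\varphi$. Applying $\text{Sym}$ to Gibbard's decomposition and using linearity then yields $\varphi = \sum_{k\in U}\beta_k\,\text{Sym}(\text{uni}_k) + \sum_{k\in D}\beta_k\,\text{Sym}(\text{dup}_k)$.

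The final step rewrites each symmetrized component into the blockwise form. Setting $\widetilde{\text{uni}}_k = \frac{1}{m!}\sum_\varpi \text{uni}_k^\varpi$ (the neutralization $\text{uni}_k^{\text{neut}}$) and $\widetilde{\text{dup}}_k = \frac{1}{n!}\sum_\pi \text{dup}_k^\pi$ (the anonymization $\text{dup}_k^{\text{anon}}$), I would observe that $\widetilde{\text{uni}}_k$ is again a single unilateral on the same agent, since a convex combination of unilaterals depending on a fixed agent still depends only on that agent; it is strategyproof and now neutral. Dually, $\widetilde{\text{dup}}_k$ is a strategyproof, anonymous duple on the same two alternatives. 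Using commutativity, $\text{Sym}(\text{uni}_k) = \frac{1}{n!}\sum_\pi (\widetilde{\text{uni}}_k)^\pi$ and $\text{Sym}(\text{dup}_k) = \frac{1}{m!}\sum_\varpi (\widetilde{\text{dup}}_k)^\varpi$, and substituting these back produces exactly \eqref{EQ:SYMMETRIC_DECOMPOSITION} after relabeling the tilded mechanisms as $\text{uni}_k$ and $\text{dup}_k$. The main obstacle I anticipate is not any deep argument but the careful bookkeeping of the renaming operations—verifying commutativity, strategyproofness-preservation, and, crucially, the type-preservation observation that neutralizing a unilateral keeps it unilateral while anonymizing a duple keeps it a duple; once these are in place, the decomposition falls out mechanically.
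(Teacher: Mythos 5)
Your proof is correct, and both directions check out. Your forward direction is in substance the same as the paper's---Gibbard's characterization (Fact~\ref{FACT:GIBBARD_DECOMPOSITION}) combined with averaging over all renamings---but your bookkeeping is cleaner at the one point where the paper is informal. The paper first writes $\varphi = \frac{1}{n!\,m!}\sum_{\pi,\varpi}\varphi^{\pi,\varpi}$, applies Gibbard separately to every renamed copy $\varphi^{\pi,\varpi}$, and then asserts that ``by symmetry, the decomposition can be chosen'' coherently, i.e., with the same $K$, the same coefficients $\beta_k$, and components satisfying $\varphi^{k,\pi,\varpi} = (\varphi^k)^{\pi,\varpi}$; this coherent-selection step is stated rather than proved. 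You instead fix a single Gibbard decomposition of $\varphi$ itself and transport it through the symmetrization operator $\mathrm{Sym}$ by linearity, so coherence across renamings is automatic rather than asserted: the renamed components are by construction renamings of one fixed family. Your supporting facts (commutativity $(\psi^\pi)^\varpi = (\psi^\varpi)^\pi$ via $(\bm P^\pi)^\varpi = (\bm P^\varpi)^\pi$, preservation of strategyproofness and of component type, and $\mathrm{Sym}(\varphi)=\varphi$ from anonymity and neutrality) are exactly what is needed, and the regrouping $\mathrm{Sym}(\text{uni}_k) = \frac{1}{n!}\sum_\pi(\widetilde{\text{uni}}_k)^\pi$ with $\widetilde{\text{uni}}_k = \frac{1}{m!}\sum_\varpi \text{uni}_k^\varpi$ matches the paper's averaging over $\varpi$ (dually over $\pi$ for duples). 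You also verify the ``if'' direction explicitly, which the paper leaves implicit. One small inaccuracy: a unilateral does \emph{not} stay unilateral for the \emph{same} agent under an agent renaming $\pi$---if $\text{uni}$ depends on agent $i$'s report, then $\text{uni}^\pi$ depends on agent $\pi(i)$'s report. This is harmless in your argument, because the same-agent property is invoked only for the neutralization $\widetilde{\text{uni}}_k$, which averages over alternative renamings alone (where the dictating agent is indeed unchanged), while the agent-renamed copies $(\widetilde{\text{uni}}_k)^\pi$ enter the final formula as-is; still, you should correct the statement of your second supporting fact accordingly.
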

\begin{proof}
By anonymity and neutrality of $\varphi$ we get that 
\begin{equation}
	\varphi = \left(\varphi^{\pi}\right)^{\varpi} = \left(\varphi^{\varpi}\right)^{\pi} = \varphi^{\pi,\varpi}
\end{equation}
for all bijections $\pi: N\rightarrow N$ and $\varpi: M \rightarrow M$, which implies
\begin{equation}
	\varphi = \sum_{\pi,\varpi} \frac{1}{n!m!} \varphi^{\pi,\varpi}.
\end{equation}
Since $\varphi^{\pi,\varpi}$ is strategyproof, we can use Fact \ref{FACT:GIBBARD_DECOMPOSITION} to decompose it into $K^{\pi,\varpi}$ strategyproof unilateral and duple mechanisms, i.e.,
\begin{equation}
	\varphi^{\pi,\varpi} = \sum_{k=1}^{K^{\pi,\varpi}} \beta_k^{\pi,\varpi} \varphi^{k,\pi,\varpi}.
\end{equation}
By symmetry, the decomposition can be chosen such that for any pair of renamings $\pi,\varpi$ and $k \in \{1,\ldots,K^{\pi,\varpi}\}$ we have
\begin{itemize}\setlength{\itemsep}{0pt}
	\item $K^{\text{uni}} + K^{\text{dup}} = K^{\pi,\varpi}$, 
	\item $\beta_k = \beta_k^{\pi,\varpi}$, 
	\item if $\varphi^k = \varphi^{k,\text{id},\text{id}}$ is a strategyproof unilateral  (or duple) mechanism, than $\varphi^{k,\pi,\varpi}$ is a strategyproof unilateral (or duple) mechanism with $\varphi^k_{\varpi(j)}(\bm P^{\pi,\varpi}) = \varphi_{j}^{k,\pi,\varpi}(\bm P)$,
	\item without loss of generality, $\varphi^k$ is unilateral for $k \in \{1,\ldots,K^{\text{uni}}\}$ and duple for $k \in \{K^{\text{uni}}+1,\ldots,K^{\text{uni}}+K^{\text{dup}}\}$. 
\end{itemize} 
Averaging over all renamings $\pi,\varpi$ we get
\begin{equation}
	\varphi 
		= \sum_{\pi,\varpi} \left(\frac{1}{n!m!}\right) \sum_{k=1}^{K^{\pi,\varpi}} \beta_k^{\pi,\varpi} \varphi^{k,\pi,\varpi} 
		= \sum_{k=1}^{K^{\text{uni}} + K^{\text{dup}}} \sum_{\pi,\varpi} \left(\frac{\beta_k}{n!m!}\right)\varphi^{k,\pi,\varpi}.
	\label{EQ:DECOMPOSITION_INTO_ATOMS}
\end{equation}
If $\varphi^{k,\text{id},\varpi}$ is strategyproof and duple, then 
\begin{equation}
	{\text{dup}}^{k,\varpi} = \sum_{\pi} \left(\frac{1}{n!}\right) \varphi^{k,\pi,\varpi}
\end{equation}
is strategyproof and duple as well. 
Similarly, if $\varphi^{k,\pi,\text{id}}$ is strategyproof and unilateral, then 
\begin{equation}
	{\text{uni}}^{k,\pi} = \sum_{\varpi} \left(\frac{1}{m!}\right) \varphi^{k,\pi,\varpi}
\end{equation}
is strategyproof and unilateral as well. 
With this we can rewrite (\ref{EQ:DECOMPOSITION_INTO_ATOMS}) as 
\begin{equation}
	\varphi 
		= \sum_{k=1}^{K^{\text{uni}}} \sum_{\pi} \frac{\beta_k}{n!}{\text{uni}}^{k,\pi}
		+ \sum_{k=K^{\text{uni}} + 1}^{K^{\text{uni}} + K^{\text{dup}}} \sum_{\varpi} \frac{\beta_k}{m!}{\text{dup}}^{k,\varpi},
\end{equation}
which is precisely the symmetric decomposition (\ref{EQ:SYMMETRIC_DECOMPOSITION}). 
\end{proof}
The symmetric decomposition (\ref{EQ:SYMMETRIC_DECOMPOSITION}) is a consequence of \citeauthor{Gibbard1977ManipulationOfVotingWithChance}'s strong characterization and the fact that for any anonymous, neutral mechanism we have $\varphi = \varphi^{\pi,\varpi}$. 
It is \emph{symmetric} in the sense that for any component ${\text{uni}}^k$ (or ${\text{dup}}^k$) that occur with coefficient $\beta_k$, the corresponding component ${\text{uni}}^{k,\pi}$ (or ${\text{dup}}^{k,\varpi}$) occur with the same coefficient. 
(\ref{EQ:SYMMETRIC_DECOMPOSITION}) decomposes $\varphi$ into two parts: 
a neutral part on the left, that gets ``anonymized'' by randomization, 
and an anonymous part on the right that gets ``neutralized'' by randomization. 
\subsection{Connection to Gibbard's Strong Characterization of Strategyproof Random Mechanisms}
\label{APP:ANON_NEUTR_SYMM_DECOMP:GIBBARD_CONNECTION}
In Section \ref{SEC:COMPUTE:ANALYTIC_EXAMPLES} we have studied the Pareto frontier for two different desiderata, namely Plurality and Veto scoring. 
Juxtaposing the two examples shows an interesting connection to earlier work that characterized strategyproof random mechanisms: 
\citet{Gibbard1977ManipulationOfVotingWithChance} showed that in a domain with three or more alternatives and strict preferences, any mechanism that is strategyproof must be a probability mixture (i.e., a hybrid with an arbitrary number of components) of strategyproof \emph{unilateral} mechanisms (i.e., whose random outcomes depend only on the preference order of a single agent) and strategyproof \emph{duple} mechanisms (i.e., which assign positive probability to at most two alternatives). 

Our examples show that the optimal strategyproof mechanisms for the particular problems are exactly the two extremes of this representation: 
when the desideratum is based on Plurality scoring, optimal strategyproof mechanism arise by randomizing over unilateral mechanisms only, namely dictatorships.  
Conversely, when the desideratum is based on Veto scoring, optimal strategyproof mechanisms arise by randomizing over duple mechanisms only, namely the majority vote between two alternatives. 
Thus, our two examples teach us in what sense unilateral and duple mechanisms can be understood to reside on ``opposite ends'' of the spectrum of strategyproof mechanisms. 
\section{The Algorithm \FindBounds}
\label{APP:ALGORITHM}
Knowing the structure of the Pareto frontier from Theorem \ref{THM:FRONTIER_STRUCTURE}, we can take a more refined approach to the problem of identifying the Pareto frontier (than applying \FindOpt\ to various manipulability bounds). 
The algorithm \FindBounds\ (Algorithm \ref{ALG:FIND_SB}) applies \FindOpt\ sequentially to determine the signatures of optimal mechanisms at different manipulability bounds. 
It uses the information obtained in each step to interpolate the most promising candidates for supporting manipulability bounds. 

\begin{algorithm}[t]
\caption{$\FindBounds$}
\textbf{Input}: bound $\underline{\varepsilon} > 0$ \\
\textbf{Variables}:
	set of supporting manipulability bounds $\mathtt{supp.bnds}$,
	stacks of \emph{unverified}, \emph{verified}, and \emph{outer} segments $\mathtt{segments.u}$, $\mathtt{segments.v}$, $\mathtt{segments.o}$ \\
\Begin{
	$\mathtt{supp.bnds} \leftarrow \left\{ 0 \right\}$  \\
	$\mathtt{segments.u} \leftarrow \left\{ \left(\textit{signature}(\underline{\varepsilon}),\textit{signature}(1)\right) \right\}$  \\
	$\mathtt{segments.v} \leftarrow \emptyset$  \\
	$\mathtt{segments.o} \leftarrow \left\{ \left(\textit{signature}(0),\textit{signature}(\underline{\varepsilon})\right),\left(\textit{signature}(1),\textit{signature}(2)\right) \right\}$  \\
	
	\While{$\mathtt{segments.u} \neq \emptyset$}{
	
		$(\mathtt{P}^-,\mathtt{P}^+) \leftarrow \textit{pop}\left(\mathtt{segments.u}\right)$ \\
		
		$(\mathtt{P}^{--},\mathtt{P}^-), (\mathtt{P}^{+},\mathtt{P}^{++}) \in \mathtt{segments.v} \cup \mathtt{segments.u} \cup \mathtt{segments.o}$ \\
		
		$\mathtt{e} \leftarrow \left( \textit{affine.hull}\left(\{\mathtt{P}^{--},\mathtt{P}^-\}\right) \cap \textit{affine.hull}\left(\{\mathtt{P}^{--},\mathtt{P}^-\}\right) \right)_{\varepsilon} $ \\
		
		$\mathtt{P} \leftarrow \textit{signature}(\mathtt{e}) $ \\
		
		\If{$\mathtt{P} \in \left( \textit{affine.hull}\left(\{\mathtt{P}^{--},\mathtt{P}^-\}\right) \cap \textit{affine.hull}\left(\{\mathtt{P}^{+},\mathtt{P}^{++}\}\right) \right)$}
			{
			$\mathtt{supp.bnds} \leftarrow \mathtt{supp.bnds} \cup \left\{ \mathtt{P} \right\}$ \\
			
			$\mathtt{segments.v} \leftarrow \mathtt{segments.v} \cup \left\{ \left(\mathtt{P}^{--},\mathtt{P}^-\right), \left(\mathtt{P}^{+},\mathtt{P}^{++}\right), \left( \mathtt{P}^-,\mathtt{P} \right), \left( \mathtt{P},\mathtt{P}^+ \right) \right\}$ \\
			}
			
		\ElseIf{$\mathtt{P} \in \textit{affine.hull}\left(\mathtt{P}^-,\mathtt{P}^+ \right)$}
			{
			$\mathtt{segments.v} \leftarrow \mathtt{segments.v} \cup \left\{ \left( \mathtt{P}^-,\mathtt{P}^+ \right) \right\}$ \\
			}
			
		\Else
			{
			$\mathtt{segments.u} \leftarrow \mathtt{segments.u} \cup \left\{ \left( \mathtt{P}^-,\mathtt{P} \right), \left( \mathtt{P},\mathtt{P}^+ \right) \right\}$ \\
			}
	}
	\Return $\mathtt{supp.bnds}$
}
\label{ALG:FIND_SB}
\end{algorithm}
\begin{algorithm}[t]
\caption{\FindLower}
\textbf{Variables}: signatures $\mathtt{sign^0},\mathtt{sign},\mathtt{sign^+}$, bound $\underline{\varepsilon}$ \\
\Begin{

	$\underline{\varepsilon} \leftarrow 1/2$ \\
	$\mathtt{sign^0} \leftarrow \textit{signature}(0)$, $\mathtt{sign^+} \leftarrow \textit{signature}(1)$, $\mathtt{sign} \leftarrow \textit{signature}(\underline{\varepsilon})$ \\
	
	\While{$\mathtt{sign} \notin \textit{affine.hull}(\mathtt{sign^0},\mathtt{sign^+})$}{
		$\mathtt{sign^+} \leftarrow \mathtt{sign}$, $\underline{\varepsilon} \leftarrow \underline{\varepsilon}/2$, $\mathtt{sign} \leftarrow \textit{signature}(\underline{\varepsilon})$\\
	}
	\Return $\underline{\varepsilon}$
}
\label{ALG:FIND_DELTA}
\end{algorithm}

Before we state our formal result about the correctness, completeness, and runtime of \FindBounds, we provide an outline of how the algorithm works. 
Observe that by Theorem \ref{THM:FRONTIER_STRUCTURE}, the problem of computing all supporting manipulability bounds is equivalent to identifying the path of the monotonic, convex, piecewise linear function $\varepsilon\mapsto\deficit(\varepsilon)$. 
\FindBounds\ keeps an inventory of ``known signatures'' and ``verified segments'' on this path. It uses these to interpolate and verify new segments. 
\begin{description}
	\setlength{\itemsep}{.5em}
	\item[Interpolation:] Suppose that we know four points $s_0=(\varepsilon_0,\deficit_0),s_1=(\varepsilon_1,\deficit_1),s_2=(\varepsilon_2,\deficit_2),s_3=(\varepsilon_3,\deficit_3)$ with $\varepsilon_0 < \varepsilon_1 < \varepsilon_2 < \varepsilon_3$ on the path. 
		If the two segments $\overline{s_0\ s_1}$ and $\overline{s_2\ s_3}$ are linearly independent, their affine hulls have a unique point of intersection $\varepsilon'$ over the interval $[\varepsilon_1,\varepsilon_2]$. 
		This point of intersection is the candidate for a new supporting manipulability bound that \FindBounds\ considers. 
		The left plot in Figure \ref{FIG:INTERPOLATION} illustrates the geometry of this step. 
	\item[Verification:] Once we have identified the candidate $\varepsilon'$, we use \FindOpt\ to compute the deficit $\deficit' = \deficit(\varepsilon')$.
		The signature $(\varepsilon',\deficit')$ is either on or below the straight line connecting the signatures $(\varepsilon_1,\deficit_2)$ and $(\varepsilon_2,\deficit_2)$. 
		In the first case, if all three signatures lie on a single straight line segment, then we can infer that this line segment must be part of the path of $\varepsilon\mapsto\deficit(\varepsilon)$. 
		We can mark this segment as ``verified,'' because there are no supporting manipulability bounds in the open interval $(\varepsilon_1,\varepsilon_2)$ (but possibly at its limits). 
		In the second case, if $(\varepsilon',\deficit')$ lies strictly below the line segment, then there must exist at least one supporting manipulability bound in the open interval $(\varepsilon_1,\varepsilon_2)$ at which the path has a real ``kink.''  
		The right plot in Figure \ref{FIG:INTERPOLATION} illustrates the geometry of this step. 
		The signature $(\varepsilon',\deficit')$ must lie somewhere on the gray vertical line at $\varepsilon'$.
\end{description}
In this way \FindBounds\ repeatedly identifies and verifies line segments on the path of $\varepsilon\mapsto\deficit(\varepsilon)$. 
If no additional supporting manipulability bounds exist on a segment, then this segment is verified. 
Otherwise, one more point $(\varepsilon',\deficit')$ is added to the collection of known signatures, and the algorithm continues to interpolate.
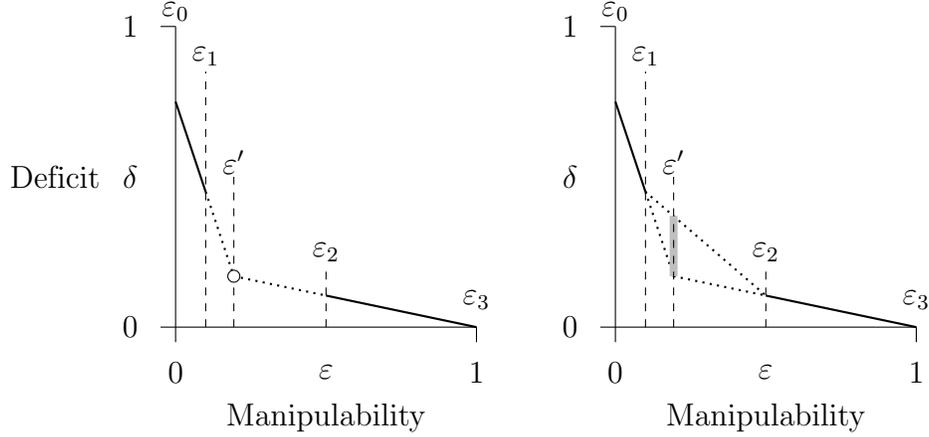
\begin{figure}[t]
\centering
\begin{tikzpicture}[scale=4]
\draw[-] (-0.05,0) -- (1.0,0) coordinate (x axis);
\draw[-] (0,-0.05) -- (0,1.0) coordinate (y axis);
\draw[-] (1,-0.05) -- (1,0);
\draw[-] (-0.05,1) -- (0,1);
\draw (-0.15,1) node {1};
\draw (1.0, -0.15) node {1};
\draw (0, -0.15) node {0};
\draw (-0.15,0) node {0};

\draw (0.5,-0.3) node { Manipulability};
\draw (0.5,-0.15) node { $\varepsilon$};
\draw (-.4,0.5) node {Deficit};
\draw (-.15,0.5) node {$\deficit$};
\draw[-,dashed] (.19354838709,.0) -- (.19354838709,.5);
\draw (6/31,0.55) node {$\varepsilon'$};
\draw[line width=.8pt] (0,.75) -- (.1,.45);
\draw[line width=.8pt] (.5,.105) -- (1,0);
\draw[line width=.8pt,dotted] (.1,.45) -- (.19354838709,0.1693548387) -- (.5,.105);
\draw[-,dashed] (.1,.0) -- (.1,.85);
\draw (.1,0.9) node {$\varepsilon_1$};
\draw[-,dashed] (.5,.0) -- (.5,.2);
\draw (.5,0.25) node {$\varepsilon_2$};
\draw (0.00,1.05) node {$\varepsilon_0$};
\draw (1.0,0.09) node {$\varepsilon_3$};
\draw [fill=White](0.19354838709,0.1693548387) circle [radius=0.02];
\end{tikzpicture}
\hspace{1em}
\begin{tikzpicture}[scale=4]
\draw[-] (-0.05,0) -- (1.0,0) coordinate (x axis);
\draw[-] (0,-0.05) -- (0,1.0) coordinate (y axis);
\draw[-] (1,-0.05) -- (1,0);
\draw[-] (-0.05,1) -- (0,1);
\draw (-0.15,1) node {1};
\draw (1.0, -0.15) node {1};
\draw (0, -0.15) node {0};
\draw (-0.15,0) node {0};

\draw (0.5,-0.3) node { Manipulability};
\draw (0.5,-0.15) node { $\varepsilon$};
\draw (-.15,0.5) node {$\deficit$};

\draw[line width=.8pt] (0,.75) -- (.1,.45);
\draw[line width=.8pt] (.5,.105) -- (1,0);
\draw[line width=.8pt,dotted] (.1,.45) -- (0.19354838709,21/124) -- (.5,.105) -- (.1,.45) ;
\draw[line width=3pt,MedGray] (0.19354838709,0.16935483871) -- (0.19354838709,.37);
\draw[-,dashed] (0.19354838709,.0) -- (0.19354838709,.5);
\draw (0.19354838709,0.55) node {$\varepsilon'$};
\draw[-,dashed] (.1,.0) -- (.1,.85);
\draw (.1,0.9) node {$\varepsilon_1$};
\draw[-,dashed] (.5,.0) -- (.5,.2);
\draw (.5,0.25) node {$\varepsilon_2$};
\draw (0.00,1.05) node {$\varepsilon_0$};
\draw (1.0,0.09) node {$\varepsilon_3$};
\end{tikzpicture}

\caption{Illustration of interpolation (left) and verification step (right).}
\label{FIG:INTERPOLATION}
\end{figure}

Next, we prove Proposition \ref{PROP:RUNTIME}.
\begin{proof}[Proof of Proposition \ref{PROP:RUNTIME}]
\emph{\PropComputableStatement\footnote{For technical reasons, \FindBounds\ assumes knowledge of some value $\underline{\varepsilon} \in (0,\varepsilon_1]$, i.e., some strictly positive lower bound for the smallest non-trivial supporting manipulability bound. 
This is needed to initialize the interpolation process. 
Algorithm \FindLower\ (Algorithm \ref{ALG:FIND_DELTA} identifies a suitable value of $\underline{\varepsilon}$.}}

\medskip
\PropComputableProof%
\end{proof}

The interpolation process in \FindBounds\ is set up in such a way that at most 4 executions of \FindOpt\ are required for finding each supporting manipulability bound. 
Thus, assuming that representatives of $\OPT(\varepsilon_k)$ must be determined by executing \FindOpt\ at each $\varepsilon_k$, \FindBounds\ is at most a constant factor $4$ slower than an algorithm that correctly ``guesses'' all supporting manipulability bounds and uses \FindOpt\ for these only.

\section{Alternative Domains and Desiderata} 
\label{APP:GENERAL}
With the exception of the examples in Sections \ref{SEC:COMPUTE:ANALYTIC_EXAMPLES}, we have formulated our results for the general domain with weak preferences. 
In the following Section \ref{APP:GENERAL:DOMAIN_RESTRICTION}, we explain how our findings continue to hold in many other interesting domains. 
Subsequently, in Section \ref{APP:GENERAL:VALUE}, we discuss the expressiveness of desideratum functions for the purpose of encoding desiderata. 
\subsection{Domain Restrictions}
\label{APP:GENERAL:DOMAIN_RESTRICTION}
We have proven our results (Theorems \ref{THM:ASP_EQUIV_LINEAR_CONSTRAINTS} through \ref{THM:FRONTIER_STRUCTURE} and Proposition \ref{PROP:OPT_EQUIVALENT_LP}) for the unrestricted preference domain with indifferences. 
However, all these results continue to hold when the domain is restricted by only admitting a subset of the possible preference profiles. 
Formally, let $\bm{\mathcal{P}} \subseteq \mathcal{P}^N$ be \emph{any} subset of the set of preference profiles; then all our results still hold, even if we admit only preference profiles from $\bm{\mathcal{P}}$. 
This extension is possible because none of our proofs make use of the richness of the full domain. 

Important domains that can be expressed as domain restrictions $\bm{\mathcal{P}} \subset \mathcal{P}^N$ are: 
\begin{itemize}
	\setlength{\itemsep}{0pt}
	\item The \emph{full domain of strict ordinal preferences}, which is the basis for the classical impossibility results pertaining to strategyproofness of deterministic mechanisms \citep{Gibbard1973ManipulationOfVotingSchemes,Satterthwaite1975StrategyproofnessAndArrowsConditions} and random mechanisms \citep{Gibbard1977ManipulationOfVotingWithChance}. 
	\item The \emph{assignment domain}, where each agent must be assigned to a single, indivisible object (e.g., a seat at a public school) \citep{AbdulSonmez2003SchoolChoiceAMechanismDesignApproach}.
	\item The \emph{two-sided matching domain}, where agents of two different types must be matched to each other to form pairs (e.g., men and women, doctors and hospitals, mentors and mentees), and variations of this domain, such as \emph{matching with couples} \citep{Roth1984EvolutionOfMedicalLaborMarket}.
	\item The \emph{combinatorial assignment domain}, where agents receive bundles of indivisible objects (e.g., schedules with multiple courses) \citep{Budish2011TheCombinatorialAssignmentProblem}
\end{itemize}
Of course, the fact that our results continue to hold in these domains does not mean that the actual Pareto frontiers will be equal across the different domains. 
On the contrary, one would expect optimal mechanisms to be highly adapted to the particular domain in which they are designed to operate. 
\begin{remark}[Full Utility Assumption]
\label{REM:FULL_UTILITIY}
One essential ingredient to our results is the fact that $\varepsilon$-approximate strategyproofness can be equivalently expressed via a finite set of linear constraints. 
Theorem \ref{THM:ASP_EQUIV_LINEAR_CONSTRAINTS}, which yields this equivalence, relies on the assumption of \emph{full utility}: 
given a preference order $P_i$, the agent $i$ can have vNM utility function $u_i \in U_{P_i}$ that represents this preference order (and is bounded between 0 and 1). 
A domain restriction that excludes \emph{subsets of utility functions} would violate this condition. 
For example, suppose that we imposed the additional restriction that the agent's utility are \emph{normalized} (i.e., $\left\|u_i\right\|_2 = \sqrt{\sum_{j\in M} u_i(j)^2} = 1$). 
This restriction limits the gain that an agent can possibly obtain from misreporting in a non-linear way. 
The linear constraints from Theorem \ref{THM:ASP_EQUIV_LINEAR_CONSTRAINTS} would be sufficient for $\varepsilon$-approximate strategyproofness. 
However, they would no longer be necessary, and we would lose \emph{equivalent}. 
\end{remark}
\subsection{Encoding Desiderata via Desideratum Functions}
\label{APP:GENERAL:VALUE}
desideratum functions are a very versatile means to express different design goals that the mechanism designer may have. 
We have already seen how they can be used to reflect desiderata based on Plurality scoring (Example \ref{EX:DEFICIT_PLURALITY}), Condorcet consistency (Examples \ref{EX:DEFICIT_CONDORCET}), or Veto scoring (Section \ref{SEC:COMPUTE:ANALYTIC_EXAMPLES}). 
We now give additional examples that illustrate how desideratum functions can be used to encode common desiderata. 
\paragraph{Binary Desiderata}
Many desirable properties of mechanisms are simply properties of the alternatives they select at any given preference profile. 
For instance, \emph{Pareto optimality} requires the mechanism to choose alternatives that are not Pareto dominated by any other alternative at the respective preference profile.
In general, let $\Theta$ be a desirable property that an alternative can have at some preference profile (e.g., Pareto optimality). 
Desideratum functions can capture $\Theta$ by setting
\begin{equation}
	\desifct(j,\bm P) = \left\{%
		\begin{array}{ll}
			1, & \text{ if alternative }j\text{ has property }\Theta\text{ at }\bm P, \\
			0, & \text{ else.}
		\end{array}
	\right.
\end{equation}
The $\desifct$-deficit of a mechanism at a given preference profile,
\begin{equation}
	\deficit_{\desifct}(\varphi(\bm P), \bm P) = \max_{j \in M} \desifct(j,\bm P) - \sum_{j\in M} \varphi_j(\bm P) \cdot \desifct(j,\bm P),
\end{equation}
has a straightforward interpretation: 
it is simply the ``probability that an alternative selected by $\varphi$ at $\bm P$ violates property $\Theta$ at $\bm P$.'' 
Consequently, if $\deficit$ is the worst-case deficit, then the probability that $\varphi$ selects an alternative that violates $\Theta$ is at most $\deficit(\varphi)$ across all preference profiles. 
Alternatively, if $\deficit^{\mathds{P}}$ is the ex-ante deficit, then $\deficit^{\mathds{P}}(\varphi)$ is the ex-ante probability (i.e., before the preferences have been reported) that $\varphi$ selects an alternative that violates $\Theta$, given the prior distribution $\mathds{P}$ over agents' preference profiles. 

Common binary desiderata include 
\begin{itemize}
	\setlength{\itemsep}{0pt}
	\item \emph{unanimity}:  if all agents agree on a first choice, this alternative should be selected, 
	\item \emph{Condorcet consistency}: if some alternatives weakly dominate any other alternatives in a pairwise majority comparison, one of these alternatives should be selected, 
	\item \emph{Pareto optimality}: only undominated alternatives should be selected,
	\item \emph{egalitarian fairness}: let $R(j) = \max_{i \in N} \text{rank}_{P_i}(j)$ be the rank of alternative $j$ in the preference order of the agent who likes it least, then only alternatives with minimal $R(j)$ should be selected.
\end{itemize}
While not all of these desiderata are in conflict with strategyproofness individually, their combinations may be. 
It is possible to encode combinations of binary properties: 
given two desideratum functions $\desifct_1$ and $\desifct_2$, the minimum $\desifct = \min(\desifct_1, \desifct_2)$ is again a desideratum function, and it expresses the desideratum that both properties (encoded by $\desifct_1$ and $\desifct_2$) should hold simultaneously; 
and the maximum $\desifct = \max(\desifct_1, \desifct_2)$ expresses that at least one of the properties should hold. 

Even in the absence of a clear-cut objective, a desideratum may be \emph{implicitly} specified via a target mechanism. 
A \emph{target mechanism} $\phi$ is a mechanism that one would like to use if there were no concerns about manipulability \citep{BirrellPass2011ApproximatelyStrategyproofVoting}. 
The desideratum function induced by $\phi$ takes value 1 at the alternatives that are selected by $\phi$ at the respective preference profiles, or equivalently, $\desifct(j,\bm P) = \phi_j(\bm P)$. 
An analogous construction is possible for a \emph{target correspondence} $\Phi$, which is a mechanism that selects a \emph{set of alternatives}. 
To reflect a target correspondence, the desideratum function can be chosen as $\desifct(j,\bm P) = 1$ if $ j \in \Phi(\bm P)$ and $\desifct(j,\bm P) = 0$ otherwise, where we denote by $\Phi(\bm P)$ the set of alternatives selected by the correspondence $\Phi$ at $\bm P$. 

In the assignment domain, our notion of Pareto optimality turns into \emph{ex-post efficiency}, which is an important baseline requirement for many assignment mechanisms. 
\citet{Featherstone2011RankBasedRefinementWP} introduced \emph{$v$-rank efficiency}, which is a refinement of ex-post efficiency (and implies ordinal efficiency). 
He showed that for any \emph{rank valuation $v$}, the set of $v$-rank efficient assignments is equal to the convex hull of the set of deterministic $v$-rank efficient assignments. 
Thus, $v$-rank efficiency is representable by a desideratum function by setting $\desifct(a,\bm P) = 1$ for all alternatives $a$ that correspond to a $v$-rank efficient deterministic assignment. 
In two-sided matching, it is often important to select \emph{stable matchings} to prevent blocking pairs from matching outside the mechanism and thereby causing unraveling of the market. 
Since stability is a property of deterministic matchings, we can encode this desideratum by a desideratum function. 

With any such desideratum function in hand, we can then define a notion of deficit and use the results in this paper to understand the respective Pareto frontier. 
\begin{remark}[Linearity Assumption]
\label{REM:LINEARITY}
One example of a desideratum that cannot be represented by a desideratum function is the intermediate efficiency requirement of \emph{ordinal efficiency} for assignment mechanisms. 
To see this, note that ordinal efficiency coincides with ex-post efficiency on deterministic assignments. 
However, the convex combination of two ordinally efficient assignments is not necessarily again ordinally efficient.
In general, to be able to encode a desideratum via a desideratum function, the societal value of any random outcome $x$ must depend in a linear fashion on the societal value of the individual alternatives (see Definition \ref{DEF:EXPECTED_VALUE}). 
Ordinal efficiency violates this requirement. 
\end{remark}
\paragraph{Quantifiable Desiderata}
In some situations, there exists a quantified notion of the desirability of different alternatives at different preference profiles. 
Straightforward examples are desiderata based on positional scoring. 
Formally, a \emph{scoring function} $v : \{1,\ldots,m\} \rightarrow \mathds{R}$ is a mapping that associates a \emph{score} $v(r)$ with selecting some agent's $r$th choice alternative. 
Typical examples include \emph{Borda} (i.e., $v(r) = m-r$ for all $r \in \{1,\ldots,m\}$), \emph{Plurality} (i.e., $v(1) = 1$ and $v(r) = 0$ for $r\neq 1$), and \emph{Veto} (i.e., $v(r) = 0$ for all $r\neq m$ and $v(m) = -1$).
The score of an alternative at a preference profile is then determined by summing the scores of this alternative under the different preference orders of the agents. 
Formally,  
\begin{equation}
	\text{sc}_v(j,\bm P) = \sum_{i \in N} v(\text{rank}_{P_i}(j)),
\end{equation} 
where $\text{rank}_{P_i}(j)$ is the rank of alternative $j$ in preference order $P_i$. 
For any scoring-based desideratum, the respective desideratum function arises by scaling $\text{sc}_v$ as follows: 
\begin{equation*}
	\desifct(j,\bm P) = 
	\left\{
		\begin{array}{ll}
			\frac{\text{sc}_v(j,\bm P) - \min_{j' \in M} \text{sc}_v(j',\bm P)}{\max_{j' \in M} \text{sc}_v(j',\bm P) - \min_{j' \in M} \text{sc}_v(j',\bm P)}, 
			& \text{if }\max_{j' \in M} \text{sc}_v(j',\bm P) > \min_{j' \in M} \text{sc}_v(j',\bm P)\\ 
			0, & \text{else}. 
		\end{array}
	\right.
\end{equation*}
In the assignment domain, the rank valuation $v$ is used to determine the $v$-rank value of any assignment. 
This induces a scoring function on the corresponding alternatives. 
Consequently, instead of viewing $v$-rank efficiency as a binary desideratum, the quantified desideratum could be to maximize the $v$-rank value. 
This natural measure of societal value was put forward by \citet{Featherstone2011RankBasedRefinementWP}. 

In quasi-linear domains, a common objective is to maximize the sum of the agents' cardinal utilities. 
For situations where this aggregate value is interesting, \citet{ProcacciaRosenschein2006DistortionOfCardinalPrefsInVoting} proposed to maximize a \emph{conservative estimate} of this value instead. 
This (slightly unorthodox) desideratum is also representable by a desideratum function. 
Even though the vNM utilities in our model do not have to be comparable across agents, one may choose to maximize the sum of these utilities. 
However, since agents only report ordinal preferences, we cannot maximize their utilities directly. 
Instead, we can maximize a \emph{conservative estimate} for this sum. 
Formally, the respective desideratum would be to maximize 
\begin{equation}
	\desifct(j,\bm P) = \inf_{u_1 \in U_{P_1},\ldots,u_n \in U_{P_n}} \sum_{i \in N} u_i(j).
\end{equation}
Recall that we bounded utilities between 0 and 1, and let us assume that $\max_{j \in M} u_i(j) = 1$. 
Then the aggregate utility from any outcome $x \in \Delta(M)$ is lowest if agents have utility (close to) 0 for all except their first choices. 
Thus, maximizing the conservative lower bound for aggregate utility in our model corresponds to selecting an alternative that is the first choice of as many agents as possible. 
This is equivalent to the quantitative measure of societal value induced by Plurality scoring. 
\section{Omitted Proofs} 
\label{APP:PROOFS}
\subsection{Proof of Theorem \ref{THM:ASP_EQUIV_LINEAR_CONSTRAINTS}}
\label{APP:PROOFS:ASP_EQUIV_LINEAR_CONSTRAINTS}
\begin{proof}[Proof of Theorem \ref{THM:ASP_EQUIV_LINEAR_CONSTRAINTS}]
\emph{\ThmASPEquivLinConStatement}

\medskip
\ThmASPEquivLinConProof%
\end{proof}
\subsection{Proof of Proposition \ref{PROP:OPTIMAL_NON_EMPTY}}
\label{APP:PROOFS:OPTIMAL_NON_EMPTY}
\begin{proof}[Proof of Proposition \ref{PROP:OPTIMAL_NON_EMPTY}]
\emph{\PropOptimalMechanismsNotEmptyStatement}

\medskip
\PropOptimalMechanismsNotEmptyProof
\end{proof}
\subsection{Proof of Theorem \ref{THM:HYBRID_GUARANTEES}}
\label{APP:PROOFS:HYBRID_GUARANTEES}
\begin{proof}[Proof of Theorem \ref{THM:HYBRID_GUARANTEES}]
\emph{\ThmHybridGuaranteesStatement}

\medskip
\ThmHybridGuaranteesProof
\end{proof}
\subsection{Proof of Corollary \ref{COR:PF_MONOTONIC_CONVEX}}
\label{APP:PROOFS:PF_MONOTONIC_CONVEX}
\begin{proof}[Proof of Corollary \ref{COR:PF_MONOTONIC_CONVEX}]
\emph{\CorMonotonicConvexStatement}
\medskip

\CorMonotonicConvexProof
\end{proof}
\subsection{Proof of Theorem \ref{THM:FRONTIER_STRUCTURE}}
\label{APP:PROOFS:FRONTIER_STRUCTURE}
\begin{proof}[Proof of Theorem \ref{THM:FRONTIER_STRUCTURE}]
\emph{\ThmFrontierStructureStatement}

\medskip
\ThmFrontierStructureProof%
\end{proof}
\subsection{Proof of Proposition \ref{PROP:PLURALITY_PARETO_FRONTIER}}
\label{APP:PROOFS:PLURALITY_PARETO_FRONTIER}
\begin{proof}[Proof of Proposition \ref{PROP:PLURALITY_PARETO_FRONTIER}]
\emph{\PropPluralityParetoFrontierStatement}

\medskip
\PropPluralityParetoFrontierProof
\end{proof}
\subsection{Proof of Proposition \ref{PROP:VETO_PARETO_FRONTIER}}
\label{APP:PROOFS:VETO_PARETO_FRONTIER}
\begin{proof}[Proof of Proposition \ref{PROP:VETO_PARETO_FRONTIER}]
\emph{\PropVetoParetoFrontierStatement}

\medskip
\PropVetoParetoFrontierProof
\end{proof}
\end{document}